\setlist{nolistsep}
\tikzset{sArrow/.style={->,>=stealth,thick}}
\tikzset{arrowLabel/.style={auto}}
\tikzset{largeResource/.style={draw,thick,minimum width=1.618*2cm,minimum height=2cm}}
\tikzset{lrnode/.style={minimum width=1.36*2cm,minimum height=.2cm}}
\tikzset{llrnode/.style={minimum width=.2cm,minimum height=1.5cm}}
\tikzset{tlrnode/.style={minimum width=.2cm,minimum height=.5cm}}
\tikzset{thinResource/.style={draw,thick,minimum width=1.618*2cm,minimum height=1cm}}
\tikzset{filter/.style={draw,thick,minimum width=1.618cm,minimum height=1cm}}
\tikzset{lineFilter/.style={draw,ultra thick,minimum width=.8cm,inner sep=0}}
\tikzset{protocol/.style={draw,thick,minimum width=1.545cm,minimum height=2.5cm}}
\tikzset{pnode/.style={minimum width=1cm,minimum height=.5cm}}
\tikzset{protocolLong/.style={draw,thick,minimum height=1cm,minimum width=2.8cm}}
\tikzset{simulator/.style={draw,thick,minimum width=1.618*2cm,minimum height=1.7cm}}
\tikzset{snode/.style={minimum width=1.1cm,minimum height=1.2cm}}
\tikzset{largesnode/.style={minimum width=2.7cm,minimum height=.7cm}}
\tikzset{innersim/.style={minimum width=.83cm,minimum height=1.3cm}}
\tikzset{innersnode/.style={minimum width=.4cm,minimum height=.3cm}}
\tikzset{textbox/.style={draw,text width=2.6cm,text centered,minimum height=.6cm}}
\numberwithin{figure}{section}
\newcommand{\auth}{\text{auth}}
\newcommand{\qkd}{\text{qkd}}
\newcommand{\otp}{\text{otp}}
\newcommand{\corr}{\text{cor}}
\newcommand{\secr}{\text{sec}}
\newcommand{\pabort}{p_{\text{abort}}}
\newcommand{\distinguishOp}{p_{\operatorname{distinguish}}}
\newcommand{\sdistinguish}[1]{\ensuremath{\distinguishOp(#1)}}
\newcommand{\ldistinguish}[1]{\ensuremath{\distinguishOp\left(#1\right)}}
\newcommand{\distinguish}[1]{\if@display\ldistinguish{#1}\else\sdistinguish{#1}\fi}
\newcommand{\ddistinguishOp}{p^{\fD}_{\operatorname{distinguish}}}
\newcommand{\sddistinguish}[1]{\ensuremath{\ddistinguishOp(#1)}}
\newcommand{\lddistinguish}[1]{\ensuremath{\ddistinguishOp\left(#1\right)}}
\newcommand{\ddistinguish}[1]{\if@display\lddistinguish{#1}\else\sddistinguish{#1}\fi}
\title{Cryptographic security of quantum key distribution}
\author{Christopher Portmann\email{chportma@phys.ethz.ch}\ }
\author{Renato Renner\email{renner@phys.ethz.ch}}
\affil{Institute for Theoretical Physics, ETH Zurich, 8093 Zurich, Switzerland.}
\date{\today}
\begin{document}

\pdfbookmark[1]{Title page}{titlepage}

\maketitle

\pagenumbering{roman}
\thispagestyle{empty}

\begin{abstract}
  This work is intended as an introduction to cryptographic security
  and a motivation for the widely used Quantum Key Distribution (QKD)
  security definition. We review the notion of security necessary for
  a protocol to be usable in a larger cryptographic context, i.e., for
  it to remain secure when composed with other secure protocols. We
  then derive the corresponding security criterion for QKD. We provide
  several examples of QKD composed in sequence and parallel with
  different cryptographic schemes to illustrate how the error of a
  composed protocol is the sum of the errors of the individual
  protocols. We also discuss the operational interpretations of the
  distance metric used to quantify these errors.
\end{abstract}



\clearpage
\phantomsection
\pdfbookmark[1]{\contentsname}{sec:toc}
\tableofcontents
\newpage
\phantomsection
\pdfbookmark[1]{\listfigurename}{sec:lof}
\listoffigures
\clearpage

\pagenumbering{arabic}

\section{Introduction}
\label{sec:intro}

\subsection{Background}
\label{sec:intro.background}

The first Quantum Key Distribution (QKD) protocols were proposed
independently by Bennett and Brassard~\cite{BB84} in 1984 \---
inspired by early work on quantum money by Wiesner~\cite{Wie83} \---
and by Ekert~\cite{Eke91} in 1991. The original papers discussed
security in the presence of an eavesdropper that could perform only
limited operations on the quantum channel. The first security proofs
that considered an unbounded adversary were given more than a decade
later~\cite{May96,BBBMR00,SP00,May01,BBBMR06}. Another decade after
the first such proof, K\"onig et al.~\cite{KRBM07} showed that the
security criterion used was insufficient: even though it guarantees
that an eavesdropper cannot guess the key, this only holds if the key
is never used. If part of the key is revealed to the eavesdropper \---
for example, by using it to encrypt a message known to her \--- the
rest becomes insecure. A new security criterion for QKD was
introduced, along with a new proof of security for
BB84~\cite{RK05,BHLMO05,Ren05}. It was argued that $\rho_{KE}$, the
joint state of the final key ($K$) and quantum information gathered by
an eavesdropper ($E$), must be close to an ideal key, $\tau_K$, that
is perfectly uniform and independent from the adversary's information
$\rho_E$: \begin{equation} \label{eq:d} (1-\pabort) D \left(
    \rho_{KE},\tau_K \tensor \rho_E \right) \leq \eps \
  ,\end{equation} where $\pabort$ is the probability that the protocol
aborts,\footnote{In \cite{Ren05}, \eqnref{eq:d} was introduced with a
  subnormalized state $\rho_{KE}$, with $\trace{\rho_{KE}}=1-\pabort$,
  instead of explicitly writing the factor $(1-\pabort)$. The two
  formulations are however mathematically equivalent.}
$D(\cdot,\cdot)$ is the trace distance\footnote{This metric is defined
  and discussed in detail in \appendixref{app:op}.} and $\eps \in
[0,1]$ is a (small) real number.\footnote{Another formulation of this
  security criterion, $(1-\pabort) \min_{\sigma_E} D \left(
    \rho_{KE},\tau_K \tensor \sigma_E \right) \leq \eps$, has also
  been proposed in the literature. We discuss this alternative
  in \appendixref{app:alternative}.}

The type of security flaw suffered by the early QKD security criteria
is well known in classical cryptography. It was addressed
independently by Pfitzmann and Waidner~\cite{PW00,PW01,BPW04,BPW07}
and Canetti~\cite{Can01,CDPW07,Can13}, who introduced general
frameworks to define cryptographic security, which they dubbed
\emph{reactive simulatability} and \emph{universal composability},
respectively. These frameworks were adapted to quantum cryptography by
Ben-Or and Mayers~\cite{BM04} and Unruh~\cite{Unr04,Unr10}, and the
security of QKD was discussed within these frameworks by Ben-Or et
al.~\cite{BHLMO05} and M\"uller-Quade and
Renner~\cite{MR09}. Recently, Maurer and Renner~\cite{MR11} introduced
a new cryptographic security framework, \emph{Abstract Cryptography}
(AC), which both simplifies and generalizes previous frameworks, and
applies equally to the classical and quantum settings.

The core idea of all these security frameworks is to prove that the
functionality constructed by the real protocol is indistinguishable
from the functionality of an ideal resource that fulfills in a perfect
way whatever task is expected of the cryptographic protocol \--- in
the case of QKD, this ideal resource provides the two players with a
perfect key, unknown to the adversary. If this ideal system is
indistinguishable from the real one, then one can be substituted for
the other in any context. Players who run a QKD protocol can thus
treat the resulting key as if it were perfect, which trivially implies
that it can be safely used and composed arbitrarily with other
(secure) protocols.

\subsection{Contributions}
\label{sec:intro.contributions}

Since the security criterion of \eqnref{eq:d} provides the
aforementioned compositional guarantees, it is widely used in the QKD
literature and generally introduced as the correct security definition
(see, e.g., the QKD review paper \cite{SBCDLP09}). A more detailed
explanation as to why this is the case is however usually omitted due
to the highly involved security frameworks. Even the technical
works~\cite{RK05,BHLMO05,Ren05,MR09} that introduced and discuss
\eqnref{eq:d} do not provide a self contained justification of this
security notion. The current paper aims to fill in this gap by
revisiting the security of QKD using the AC framework.

Our goals are twofold. Firstly, we provide an introduction to
cryptographic security. We do not discuss the AC framework in detail,
but explain the main ideas underlying cryptographic security and
illustrate protocol composition with many examples. Secondly, we use
this framework to show how \eqnref{eq:d} can be \emph{derived}. We
also provide in \appendixref{app:op} an extensive discussion of the
interpretation and operational meaning of the trace distance used in
\eqnref{eq:d}.

\subsection{Abstract cryptography}
\label{sec:intro.ac}

The traditional approach to defining security~\cite{PW00,PW01,Can01}
can be seen as bottom\-/up. One first defines (at a low level) a
computational model (e.g., a Turing machine). One then defines how the
machines communicate (e.g., by writing to and reading from shared
tapes) and some form of scheduling. Next, one can define notions of
complexity and efficiency. Finally, the security of a cryptosystem can
be defined.

Abstract cryptography (AC) on the other hand uses a top\-/down
approach. In order to state definitions and develop a theory, one
starts from the other end, the highest possible level of abstraction
\--- the composition of abstract systems \--- and proceeds downwards,
introducing in each new lower level only the minimal necessary
specializations. The (in)distinguishability of the real and ideal
systems is defined as a metric on abstract systems, which, at a lower
level, can be chosen to capture the distinguishing power of a
computationally bounded or unbounded environment. The abstract systems
are instantiated with, e.g., a synchronous or asynchronous network of
(abstract) machines. These machines can be instantiated with either
classical or quantum processes.

One may give the analogous example of group theory, which is used to
describe matrix multiplication. In the bottom\-/up approach, one would
start explaining how matrices are multiplied, and then based on this
find properties of the matrix multiplication. In contrast to this, the
top\-/down approach would correspond to first defining the (abstract)
multiplication group and prove theorems already on this level. The
matrix multiplication would then be introduced as a special case of
the multiplicative group. This simplifies greatly the framework by
avoiding unnecessary specificities from lower levels, and does not
hard code a computation or communication model (e.g., classical or
quantum, synchronous or asynchronous) in the security
framework. 

\subsection{Structure of this paper}
\label{sec:intro.structure}

In \secref{sec:ac} we start by introducing a simplified version of the
AC framework~\cite{MR11}, which is sufficient for the specific
adversarial structure relevant to QKD, namely honest Alice and Bob,
and dishonest Eve. In \secref{sec:qkd} we model the real and ideal
systems of a generic QKD protocol, and plug it in the AC security
framework, obtaining a security definition for QKD. In
\secref{sec:security} we then prove that this can be reduced to
\eqnref{eq:d}.\footnote{More precisely, the security definition of QKD
  is reduced to a combination of two criteria, secrecy (captured by
  \eqnref{eq:d}) and correctness.} In \secref{sec:ex} we illustrate
the composition of protocols in AC with examples of QKD composed in
various settings. We emphasize that this section does not prove that
the QKD security criterion is composable \--- the proof of this
follows from the generic proof that the AC framework is
composable~\cite{MR11} \--- but illustrates how the security of
composed protocols results from the security of individual protocols
and the triangle inequality.  Further examples can be found
in \appendixref{app:ex.auth}, where we model the security of
authentication and compose it with QKD, resulting in a key expansion
protocol. We also provide a substantial review of the trace distance
and its operational interpretations in \appendixref{app:op}. In
particular, we prove that it corresponds to the probability a
distinguisher has of correctly guessing whether it is interacting with
the real or ideal QKD system \--- the measure used in the AC framework
\--- and discuss how to interpet this. An overview of the other
appendices is given on \pref{app}.

\section{Cryptographic security}
\label{sec:ac}

A central element in modeling security is that of \emph{resources}
\--- resources used in protocols and resources constructed by
protocols. For example, a QKD protocol constructs a functionality
which shares a secret key between two players. This functionality is a
resource, which can be used by other protocols, e.g., to encrypt a
message. To construct this secret key resource, a QKD protocol
typically uses two other resources, an authentic classical
channel\footnote{An authentic channel guarantees that the message
  received comes from the legitimate sender, and has not been tampered
  with or generated by an adversary.} and an insecure quantum
channel. The authentic channel resource can in turn be constructed
from an insecure channel resource and a
password\footnoteremember{fn:password}{A short key $K$ with
  min\-/entropy $H_{\infty}(K) = \Omega(\log |\cK|)$ is sufficient for
  authentication~\cite{RW03}, where $\cK$ is the key alphabet \---
  i.e., having $H_{\infty}(K)$ linear in the key length $\log |\cK|$ is
  sufficient. We refer to such a weak key as a
  \emph{password}.}~\cite{RW03}. Composing the authentication protocol
with the QKD protocol results in a scheme which constructs a secret
key from a password and insecure channels. Part of the resulting
secret key can be used in further rounds of authentication and QKD to
produce even more secret key. This is illustrated in
\figref{fig:construction}. 

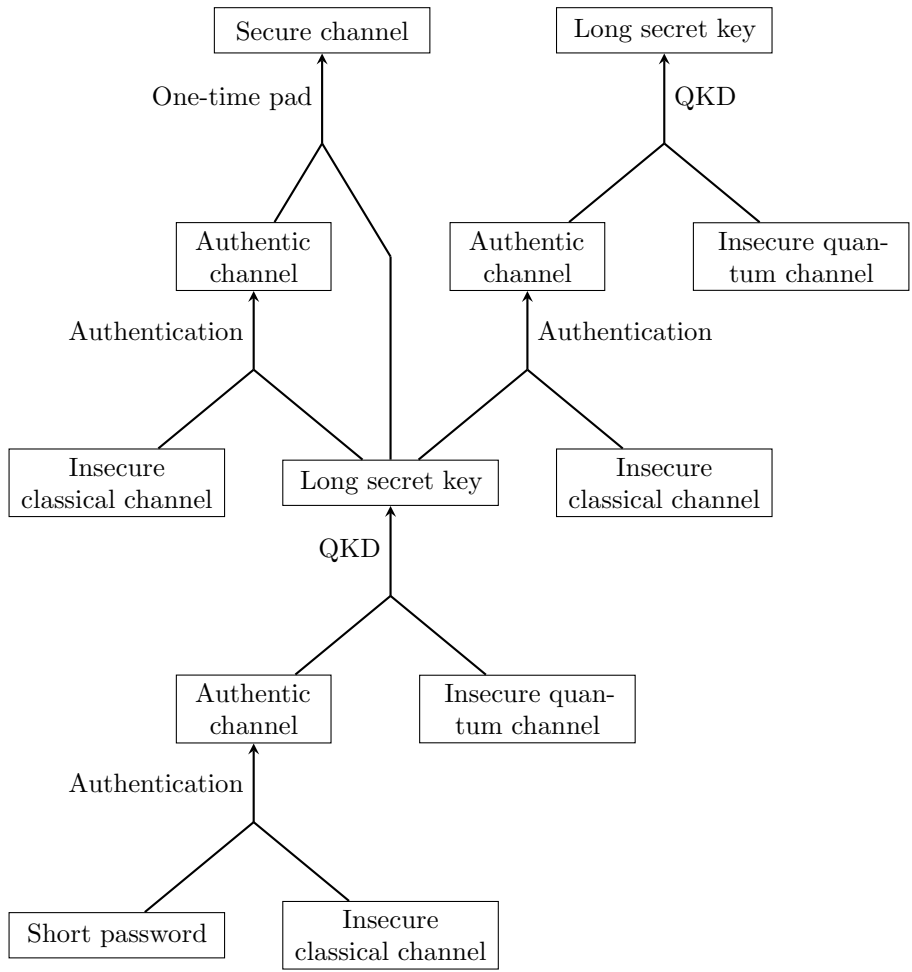
\begin{figure}[htbp]
\begin{centering}

\begin{tikzpicture}\small

\def\d{1.8}

\node[textbox] (a1) at (0*\d,0) {Short password};
\node[textbox] (a3) at (2*\d,0) {Insecure classical channel};
\node[draw,text width=1.8cm,text centered] (b2) at (1*\d,3) {Authentic channel};
\node (n2) at (1*\d,1.5) {};
\node[textbox] (b4) at (3*\d,3) {Insecure quantum channel};
\node[textbox] (c3) at (2*\d,6) {Long secret key};
\node (o3) at (2*\d,4.5) {};
\node[textbox] (c1) at (0*\d,6) {Insecure classical channel};
\node[textbox] (c5) at (4*\d,6) {Insecure classical channel};
\node[draw,text width=1.8cm,text centered] (d2) at (1*\d,9) {Authentic channel};
\node[draw,text width=1.8cm,text centered] (d4) at (3*\d,9) {Authentic channel};
\node (p2) at (1*\d,7.5) {};
\node (p4) at (3*\d,7.5) {};
\node[textbox] (e25) at (1.5*\d,12) {Secure channel};
\node (q3) at (2*\d,9) {};
\node (r25) at (1.5*\d,10.5) {};
\node[textbox] (d6) at (5*\d,9) {Insecure quantum channel};
\node[textbox] (e5) at (4*\d,12) {Long secret key};
\node (r5) at (4*\d,10.5) {};

\draw[thick] (a1) to (n2.center);
\draw[thick] (a3) to (n2.center);
\draw[sArrow] (n2.center) to node[auto] {Authentication} (b2);
\draw[thick] (b2) to (o3.center);
\draw[thick] (b4) to (o3.center);
\draw[sArrow] (o3.center) to node[auto] {QKD} (c3);
\draw[thick] (c1) to (p2.center);
\draw[thick] (c3) to (p2.center);
\draw[sArrow] (p2.center) to node[auto] {Authentication} (d2);
\draw[thick] (c3) to (p4.center);
\draw[thick] (c5) to (p4.center);
\draw[sArrow] (p4.center) to node[auto,swap] {Authentication} (d4);
\draw[thick] (c3) to (q3.center);
\draw[thick] (d2) to (r25.center);
\draw[thick] (q3.center) to (r25.center);
\draw[sArrow] (r25.center) to node[auto] {One-time pad} (e25);
\draw[thick] (d4) to (r5.center);
\draw[thick] (d6) to (r5.center);
\draw[sArrow] (r5.center) to node[auto,swap] {QKD} (e5);

\end{tikzpicture}

\end{centering}
\caption[Recursive construction of
resources]{\label{fig:construction}A cryptographic protocol uses
  (weak) resources to construct other (stronger) resources. These
  resources are depicted in the boxes, and the arrows are
  protocols. Each box is a one-time-use resource, so the same resource
  appears in multiple boxes if different protocols require it. The
  long secret key resource in the center of the figure is split in
  three shorter keys, and each protocol uses one of these keys.}
\end{figure}

For any cryptographic task one can define an ideal resource which
fullfils this task in a perfect way. A protocol is then considered
secure if the real resource actually constructed is indistinguishable
from a system running the ideal
resource.\footnoteremember{fn:relative}{Note that we use the notions
  \emph{real} and \emph{ideal} in a relative sense: the ideal resource
  that we wish to construct with one protocol might be considered a
  real resource available to another protocol.} This notion of
security based on distinguishing real and ideal systems is explained
informally in \secref{sec:ac.view}. It is then illustrated with the
one-time pad\footnote{The \emph{one-time pad} is an encryption scheme
  that XORs every bit of a message $x$ with a bit of a key $k$, and
  transmits the resulting ciphertext $y = x \xor k$ to the
  receiver. The message, which can be decrypted by performing the
  reverse operation $x = y \xor k$, is hidden from any player who
  intercepts the ciphertext $y$ but has no knowledge of the key $k$.}
in \secref{sec:ac.otp}. In \secref{sec:ac.definition} we give a formal
security definition in the Abstract Cryptography (AC) framework for
the special case of three party protocols with honest Alice and Bob,
and dishonest Eve. Finally, in \secref{sec:ac.interpretation} we
discuss how the metric used to quantify the (in)distinguishability
between the real and ideal settings should be
interpreted.

\subsection{Real-world ideal-world paradigm}
\label{sec:ac.view}

Cryptography aims at providing security guarantees in the presence of
an \emph{adversary}. And traditionally, security has been defined with
respect to the information gathered by this adversary \--- but, as we
shall see, this can be insufficient to achieve the desired security
guarantees.
A typical example of this is the security criterion used in early
papers on QKD, e.g., \cite{May96,BBBMR00,SP00,May01}. Let $K$ be the
secret key produced by a run of a QKD protocol, and $Y$ be a random
variable obtained by an adversary attacking the scheme and measuring
her quantum system $E$. It can be argued that the key is unknown to
the adversary if she gains only negligible information about it, i.e.,
if for all attacks and measurements of the resulting quantum system,
\begin{equation}
  \label{eq:localqkd}
  I(K;Y) \approx 0 \ ,
\end{equation} where $I(K;Y)$ is the mutual information\footnote{This
  information measure, the maximum mutual information over all
  measurements of the quantum system, is called
  \emph{accessible information}.} between $K$ and $Y$.

However, even if a key obtained from a protocol satisfying
\eqnref{eq:localqkd} is used in a perfectly secure encryption scheme
like the one-time pad, it can leak information about the
message. K\"onig et al.~\cite{KRBM07} give such an example: they find
a quantum state $\rho_{KE}$ which satisfies \eqnref{eq:localqkd}, but
which cannot be used to encrypt a message partly known to an
adversary. They show that if the key is split in two, $K=K_1K_2$, and
the adversary delays measuring her system $E$ until the first part,
$K_1$, is revealed to her \--- e.g., because a known message was
encrypted by the one-time pad with $K_1$ \--- she can obtain
information about the rest of the key. More precisely, they prove that
for this state $\rho_{K_1K_2E}$, \[I(K_2;Y') \gg 0 \ ,\] where $Y'$ is
a random variable obtained by a measurement of the joint state
$\rho_{K_1E}$ consisting of the partial key $K_1$ and the quantum
information $E$ gathered during the QKD protocol.\footnote{This
  phenomenon is called \emph{information locking}
  \cite{DHLST04,Win14}.} Even though the key obtained from the QKD
protocol is approximately uniform and independent from the adversary's
information $Y$, it is unusable in a cryptographic context, and
another approach than the adversarial viewpoint is necessary for
defining cryptographic security.

This new approach was proposed independently by Canetti~\cite{Can01}
and Pfitzmann and Waidner~\cite{PW00,PW01} for classical
cryptography. The gist of their global security paradigm lies in
measuring how well some \emph{real} protocol can be distinguished from
some \emph{ideal} system that fullfils the task in an ideal way, and
is often referred to as the ``real\-/world ideal\-/world''
paradigm.\footnote{As already noted in \footnoteref{fn:relative}, we
  use the notions \emph{real} and \emph{ideal} in a relative sense.}

To do this, the notion of an adversary is dropped in favor of a
\emph{distinguisher}. Apart from having the capabilities of the
adversary, this distinguisher also encompasses any protocol that is
run before, after, and during the protocol being analyzed. The role of
the distinguisher is to capture ``the rest of the world'', everything
that exists outside of the honest players and the resources they
share. A distinguisher is defined as an entity that can choose the
inputs of the honest players (that might come from a previously run
protocol), receives their outputs (that could be used in a subsequent
protocol), and simultaneously fullfils the role of the adversary,
possibly eavesdropping on the communication channels and tampering
with messages. This distinguisher is given a black box access to
either the real or an ideal system, and must decide with which of the
two it is interacting. A protocol is then considered secure if the
real system constructed is indistinguishable from the ideal one. This
is illustrated in \figref{fig:distinguisher}.

\begin{figure}[htb]
\begin{centering}

\begin{tikzpicture}\small

\def\v{.6}
\def\u{2.27} 
\def\w{-1.9} 
\def\b{7cm}

\node[lrnode] (r1) at (0,\v) {};
\node[lrnode] (r2) at (0,0) {};
\node[lrnode] (r3) at (0,-\v) {};
\node[llrnode] (rr1) at (-\v,0) {};
\node[llrnode] (rr2) at (0,0) {};
\node[llrnode] (rr3) at (\v,0) {};
\node[largeResource] (R) at (0,0) {\footnotesize Real system};

\draw[thick] (-1.618-1.15,1) -- ++(.75,0) -- ++(0,-2.4)  --
++(1.618*2+.8,0)  -- ++(0,2.4) -- ++(.75,0) -- ++(0,-3.4) --
++(-1.618*2-2.3,0) -- cycle;

\node at (0,\w) {\footnotesize Distinguisher};
\node[tlrnode] (dd1) at (-\v,\w) {}; 
\node[tlrnode] (dd2) at (0,\w) {}; 
\node[tlrnode] (dd3) at (\v,\w) {};
\node[inner sep=0] (d1) at (-\u,\v) {};
\node[inner sep=0] (d2) at (-\u,0) {};
\node[inner sep=0] (d3) at (-\u,-\v) {};
\node[inner sep=0] (d4) at (\u,\v) {};
\node[inner sep=0] (d5) at (\u,0) {};
\node[inner sep=0] (d6) at (\u,-\v) {};
\node[inner sep=0] (d0) at (0,\w-.5-.7) {};

\draw[sArrow] (rr1) to (dd1);
\draw[sArrow] (dd2) to (rr2);
\draw[sArrow] (rr3) to (dd3);
\draw[sArrow] (d1) to (r1);
\draw[sArrow] (r2) to (d2);
\draw[sArrow] (d3) to (r3);
\draw[sArrow] (r1) to (d4);
\draw[sArrow] (d5) to (r2);
\draw[sArrow] (r3) to (d6);
\draw[sArrow] (dd2) to node[auto,pos=.6] {\footnotesize $0,1$} (d0);


\node[xshift=\b,lrnode] (i1) at (0,\v) {};
\node[xshift=\b,lrnode] (i2) at (0,0) {};
\node[xshift=\b,lrnode] (i3) at (0,-\v) {};
\node[xshift=\b,llrnode] (ii1) at (-\v,0) {};
\node[xshift=\b,llrnode] (ii2) at (0,0) {};
\node[xshift=\b,llrnode] (ii3) at (\v,0) {};
\node[xshift=\b,largeResource] (I) at (0,0) {\footnotesize Ideal system};

\draw[xshift=\b,thick] (-1.618-1.15,1) -- ++(.75,0) -- ++(0,-2.4)  --
++(1.618*2+.8,0)  -- ++(0,2.4) -- ++(.75,0) -- ++(0,-3.4) --
++(-1.618*2-2.3,0) -- cycle;

\node[xshift=\b] at (0,\w) {\footnotesize Distinguisher};
\node[xshift=\b,tlrnode] (ee1) at (-\v,\w) {}; 
\node[xshift=\b,tlrnode] (ee2) at (0,\w) {}; 
\node[xshift=\b,tlrnode] (ee3) at (\v,\w) {};
\node[xshift=\b,inner sep=0] (e1) at (-\u,\v) {};
\node[xshift=\b,inner sep=0] (e2) at (-\u,0) {};
\node[xshift=\b,inner sep=0] (e3) at (-\u,-\v) {};
\node[xshift=\b,inner sep=0] (e4) at (\u,\v) {};
\node[xshift=\b,inner sep=0] (e5) at (\u,0) {};
\node[xshift=\b,inner sep=0] (e6) at (\u,-\v) {};
\node[xshift=\b,inner sep=0] (e0) at (0,\w-.5-.7) {};

\draw[sArrow] (ii1) to (ee1);
\draw[sArrow] (ee2) to (ii2);
\draw[sArrow] (ii3) to (ee3);
\draw[sArrow] (e1) to (i1);
\draw[sArrow] (i2) to (e2);
\draw[sArrow] (e3) to (i3);
\draw[sArrow] (i1) to (e4);
\draw[sArrow] (e5) to (i2);
\draw[sArrow] (i3) to (e6);
\draw[sArrow] (ee2) to node[auto,pos=.6] {\footnotesize $0,1$} (e0);

\end{tikzpicture}

\end{centering}
\caption[Distinguishing systems]{\label{fig:distinguisher}A
  distinguisher has a complete description of two systems, and is
  given a black\-/box access to one of the two. After interacting with
  the system, it must guess which one it is holding.}
\end{figure}
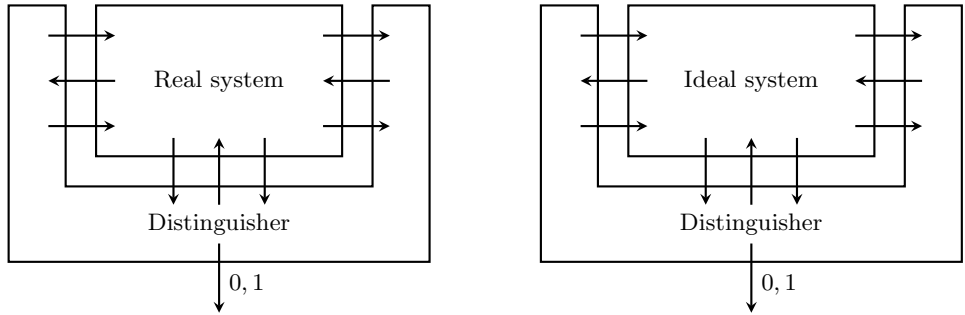

In the case of QKD, this means that the distinguisher does not only
obtain the system $E$ of the eavesdropper, but also receives the final
key $K$ generated by Alice and Bob. In the real world, this key is
potentially correlated to $E$, and in an ideal system, $K$ is
uniformly random and independent from $E$. The distinguisher can then
run the attack of K\"onig et al.~\cite{KRBM07} to distinguish between
the real and ideal systems: if $Y'$, the result of the measurement of
$K_1$ and $E$ is correlated to $K_2$, it knows that it was given the
real system, otherwise it must have the ideal one. This specific
attack is illustrated in more detail in \secref{sec:ex.leak}.

\subsection{Example: one-time pad}
\label{sec:ac.otp}

In this section, we illustrate with the one-time pad how security is
defined in the real\-/world ideal\-/world paradigm. The one-time pad
protocol uses a secret key $k$ to encrypt a message $x$ as $y
\coloneqq x \xor k$. The ciphertext $y$ is then sent on an authentic
channel to the receiver, who decrypts it, obtaining $x = y \xor
k$. $y$ is however also leaked to the adversary that is eavesdropping
on the authentic channel. This is depicted in \figref{fig:otp.real}.

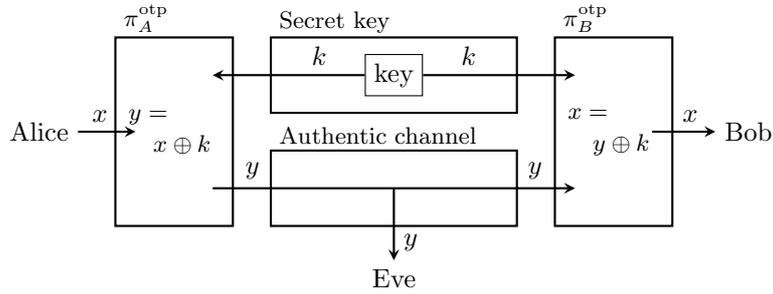
\begin{figure}[htb]
\begin{centering}

\begin{tikzpicture}\small

\def\t{4.663} 
\def\u{2.89} 
\def\v{.75}

\node[pnode] (a1) at (-\u,\v) {};
\node[pnode] (a2) at (-\u,0) {};
\node[pnode] (a3) at (-\u,-\v) {};
\node[protocol,text width=1.2cm] (a) at (-\u,0) {\footnotesize $y
  =$\\$\quad x \xor k$};
\node[yshift=-2,above right] at (a.north west) {\footnotesize
  $\pi^{\otp}_A$};
\node (alice) at (-\t,0) {Alice};

\node[pnode] (b1) at (\u,\v) {};
\node[pnode] (b2) at (\u,0) {};
\node[pnode] (b3) at (\u,-\v) {};
\node[protocol,text width=1.2cm] (b) at (\u,0) {\footnotesize $x
  =$\\$\quad y \xor k$};
\node[yshift=-2,above right] at (b.north west) {\footnotesize $\pi^{\otp}_B$};
\node (bob) at (\t,0) {Bob};

\node[thinResource] (keyBox) at (0,\v) {};
\node[draw] (key) at (0,\v) {key};
\node[yshift=-2,above right] at (keyBox.north west) {\footnotesize Secret key};
\node[thinResource] (channel) at (0,-\v) {};
\node[yshift=-1.5,above right] at (channel.north west) {\footnotesize
  Authentic channel};
\node (eve) at (0,-1.95) {Eve};
\node (junc) at (eve |- a3) {};

\draw[sArrow] (key) to node[auto,swap,pos=.3] {$k$} (a1);
\draw[sArrow] (key) to node[auto,pos=.3] {$k$} (b1);

\draw[sArrow] (alice) to node[auto,pos=.38] {$x$} (a2);
\draw[sArrow] (b2) to node[auto,pos=.6] {$x$} (bob);

\draw[sArrow] (a3) to node[pos=.11,auto] {$y$} node[pos=.89,auto] {$y$} (b3);
\draw[sArrow] (junc.center) to node[pos=.75,auto] {$y$} (eve);

\end{tikzpicture}

\end{centering}
\caption[Real one-time pad system]{\label{fig:otp.real}The real
  one-time pad system \--- Alice has access to the left interface, Bob
  to the right interface and Eve to the lower interface \--- consists
  of the one-time pad protocol $(\pi^{\otp}_A,\pi^{\otp}_B)$, and the
  secret key and authentic channel resources. The combination of these
  resources and protocol constructs a system that takes a message $x$
  at Alice's interface, outputs a ciphertext $y$ at Eve's interface
  and the original message $x$ at Bob's interface.}
\end{figure}

The one-time pad protocol thus uses two resources, a secret key and an
authentic channel. The resource we wish to construct with this
encryption scheme is a secure channel: a resource which transmits a
message $x$ from the sender to the receiver, and leaks only
information about the message size $|x|$ at the adversary's interface,
but not the contents of the message. This is illustrated in
\figref{fig:otp.ideal.resource}.

\begin{figure}[htb]
\begin{centering}

\begin{tikzpicture}\small

\node[largeResource] (keyBox) at (0,0) {};
\node (alice) at (-2.6,0) {Alice};
\node (bob) at (2.6,0) {Bob};
\node (eve) at (0,-1.7) {Eve};
\node (ajunc) at (eve.north |- alice) {};

\draw[sArrow] (alice) to node[pos=.06,auto] {$x$} node[pos=.94,auto] {$x$} (bob);
\draw[dotted,sArrow] (ajunc.center) to node[pos=.85,auto,swap] {$|x|$} (eve);

\end{tikzpicture}

\end{centering}
\caption[Secure channel]{\label{fig:otp.ideal.resource}A secure
  channel from Alice to Bob leaks only the message size at Eve's interface.}
\end{figure}
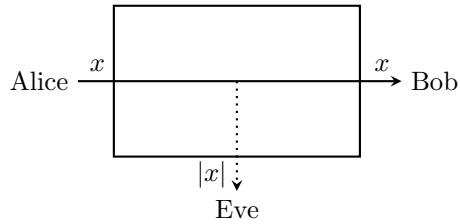

Since an ideal resource ``magically'' solves the cryptographic task
considered, e.g., by producing perfect secret keys or transmitting a
message directly from Alice to Bob, the adversary's interface of the
ideal resource is usually quite different from her interface of the
real system, which gives her access to the resources used. For the
one-time pad, the real system from \figref{fig:otp.real} outputs a
string $y$ at Eve's interface, but the ideal secure channel from
\figref{fig:otp.ideal.resource} outputs an integer, $|x|$. To make the
comparison between real and ideal systems possible, we define the
ideal system to consist of the ideal resource as well as a
\emph{simulator} plugged into the adversary's interface of the ideal
resource, that recreates the communication occurring in the real
system. For the one-time pad, this simulator must generate a
ciphertext $y$ given the message length $|x|$. This is simply done by
generating a random string of the appropriate length, as depicted in
\figref{fig:otp.ideal}. Note that putting such a simulator between the
ideal resource and the adversary can only weaken her, since any
operation performed by the simulator could equivalently be performed
by an adversary connected directly to the interface of the ideal
resource.

\begin{figure}[htb]
\begin{centering}

\begin{tikzpicture}\small

\def\t{2.368} 
\def\u{-.75}
\def\v{.75}
\def\w{-1.75} 

\node[thinResource] (channel) at (0,\v) {};
\node[yshift=-1.5,above right] at (channel.north west) {\footnotesize
  Secure channel};
\node (alice) at (-2.6,\v) {Alice};
\node (bob) at (2.6,\v) {Bob};

\node[protocolLong] (sim) at (0,\u) {};
\node[xshift=1.5,below left] at (sim.north west) {$\sigma^{\otp}_E$};
\node[draw] (rand) at (0,\u) {\footnotesize Random string};

\draw[sArrow] (alice) to node[pos=.06,auto] {$x$}
node[pos=.93,auto] {$x$} (bob);
\draw[sArrow,dotted] (0,\v) to node[pos=.6,auto] {$|x|$} (rand);

\node (eve) at (0,\w-.2) {Eve};
\draw[sArrow] (rand) to node[pos=.65,auto] {$y$} (eve);

\end{tikzpicture}

\end{centering}
\caption[Ideal one-time pad system]{\label{fig:otp.ideal}The ideal
  one-time pad system \--- Alice has access to the left interface, Bob
  to the right interface and Eve to the lower interface \--- consists
  of the ideal secure channel and a simulator $\sigma^{\otp}_E$ that
  generates a random string $y$ of length $|x|$.}
\end{figure}
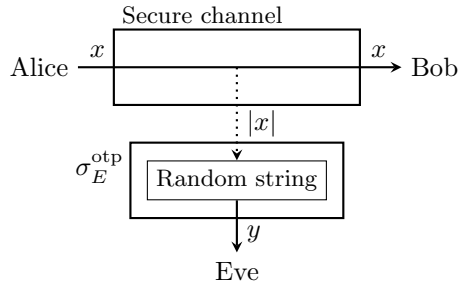

To prove that the one-time pad constructs a secure channel from an
authentic channel and a secret key, we view the real and ideal
one-time pad systems of \figref{fig:otp.real} and
\figref{fig:otp.ideal} as black boxes, and need to show that no
distinguisher can tell with which of the two it has been
connected. For both black boxes, if the distinguisher inputs $x$ at Alice's
interface, the same string $x$ is output at Bob's interface and a
uniformly random string of length $|x|$ is output at Eve's
interface. The two systems are thus completely indistinguishable \--- if
the distinguisher were to take a guess, it would be right with
probability exactly $1/2$ \--- and we say that the one-time pad has
perfect security.

If two systems are indistinguishable, they can be used interchangeably
in any setting. For example, let some protocol $\pi'$ be proven secure
if Alice and Bob are connected by a secure channel. Since the one-time
pad constructs such a channel, it can be used in lieu of the secure
channel, and composed with $\pi'$. Or equivalently, the
contrapositive: if composing the one-time pad and $\pi'$ were to leak
some vital information, which would not happen with a secure channel,
a distinguisher that is either given the real or ideal system could
run $\pi'$ internally and check whether this leak occurs to know with
which of the two it is interacting.





\subsection{General security definition}
\label{sec:ac.definition}

The previous sections introduced the concepts of resources, protocols
and simulator in an informal manner. In the AC framework these
elements are defined in an abstract way. For example, a resource is an
abstract system that is shared between all players and provides each
one with an interface that allows in- and outputs. AC does not define
the internal workings of a resource. It postulates axioms that these
abstract systems must fulfill \--- e.g. there must exist a metric and
a parallel composition operator on the space of resources \--- and is
valid for any instantiation which respects these axioms. In the group
theory analogy introduced in \secref{sec:intro.ac}, these axioms
correspond to the group axioms (closure, associativity, identity and
invertibility). Any set and operation that respects these group axioms
is an instantiation of a group, and any theorem proven for groups
applies to this instantiation.

Thus, AC defines cryptographic security for abstract systems which
fulfill certain basic properties. In the following we briefly sketch
what these are. Note that examples \--- such as the model of the
one-time pad given in Figures~\ref{fig:otp.real} and
\ref{fig:otp.ideal} \--- necessarily assume some instantiation of the
abstract systems. Since we consider only simple examples in this work,
we do not provide formal generic definitions of these lower levels,
and refer to the discussions in \cite{MR11,Mau12,DFPR14} on how this
can be modeled.

\paragraph{Resource.} An \emph{$\cI$-resource} is an (abstract) system
with interfaces specified by a set $\cI$ (e.g., $\cI =
\{A,B,E\}$). Each interface $i \in \cI$ is accessible to a user $i$
and provides her or him with certain controls (the possibility of
reading outputs and providing inputs). Resources are equipped with a
parallel composition operator, $\|$ , that maps two resources to
another resource.

\paragraph{Converter.} To transform one resource into another, we use
\emph{converters}. These are (abstract) systems with two interfaces,
an \emph{inside} interface and an \emph{outside} interface. The inside
interface connects to an interface of a resource, and the outside
interface becomes the new interface of the constructed resource. We
write either $\alpha_i \aR$ or $\aR\alpha_i$ to denote the new
resource with the converter $\alpha$ connected at the interface $i$ of
$\aR$,\footnote{There is no mathematical difference between $\alpha_i
  \aR$ and $\aR\alpha_i$. It sometimes simplifies the notation to have
  the converters for some players written on the right of the resource
  and the ones for other players on the left, instead of all on the
  same side, hence the two notations.} and $\alpha\aR$ or $\aR\alpha$
for a set of converters $\alpha = \{\alpha_i\}_i$, for which it is
clear to which interface they connect.

A protocol is a set of converters (one for every honest player) and a
simulator is also a converter. Another type of converter that we need
is a \emph{filter}, which we often denote by $\sharp$ or
$\lozenge$. When placed over a dishonest player's interface, a filter
prevents access to the corresponding controls and emulates an honest
behavior.

Serial and parallel composition of converters is defined as follows:
\begin{equation} \label{eq:axioms.order} (\alpha\beta)_i \aR \coloneqq
  \alpha_i (\beta_i \aR) \qquad \text{and} \qquad (\alpha \| \beta)_i
  (\aR\|\aS) \coloneqq (\alpha_i \aR) \| (\beta_i \aS) \ . \end{equation}

\paragraph{Filtered resource.} A pair of a resource $\aR$ and a filter
$\sharp$ together specify the (reactive) behavior of a system both
when no adversary is present \--- with the filter plugged in the
adversarial interface, $\aR\sharp_E$ \--- and in the case of a
cheating player that removes the filter and has full access to her
interface of $\aR$. We call such a pair $(\aR,\sharp)$ a
\emph{filtered resource}, and usually denote it by $\aR_\sharp$.

\paragraph{Metric.} There must exist a pseudo\-/metric
$d(\cdot,\cdot)$ on the space of resources, i.e., for any three
resources $\aR,\aS,\aT$, it satisfies the following
conditions:\footnote{If additionally $d(\aR,\aS) = 0 \implies
  \aR=\aS$, then $d$ is a metric.}
\begin{align} \text{(identity)} & &
  d\left(\aR,\aR\right) & = 0 \ , \label{eq:pm.ref} \\
  \text{(symmetry)} & & d\left(\aR,\aS\right) & =
  d\left(\aS,\aR\right) \ , \label{eq:pm.sym} \\
  \text{(triangle inequality)} & & d\left(\aR,\aS\right)
  & \leq d\left(\aR,\aT\right) +
  d\left(\aT,\aS\right) \ .
  \label{eq:pm.tri} \end{align} Furthermore, this pseudo\-/metric must
be non-increasing under composition with resources and converters: for
any converter $\alpha$ and resources $\aR,\aS,\aT$, we require
\begin{equation} \label{eq:axioms.nonincrease} d(\alpha\aR,\alpha\aS)
  \leq d(\aR,\aS) \qquad \text{and} \qquad d(\aR\|\aT,\aS\|\aT) \leq
  d(\aR,\aS) \ . \end{equation}

\vspace{\baselineskip}

We are now ready to define the security of a cryptographic
protocol. We do so in the three player setting, for honest Alice and
Bob, and dishonest Eve. Thus, in the following, all resources have
three interfaces, denoted $A$, $B$ and $E$, and we only consider
honest behaviors (given by a protocol $(\pi_A,\pi_B)$) at the $A$ and
$B$\=/interfaces, but arbitrary behavior at the $E$\=/interface. We
refer to \cite{MR11} for the general case, when arbitrary players can
be dishonest.

\begin{deff}[Cryptographic security~\cite{MR11}] \label{def:security}
  Let $\pi_{AB} = (\pi_A,\pi_B)$ be a protocol and $\aR_\sharp =
  (\aR,\sharp)$ and $\aS_\lozenge = (\aS,\lozenge)$ denote two
  filtered resources.  We say that \emph{$\pi_{AB}$ constructs
    $\aS_{\lozenge}$ from $\aR_\sharp$ within $\eps$}, which we write
  $\aR_\sharp \xrightarrow{\pi,\eps} \aS_\lozenge$, if the two
  following conditions hold:
\begin{enumerate}[label=\roman*), ref=\roman*]
\item \label{eq:def.cor} We have
  \[d(\pi_{AB}\aR\sharp_E,\aS\lozenge_E) \leq \eps \ .\]
\item \label{eq:def.sec} There exists a converter $\sigma_E$ \---
  which we call simulator \--- such that
  \[  d(\pi_{AB}\aR,\cS\sigma_E) \leq \eps \ .\]
\end{enumerate}
If it is clear from the context what filtered resources $\aR_\sharp$
and $\aS_\lozenge$ are meant, we simply say that $\pi_{AB}$ is
$\eps$\=/secure.
\end{deff}

The first of these two conditions measures how close the constructed
resource is to the ideal resource in the case where no malicious
player is intervening, which we call
\emph{availability}.\footnote{This is sometimes referred to as the
  \emph{correctness} of the protocol in the cryptographic
  literature. But in QKD, correctness has another meaning \--- namely
  the probability that Alice and Bob end up with different keys when
  Eve is active. Instead, the term \emph{robustness} is traditionally
  used to denote the performance of a QKD protocol under honest
  (noisy) conditions. We refer to \secref{sec:security.rob} for a
  discussion of the relation between availability and robustness.} The
second condition captures \emph{security} in the presence of an
adversary. These two equations are illustrated in
\figref{fig:security}.

\begin{figure}[htb]
\begin{subfigure}[b]{\textwidth}
\begin{centering}

\begin{tikzpicture}[scale=.8]\small

\def\t{4.422} 
\def\u{2.9} 
\def\v{.6}
\def\w{1.8}

\def\a{5.2}
\def\b{8.3}
\def\tb{2.368} 
\def\vb{.4}
\def\ub{.75}
\def\wb{2.05}

\node[scale=.8,pnode] (a1) at (-\u,\v) {};
\node[scale=.8,pnode] (a2) at (-\u,0) {};
\node[scale=.8,pnode] (a3) at (-\u,-\v) {};
\node[scale=.8,protocol] (a) at (-\u,0) {};
\node[yshift=-2,above right] at (a.north west) {\footnotesize
  $\pi_A$};
\node (alice1) at (-\t,\v) {};
\node (alice2) at (-\t,0) {};
\node (alice3) at (-\t,-\v) {};

\node[scale=.8,pnode] (b1) at (\u,\v) {};
\node[scale=.8,pnode] (b2) at (\u,0) {};
\node[scale=.8,pnode] (b3) at (\u,-\v) {};
\node[scale=.8,protocol] (b) at (\u,0) {};
\node[yshift=-2,above right] at (b.north west) {\footnotesize $\pi_B$};
\node (bob1) at (\t,\v) {};
\node (bob2) at (\t,0) {};
\node (bob3) at (\t,-\v) {};

\node[scale=.8,lrnode] (r1) at (0,\v) {};
\node[scale=.8,lrnode] (r2) at (0,0) {};
\node[scale=.8,lrnode] (r3) at (0,-\v) {};
\node[scale=.8,llrnode] (rr1) at (-\v,0) {};
\node[scale=.8,llrnode] (rr2) at (0,0) {};
\node[scale=.8,llrnode] (rr3) at (\v,0) {};
\node[scale=.8,largeResource] (R) at (0,0) {};
\node[yshift=-2,above right] at (R.north west) {\footnotesize $\aR$};

\node (eve1) at (-\v,-\w) {};
\node (eve2) at (0,-\w) {};
\node (eve3) at (\v,-\w) {};

\node[scale=.8,filter] (fi) at (0,-1.9) {};
\node[xshift=2,below left] at (fi.north west) {\footnotesize
  $\sharp_E$};

\draw[sArrow] (alice1) to (a1);
\draw[sArrow] (a2) to (alice2);
\draw[sArrow] (alice3) to (a3);

\draw[sArrow] (a1) to (r1);
\draw[sArrow] (r2) to (a2);
\draw[sArrow] (a3) to (r3);

\draw[sArrow] (r1) to (b1);
\draw[sArrow] (b2) to (r2);
\draw[sArrow] (r3) to (b3);

\draw[sArrow] (b1) to (bob1);
\draw[sArrow] (bob2) to (b2);
\draw[sArrow] (b3) to (bob3);

\draw[sArrow] (rr1) to (eve1);
\draw[sArrow] (eve2) to (rr2);
\draw[sArrow] (rr3) to (eve3);

\node at (\a,0) {\Large $\close{\eps}$};

\node[scale=.8,lrnode] (s1) at (\b,\vb+\ub) {};
\node[scale=.8,lrnode] (s2) at (\b,\ub) {};
\node[scale=.8,lrnode] (s3) at (\b,-\vb+\ub) {};
\node[scale=.8,tlrnode] (ss1) at (\b-\v,\ub) {};
\node[scale=.8,tlrnode] (ss2) at (\b,\ub) {};
\node[scale=.8,tlrnode] (ss3) at (\b+\v,\ub) {};
\node[scale=.8,thinResource] (S) at (\b,\ub) {};
\node[yshift=-2,above right] at (S.north west) {\footnotesize $\aS$};

\node[scale=.8,tlrnode] (t1) at (\b-\v,-\ub) {};
\node[scale=.8,tlrnode] (t2) at (\b,-\ub) {};
\node[scale=.8,tlrnode] (t3) at (\b+\v,-\ub) {};
\node[scale=.8,filter] (fil) at (\b,-\ub) {};
\node[xshift=-2,below right] at (fil.north east) {\footnotesize $\lozenge_E$};

\node (cate1) at (\b-\tb,\vb+\ub) {};
\node (cate2) at (\b-\tb,\ub) {};
\node (cate3) at (\b-\tb,-\vb+\ub) {};

\node (dave1) at (\b+\tb,\vb+\ub) {};
\node (dave2) at (\b+\tb,\ub) {};
\node (dave3) at (\b+\tb,-\vb+\ub) {};

\draw[sArrow] (cate1) to (s1);
\draw[sArrow] (s2) to (cate2);
\draw[sArrow] (cate3) to (s3);

\draw[sArrow] (s1) to (dave1);
\draw[sArrow] (dave2) to (s2);
\draw[sArrow] (s3) to (dave3);

\draw[sArrow] (ss1) to (t1);
\draw[sArrow] (t2) to (ss2);
\draw[sArrow] (ss3) to (t3);

\end{tikzpicture}

\end{centering}
\caption[Availability (no
adversary)]{\label{fig:security.availability}Condition
  \eqref{eq:def.cor} from \defref{def:security}. If Eve's interfaces
  are blocked by filters emulating honest behavior, the functionality
  constructed by the protocol should be indistinguishable from the
  ideal resource.}
\end{subfigure}

\vspace{12pt}

\begin{subfigure}[b]{\textwidth}
\begin{centering}

\begin{tikzpicture}[scale=.8]\small

\def\t{4.422} 
\def\u{2.9} 
\def\v{.6}
\def\w{1.8}

\def\a{5.2}
\def\b{8.3}
\def\tb{2.368} 
\def\vb{.4}
\def\ub{.75}
\def\wb{2.05}

\node[scale=.8,pnode] (a1) at (-\u,\v) {};
\node[scale=.8,pnode] (a2) at (-\u,0) {};
\node[scale=.8,pnode] (a3) at (-\u,-\v) {};
\node[scale=.8,protocol] (a) at (-\u,0) {};
\node[yshift=-2,above right] at (a.north west) {\footnotesize
  $\pi_A$};
\node (alice1) at (-\t,\v) {};
\node (alice2) at (-\t,0) {};
\node (alice3) at (-\t,-\v) {};

\node[scale=.8,pnode] (b1) at (\u,\v) {};
\node[scale=.8,pnode] (b2) at (\u,0) {};
\node[scale=.8,pnode] (b3) at (\u,-\v) {};
\node[scale=.8,protocol] (b) at (\u,0) {};
\node[yshift=-2,above right] at (b.north west) {\footnotesize $\pi_B$};
\node (bob1) at (\t,\v) {};
\node (bob2) at (\t,0) {};
\node (bob3) at (\t,-\v) {};

\node[scale=.8,lrnode] (r1) at (0,\v) {};
\node[scale=.8,lrnode] (r2) at (0,0) {};
\node[scale=.8,lrnode] (r3) at (0,-\v) {};
\node[scale=.8,llrnode] (rr1) at (-\v,0) {};
\node[scale=.8,llrnode] (rr2) at (0,0) {};
\node[scale=.8,llrnode] (rr3) at (\v,0) {};
\node[scale=.8,largeResource] (R) at (0,0) {};
\node[yshift=-2,above right] at (R.north west) {\footnotesize $\aR$};

\node (eve1) at (-\v,-\w) {};
\node (eve2) at (0,-\w) {};
\node (eve3) at (\v,-\w) {};

\draw[sArrow] (alice1) to (a1);
\draw[sArrow] (a2) to (alice2);
\draw[sArrow] (alice3) to (a3);

\draw[sArrow] (a1) to (r1);
\draw[sArrow] (r2) to (a2);
\draw[sArrow] (a3) to (r3);

\draw[sArrow] (r1) to (b1);
\draw[sArrow] (b2) to (r2);
\draw[sArrow] (r3) to (b3);

\draw[sArrow] (b1) to (bob1);
\draw[sArrow] (bob2) to (b2);
\draw[sArrow] (b3) to (bob3);

\draw[sArrow] (rr1) to (eve1);
\draw[sArrow] (eve2) to (rr2);
\draw[sArrow] (rr3) to (eve3);

\node at (\a,0) {\Large $\close{\eps}$};

\node[scale=.8,lrnode] (s1) at (\b,\vb+\ub) {};
\node[scale=.8,lrnode] (s2) at (\b,\ub) {};
\node[scale=.8,lrnode] (s3) at (\b,-\vb+\ub) {};
\node[scale=.8,tlrnode] (ss1) at (\b-\v,\ub) {};
\node[scale=.8,tlrnode] (ss2) at (\b,\ub) {};
\node[scale=.8,tlrnode] (ss3) at (\b+\v,\ub) {};
\node[scale=.8,thinResource] (S) at (\b,\ub) {};
\node[yshift=-2,above right] at (S.north west) {\footnotesize $\aS$};

\node[scale=.8,tlrnode] (t1) at (\b-\v,-\ub) {};
\node[scale=.8,tlrnode] (t2) at (\b,-\ub) {};
\node[scale=.8,tlrnode] (t3) at (\b+\v,-\ub) {};
\node[scale=.8,protocolLong] (sim) at (\b,-\ub) {};
\node[xshift=-2,below right] at (sim.north east) {\footnotesize $\sigma_E$};

\node (cate1) at (\b-\tb,\vb+\ub) {};
\node (cate2) at (\b-\tb,\ub) {};
\node (cate3) at (\b-\tb,-\vb+\ub) {};

\node (dave1) at (\b+\tb,\vb+\ub) {};
\node (dave2) at (\b+\tb,\ub) {};
\node (dave3) at (\b+\tb,-\vb+\ub) {};

\node (finn1) at (\b-\v,-\wb) {};
\node (finn2) at (\b,-\wb) {};
\node (finn3) at (\b+\v,-\wb) {};

\draw[sArrow] (cate1) to (s1);
\draw[sArrow] (s2) to (cate2);
\draw[sArrow] (cate3) to (s3);

\draw[sArrow] (s1) to (dave1);
\draw[sArrow] (dave2) to (s2);
\draw[sArrow] (s3) to (dave3);

\draw[sArrow] (ss1) to (t1);
\draw[sArrow] (t2) to (ss2);
\draw[sArrow] (ss3) to (t3);

\draw[sArrow] (t1) to (finn1);
\draw[sArrow] (finn2) to (t2);
\draw[sArrow] (t3) to (finn3);

\end{tikzpicture}

\end{centering}
\caption[Security in the presence of an
adversary]{\label{fig:security.active}Condition \eqref{eq:def.sec}
  from \defref{def:security}. If Eve accesses her cheating interface
  of $\aR$, the resulting system must be simulatable in the ideal
  world by a converter $\sigma_E$ that only accesses Eve's interface
  of the ideal resource $\aS$.}
\end{subfigure}

\caption[Cryptographic security]{\label{fig:security}A protocol
  $(\pi_A,\pi_B)$ constructs $\aS_\lozenge$ from $\aR_\sharp$ within
  $\eps$ if the two conditions illustrated in this figure hold. The
  sequences of arrows at the interfaces between the objects represent
  (arbitrary) rounds of communication.}
\end{figure}
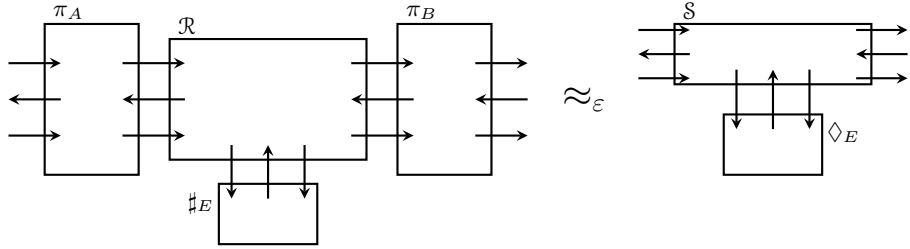
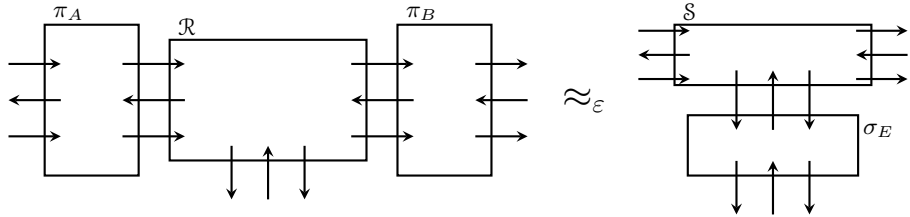

It follows from the AC framework~\cite{MR11} that if two protocols
$\pi$ and $\pi'$ are $\eps$- and $\eps'$\=/secure, the composition of
the two is $(\eps+\eps')$\=/secure. We illustrate this with several
examples in \secref{sec:ex} and \appendixref{app:ex.auth}, and sketch
a generic proof in \appendixref{app:generic}.

\subsection{The distinguishing metric}
\label{sec:ac.interpretation}

The usual pseudo\-/metric used to define security in the real\-/world
ideal\-/world paradigm is the \emph{distinguishing advantage}, defined
as follows. If a distinguisher $\fD$ can guess correctly with
probability $\ddistinguish{\aR,\aS}$ with which of two systems $\aR$
and $\aS$ it is interacting, we define its advantage as
\begin{equation}
  \label{eq:adv} 
    d^{\fD}\left(\aR,\aS\right) \coloneqq 2\ddistinguish{\aR,\aS} - 1\ .
\end{equation}
Changing the power of the distinguisher $\fD$ (e.g.,
computationally bounded or unbounded) results in different metrics and
different levels of security. In this work we are interested only in
information\-/theoretic security, we therefore consider only a
computationally unbounded distinguisher, and drop the superscript
$\fD$. We write
\[
  d(\aR,\aS) \leq \eps \qquad \text{or} \qquad \aR \close{\eps} \aS \ ,
\]
if two systems $\aR$ and $\aS$ can be distinguished with advantage at
most $\eps$, and in the following, the distance between two resources
always refers to the distinguishing advantage of an unbounded
distinguisher. A more extensive discussion of distinguishers is given
in \appendixref{app:moreAC.dist}.

Although any pseudo\-/metric which satisfies the basic axioms can be
used in \defref{def:security}, the distinguishing advantage is of
particular importance, because it has an operational definition \---
the advantage a distinguisher has in guessing whether it is
interacting with the real or ideal system. If the distinguisher
notices a difference between the two, then something in the real
setting did not behave ideally. This can be loosely interpreted as a
failure occurring. If the distinguisher can guess correctly with
probability $1$ with which system it is interacting, a failure must
occur systematically. If it can only guess correctly with probability
$1/2$, no failure occurs at all. If it can guess correctly with
probability $p$, this can be seen as a failure occurring with
probability $\eps = 2p-1$. The distinguishing advantage can thus be
interpreted as the probability that a failure occurs in the real
protocol.\footnote{A formal derivation of this interpretation is given
  in \appendixref{app:op.failure} for the trace distance \--- the
  distinguishing advantage between two quantum states.} And in any
practical implementation, the value $\eps$ can be chosen accordingly.

A bound on the security of a protocol does however not tell us how
``bad'' this failure is. For example, a key distribution protocol which
produces a perfectly uniform key, but with probability $\eps$ Alice
and Bob end up with different keys, is $\eps$\=/secure. Likewise, a
protocol which gives $1$ bit of the key to Eve with probability
$\eps$, but is perfect otherwise, and another protocol which gives the
entire key to Eve with probability $\eps$, but is perfect otherwise,
are both $\eps$\=/secure as well. One could argue that leaking the
entire key is worse than leaking one bit, which is worse than not
leaking anything but generating mismatching keys, and this should be
reflected in the level of security of the protocol. However, leaking
one bit can be as bad as leaking the entire key if only one bit of the
message is vital, and this happens to be the bit obtained by
Eve. Having mismatching keys and therefore misinterpreting a message
could have more dire consequences than leaking the message to Eve. How
bad a failure is depends on the use of the protocol, and since the
purpose of cryptographic security is to make a security statement that
is valid for all contexts, bounding the probability that a failure
occurs is the best it can do.

Since such a security bound gives no idea of the gravity of a failure
\--- a faulty QKD protocol might not only leak the current key, but
all future keys as well if the current key is used to authenticate
messages in future rounds \--- the probability $\eps$ of a failure
occurring must be chosen small enough that the accumulation of all
possible failure probabilities over a lifetime is still small
enough. For example, if an implementation of a QKD protocol produces a
key at a rate of $1$ Mbit/s with a failure per bit of $10^{-24}$, then
this protocol can be run for the age of the universe and still have an
accumulated failure strictly less than $1$.

\section{Quantum key distribution}
\label{sec:qkd}

In order to apply the general AC security definition to QKD, we need
to specify the ideal key filtered resource, which we do in
\secref{sec:qkd.ideal}. Likewise, we specify in
\secref{sec:qkd.protocol} the real QKD system consisting of the
protocol, an authentic classical channel and an insecure quantum
channel. Plugging these systems in \defref{def:security}, we obtain in
\secref{sec:qkd.security} the security criteria for QKD.


\subsection{Ideal key}
\label{sec:qkd.ideal}

The goal of a key distribution protocol is to generate a secret key
shared between two players. One can represent such a resource by a
box, one end of which is in Alice's lab, and another in Bob's. It
provides each of them with a secret key of a given length, but does
not give Eve any information about the key. This is illustrated in
\figref{fig:qkd.resource.simple}, and is the key resource we used in the
one-time pad construction (\figref{fig:otp.real}).

\begin{figure}[htb]
  \subcaptionbox[Simple ideal key]{\label{fig:qkd.resource.simple}A resource that always
    gives a key $k$ to Alice and Bob, and nothing to Eve.}[.5\textwidth][c]{
\begin{tikzpicture}\small

\def\u{0} 

\node[largeResource] (keyBox) at (0,0) {};
\node (alice) at (-2.5,\u) {Alice};
\node (bob) at (2.5,\u) {Bob};
\node (eve) at (0,-1.7) {Eve};
\node[draw] (key) at (0,0) {key};

\draw[sArrow] (key) to node[pos=.55,auto,swap] {$k$} (alice);
\draw[sArrow] (key) to node[pos=.55,auto] {$k$} (bob);

\end{tikzpicture}
}  \subcaptionbox[Ideal key with switch]{\label{fig:qkd.resource.switch}A resource that allows Eve
to decide if Alice and Bob get a key $k$ or an error $\bot$.}[.5\textwidth][c]{
\begin{tikzpicture}\small

\def\u{.236} 

\node[largeResource] (keyBox) at (0,0) {};
\node (alice) at (-2.5,\u) {Alice};
\node (bob) at (2.5,\u) {Bob};
\node (eve) at (0,-1.7) {Eve};
\node[draw] (key) at (.8,\u/2-.5) {key};
\node (junc) at (-.5,0 |- key.center) {};

\draw[sArrow,<->] (alice) to node[pos=.2,auto] {$k,\bot$} node[pos=.8,auto] {$k,\bot$} (bob);
\draw[thick] (junc.center |- 0,\u) to (junc.center) to node[pos=.666] (handle) {} +(160:-.8);
\draw[thick] (.3,0 |- junc.center) to (key);
\draw[double] (eve) to node[pos=.2,auto] {$0,1$} (handle.center);

\end{tikzpicture}
}

\vspace{12pt}

\subcaptionbox[Ideal key \& filter]{\label{fig:qkd.resource.filter}The resource from
  \figref{fig:qkd.resource.switch} with a filter $\lozenge_E$, modeling the case
  with no adversary.}[.5\textwidth][c]{

\begin{tikzpicture}\small

\def\t{2.368} 
\def\u{-1.85} 
\def\w{-3.2} 
\def\v{.118} 
\def\s{.91}

\node[thinResource] (keyBox) at (0,0) {};
\node[draw] (key) at (0,\v/2-.25) {key};
\node (junc) at (-1.4,0 |- key.center) {};
\node[yshift=-1.5,above right] at (keyBox.north west) {\footnotesize
  Secret key $\aK$};
\node (alice) at (-\t,\v) {};
\node (bob) at (\t,\v) {};

\draw[sArrow,<->] (alice.center) to node[pos=.08,auto] {$k,\bot$} node[pos=.92,auto] {$k,\bot$} (bob.center);
\draw[thick] (junc.center |- 0,\v) to (junc.center) to node[pos=.472] (handle) {} +(160:-.8);
\draw[thick] (-.6,0 |- junc.center) to (key);

\node[thinResource] (sim) at (0,\u+.35) {};
\node[xshift=1.5,below left] at (sim.north west) {\footnotesize
  $\lozenge_E$};
\node[innersnode] (a1) at (handle |- 0,\u+.35) {};

\draw[double] (a1) to node[pos=.55,auto,swap] {$0,1$} (handle.center);

\end{tikzpicture}

\vspace{24pt}

}
\subcaptionbox[Ideal key \& simulator]{\label{fig:qkd.resource.sim}The resource from
  \figref{fig:qkd.resource.switch} with a
  simulator $\sigma_E$.}[.5\textwidth][c]{ 
\begin{tikzpicture}\small

\def\t{2.368} 
\def\u{-1.85} 
\def\w{-3.2} 
\def\v{.118} 
\def\s{1.05}

\node[thinResource] (keyBox) at (0,0) {};
\node[draw] (key) at (0,\v/2-.25) {key};
\node (junc) at (-1.4,0 |- key.center) {};
\node[yshift=-1.5,above right] at (keyBox.north west) {\footnotesize
  Secret key $\aK$};
\node (alice) at (-\t,\v) {};
\node (bob) at (\t,\v) {};

\draw[sArrow,<->] (alice.center) to node[pos=.08,auto] {$k,\bot$} node[pos=.92,auto] {$k,\bot$} (bob.center);
\draw[thick] (junc.center |- 0,\v) to (junc.center) to node[pos=.472] (handle) {} +(160:-.8);
\draw[thick] (-.6,0 |- junc.center) to (key);

\node[simulator] (sim) at (0,\u) {};
\node[xshift=1.5,below left] at (sim.north west) {\footnotesize
  $\sigma_E$};
\node[innersnode] (a1) at (-\s,\u+.35) {};
\node[innersnode] (a2) at (-\s,\u-.35) {};
\node[innersnode] (b2) at (\s,\u-.35) {};
\node[innersnode] (c2) at (0,\u-.35) {};

\node (evel) at (-\s,\w) {};
\node (evec) at (0,\w) {};
\node (ever) at (\s,\w) {};

\draw[double] (a1) to node[pos=.55,auto,swap] {$0,1$} (handle.center);
\draw[sArrow] (a2) to (evel.center);
\draw[sArrow] (evec.center) to (c2);
\draw[sArrow] (b2) to (ever.center);

\end{tikzpicture}
}
\caption[Secret key resources]{\label{fig:qkd.resource}Some depictions
  of shared secret key resources, with filter and simulator converters
  in the last two.}
\end{figure}

However, if we wish to realize such a functionality with QKD, there is
a caveat: an eavesdropper can always prevent any real QKD protocol
from generating a key by cutting or jumbling the communication lines
between Alice and Bob, and this must be reflected in the definition of
the ideal resource. This box thus also has an interface accessible to
Eve, which provides her with a switch that, when pressed, prevents the
box from generating this key. We depict this in
\figref{fig:qkd.resource.switch}.

If modeled with the secret key resource of
\figref{fig:qkd.resource.switch}, the one-time pad is trivially secure
conditioned on Eve preventing a key from being distributed \--- in
this case, Alice and Bob do not have a key and do not run the one-time
pad. The security of the one-time pad is thus reduced to the case
where a key is generated, which corresponds to
\figref{fig:qkd.resource.simple} and is the situation analyzed in
\secref{sec:ac.otp}.

If no adversary is present, a filter covers Eve's interface of the
resource, making it inaccessible to the distinguisher. This filter
emulates the honest behavior that one expects in the case of a
non\-/malicious noisy channel. For a protocol and noisy channel that
together produce a key with probability $1-\delta$, the filter should
flip the switch on the $E$\=/interface of the ideal key with
probability $\delta$. This is illustrated in
\figref{fig:qkd.resource.filter}, and discussed in more detail in
\secref{sec:security.rob}.

\begin{rem}[Adaptive key length] \label{rem:adaptive} For a protocol
  to construct the shared secret key resource of
  \figref{fig:qkd.resource.switch}, it must either abort or produce a
  key of a fixed length. A more practical protocol could adapt the
  secret key length to the noise level on the quantum channel. This
  provides the adversary with the functionality to control the key
  length (not only whether it gets generated or not), and can be
  modeled by allowing the key length to be input at Eve's interface of
  the ideal key resource.\end{rem}

\subsection{Real protocol}
\label{sec:qkd.protocol}

To construct the secret key resource of
\figref{fig:qkd.resource.switch}, a QKD protocol uses some other
resources: a two-way authentic classical channel and an insecure
quantum channel. An authentic channel faithfully transmits messages
between Alice and Bob, but provides Eve with a copy as well.
An insecure channel is completely under the control of Eve, she can
apply any operation allowed by physics to the message on the
channel. If Eve does not intervene, some noise might still be present
on the channel, which is modeled by a filter that prevents Eve from
reading the message, but introduces honest noise instead.
Since an authentic channel can be constructed from an insecure channel
and a short shared secret key,\footnote{In fact, a short
  \emph{non\-/uniform} key is sufficient for
  authentication~\cite{RW03}, see \footnoteref{fn:password}.} QKD is
sometimes referred to as a \emph{key expansion} protocol.\footnote{We
  model QKD this way in \appendixref{app:ex.auth.qkd}.}

A QKD protocol typically has three phases: quantum state distribution,
error estimation and classical post\-/processing (for a detailed
review of QKD see \cite{SBCDLP09}). In the first, Alice sends some
quantum states on the insecure channel to Bob, who measures them upon
reception, obtaining a classical string. In the error estimation
phase, they communicate on the (two-way) authentic classical channel
to sample some bits at random positions in the string and estimate the
noise on the quantum channel by comparing these values to what Bob
should have obtained. If the noise level is above a certain threshold,
they abort the protocol and output an error message. If the noise is
low enough, they move on to the third phase, and make use of the
authentic channel to perform error correction and privacy
amplification on their respective strings, resulting in keys $k_A$ and
$k_B$ (which, ideally, should be equal). We sketch this in
\figref{fig:qkd.real}.

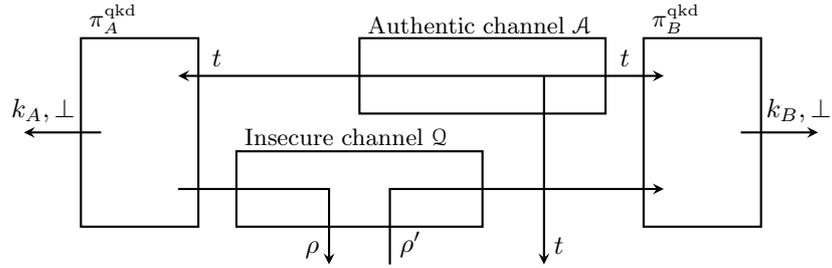
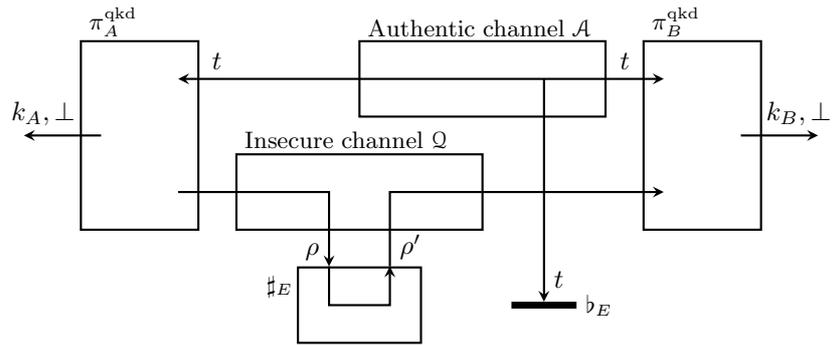
\begin{figure}[htbp]
\begin{subfigure}[b]{\textwidth}
\begin{centering}

\begin{tikzpicture}\small

\def\t{5.222} 
\def\u{3.7} 
\def\v{.75}
\def\w{.809}

\node[pnode] (a1) at (-\u,\v) {};
\node[pnode] (a2) at (-\u,0) {};
\node[pnode] (a3) at (-\u,-\v) {};
\node[protocol] (a) at (-\u,0) {};
\node[yshift=-2,above right] at (a.north west) {\footnotesize
  $\pi^{\qkd}_A$};
\node (alice) at (-\t,0) {};

\node[pnode] (b1) at (\u,\v) {};
\node[pnode] (b2) at (\u,0) {};
\node[pnode] (b3) at (\u,-\v) {};
\node[protocol] (b) at (\u,0) {};
\node[yshift=-2,above right] at (b.north west) {\footnotesize $\pi^{\qkd}_B$};
\node (bob) at (\t,0) {};

\node[thinResource] (cch) at (\w,\v) {};
\node[yshift=-2,above right] at (cch.north west) {\footnotesize
  Authentic channel  $\aA$};
\node[thinResource] (qch) at (-\w,-\v) {};
\node[yshift=-1.5,above right] at (qch.north west) {\footnotesize
  Insecure channel $\aQ$};
\node (eveq1) at (-\w-.4,-1.75) {};
\node (junc1) at (eveq1 |- a3) {};
\node (eveq2) at (-\w+.4,-1.75) {};
\node (junc2) at (eveq2 |- a3) {};
\node (evec) at (\w+\w,-1.75) {};
\node (junc3) at (evec |- b1) {};

\draw[sArrow,<->] (a1) to node[auto,pos=.08] {$t$} node[auto,pos=.92] {$t$}  (b1);
\draw[sArrow] (junc3.center) to node[auto,pos=.9] {$t$} (evec.center);

\draw[sArrow] (a2) to node[auto,pos=.75,swap] {$k_{A},\bot$} (alice.center);
\draw[sArrow] (b2) to node[auto,pos=.75] {$k_{B},\bot$} (bob.center);

\draw[sArrow] (a3) to (junc1.center) to node[pos=.8,auto,swap] {$\rho$} (eveq1.center);
\draw[sArrow] (eveq2.center) to node[pos=.264,auto,swap] {$\rho'$} (junc2.center) to (b3);

\end{tikzpicture}

\end{centering}
\caption[With adversary]{\label{fig:qkd.real.adv}When Eve is present, her interface
  gives her complete controle of the insecure channel and allows her
  to read the messages on the authentic channel.}
\end{subfigure}

\vspace{12pt}

\begin{subfigure}[b]{\textwidth}
\begin{centering}

\begin{tikzpicture}\small

\def\t{5.222} 
\def\u{3.7} 
\def\v{.75}
\def\w{.809}

\node[pnode] (a1) at (-\u,\v) {};
\node[pnode] (a2) at (-\u,0) {};
\node[pnode] (a3) at (-\u,-\v) {};
\node[protocol] (a) at (-\u,0) {};
\node[yshift=-2,above right] at (a.north west) {\footnotesize
  $\pi^{\qkd}_A$};
\node (alice) at (-\t,0) {};

\node[pnode] (b1) at (\u,\v) {};
\node[pnode] (b2) at (\u,0) {};
\node[pnode] (b3) at (\u,-\v) {};
\node[protocol] (b) at (\u,0) {};
\node[yshift=-2,above right] at (b.north west) {\footnotesize $\pi^{\qkd}_B$};
\node (bob) at (\t,0) {};

\node[thinResource] (cch) at (\w,\v) {};
\node[yshift=-2,above right] at (cch.north west) {\footnotesize
  Authentic channel $\aA$};
\node[thinResource] (qch) at (-\w,-\v) {};
\node[yshift=-1.5,above right] at (qch.north west) {\footnotesize
  Insecure channel $\aQ$};

\node[filter] (qchf) at (-\w,-3*\v) {};
\node[xshift=2,below left] at (qchf.north west) {\footnotesize $\sharp_E$};
\node[lineFilter] (cchf) at (2*\w,-3*\v) {};
\node[right] at (cchf.east) {\footnotesize $\flat_E$};

\node[xshift=-.4cm] (qchfl) at (qchf.center) {};
\node[xshift=.4cm] (qchfr) at (qchf.center) {};
\node (qchl) at (qchfl |- qchf.north) {};
\node (qchr) at (qchfr |- qchf.north) {};
\node (junc1) at (qchfl |- a3) {};
\node (junc2) at (qchfr |- a3) {};

\node (junc3) at (cchf.center |- b1) {};

\draw[sArrow,<->] (a1) to node[auto,pos=.08] {$t$} node[auto,pos=.92] {$t$}  (b1);
\draw[sArrow] (junc3.center) to node[auto,pos=.9] {$t$} (cchf);

\draw[sArrow] (a2) to node[auto,pos=.75,swap] {$k_{A},\bot$} (alice.center);
\draw[sArrow] (b2) to node[auto,pos=.75] {$k_{B},\bot$} (bob.center);

\draw[sArrow] (a3) to (junc1.center) to node[pos=.8,auto,swap] {$\rho$} (qchl.center);
\draw[sArrow] (qchr.center) to node[pos=.264,auto,swap] {$\rho'$} (junc2.center) to (b3);

\draw[sArrow] (qchl.center) to (qchfl.center) to (qchfr.center) to (qchr.center);

\end{tikzpicture}

\end{centering}
\caption[Without adversary]{\label{fig:qkd.real.filter}When no eavesdropper is present,
  filters forward Alice's quantum messages to Bob and block the
  authentic channel's output at the $E$\=/interface. The filter
  $\sharp_E$ might produce non\-/malicious noise that modifies $\rho$
  and models a (honest) noisy channel.}
\end{subfigure}

\caption[QKD system]{\label{fig:qkd.real}The real QKD system \---
  Alice has access to the left interface, Bob to the right interface
  and Eve to the lower interface \--- consists of the protocol
  $(\pi^{\qkd}_A,\pi^{\qkd}_B)$, the insecure quantum channel $\aQ$
  and two-way authentic classical channel $\aA$. Alice and Bob abort
  if the insecure channel is too noisy, i.e., if $\rho'$ is not
  similar enough to $\rho$ to obtain a secret key of the desired
  length. They run the classical post\-/processing over the authentic
  channel, obtaining keys $k_A$ and $k_B$. The message $t$ depicted on
  the two-way authentic channel represents the entire classical
  transcript of the classical post\-/processing.}
\end{figure}

\begin{rem}[Source of entanglement] \label{rem:entanglement} In this
  work we use an insecure quantum channel from Alice to Bob to
  construct the shared secret key resource. An alternative resource
  that is frequently used in QKD instead of this insecure channel, is
  a source of entangled states under the control of Eve. The source
  sends half of an entangled state to Alice and another half to
  Bob. It can be modeled similarly to the insecure channel depicted in
  \figref{fig:qkd.real}, but with the first arrow reversed: the states
  are sent from Eve to Alice and from Eve to Bob. \end{rem}

\subsection{Security}
\label{sec:qkd.security}

Let $(\pi^\qkd_A,\pi^\qkd_B)$ be the QKD protocol. Let $\aQ$ and $\aA$
be the insecure quantum channel and authentic classical channel,
respectively, with their filters $\sharp_E$ and $\flat_E$. Let $\aK$
denote the secret key resource of \figref{fig:qkd.resource.switch} and
let $\lozenge_E$ be its filter. Applying \defref{def:security}, we
find that $(\pi_A^{\qkd},\pi_B^{\qkd})$ constructs $\aK_\lozenge$ from
$\aQ_\sharp$ and $\aA_\flat$ within $\eps$ if
\begin{align}
  \pi_A^{\qkd}\pi_B^{\qkd}(\aQ \| \aA) (\sharp_E \| \flat_E )&
  \close{\eps} \aK
  \lozenge_E \label{eq:qkd.robust} \\
  \intertext{and} \exists \sigma_E, \quad \pi_A^{\qkd}\pi_B^{\qkd}(\aQ
  \| \aA) & \close{\eps} \aK \sigma_E \ . \label{eq:qkd.security}
\end{align}
The left- and right-hand sides of \eqnref{eq:qkd.robust} are
illustrated in Figures~\ref{fig:qkd.real.filter} and
\ref{fig:qkd.resource.filter}, and the left- and right-hand sides of
\eqnref{eq:qkd.security} are illustrated in
Figures~\ref{fig:qkd.real.adv} and \ref{fig:qkd.resource.sim}. These
two conditions are decomposed into simpler criteria in
\secref{sec:security}.

\section{Security reduction}
\label{sec:security}

By applying the general AC security definition to QKD, we obtained two
criteria, \eqnsref{eq:qkd.robust} and \eqref{eq:qkd.security},
capturing availability and security, respectively. In this section we
derive \eqnref{eq:d}, the trace distance criterion discussed in the
introduction, from \eqnref{eq:qkd.security}. We first show in
\secref{sec:security.dist} that the distinguishing advantage used in
the previous sections reduces to the trace distance between the
quantum states gathered by the distinguisher interacting with the real
and ideal systems. Then in \secref{sec:security.simulator}, we fix the
simulator $\sigma_E$ from the ideal system. In
\secref{sec:security.simple} we decompose the resulting security
criterion into a combination of \emph{secrecy} \--- \eqnref{eq:d} \---
and \emph{correctness} \--- the probability that Alice's and Bob's
keys differ. In the last section, \ref{sec:security.rob}, we consider
the security condition of \eqnref{eq:qkd.robust}, which captures
whether, in the absence of a malicious adversary, the protocol behaves
as specified by the ideal resource and corresponding filter.  We show
how this condition can be used to model the \emph{robustness} of the
protocol \--- the probability that the protocol aborts with
non\-/malicious noise.

\subsection{Trace distance}
\label{sec:security.dist}

The security criteria given in \eqnsref{eq:qkd.robust} and
\eqref{eq:qkd.security} are defined in terms of the distinguishing
advantage between resources. To simplify these equations, we rewrite
them in terms of the trace distance, $D(\cdot,\cdot)$. A formal
definition of this metric is given
in \appendixref{app:op.definitions}, along with a discussion of how to
interpret it in the rest of \appendixref{app:op}. We start with the
simpler case of \eqnref{eq:qkd.robust} in the next paragraph, then
deal with \eqnref{eq:qkd.security} after that.

The two resources on the left- and right-hand sides of
\eqnref{eq:qkd.robust} simply output classical strings (a key or error
message) at Alice and Bob's interfaces. Let these pairs of strings be
given by the joint probability distributions $P_{AB}$ and
$\tilde{P}_{AB}$. The distinguishing advantage between these systems
is thus simply the distinguishing advantage between these probability
distributions \--- a distinguisher is given a pair of strings sampled
according to either $P_{AB}$ or $\tilde{P}_{AB}$ and has to guess from
which distribution it was sampled \--- i.e., \[ d\left(
  \pi_A^{\qkd}\pi_B^{\qkd}(\aQ \| \aA) (\sharp_E \| \flat_E ),
  \aK\lozenge_E \right) = d(P_{AB},\tilde{P}_{AB}) \ . \] The
distinguishing advantage between two probability distributions is
equal to their total variation distance\footnote{The total variation
  distance between two probability distributions is equivalent to the
  trace distance between the corresponding (diagonal) quantum
  states. We use the same notation for both metrics, $D(\cdot,\cdot)$,
  since the former is a special case of the latter.} \--- which we
prove in in \appendixref{app:op.distadv} \--- i.e.,
$d(P_{AB},\tilde{P}_{AB})= D(P_{AB},\tilde{P}_{AB})$.  Putting the two
together we get
\[ d\left( \pi_A^{\qkd}\pi_B^{\qkd}(\aQ \| \aA) (\sharp_E \| \flat_E )
  , \aK\lozenge_E \right) = D(P_{AB},\tilde{P}_{AB}) \ ,\] where
$P_{AB}$ and $\tilde{P}_{AB}$ are the distributions of the strings
output by the real and ideal systems, respectively.

The resources on the left- and right-hand sides of
\eqnref{eq:qkd.security} are slightly more complex. They first output
a state $\varphi_C$ at the $E$\=/interface, namely the quantum states
prepared by Alice, which she sends on the insecure quantum
channel. Without loss of generality, the distinguisher now applies any
map $\cE : \lo{C} \to \lo{CE'}$ allowed by quantum physics to this
state, obtaining $\rho_{CE'} = \cE(\varphi_C)$ and puts the $C$
register back on the insecure channel for Bob, keeping the part in
$E'$. Finally, the systems output some keys (or error messages) at the
$A$ and $B$\=/interfaces, and a transcript of the post\-/processing at
the $E$\=/interface. Let $\rho^{\cE}_{ABE}$ denote the tripartite
state held by a distinguisher interacting with the real system, and
let $\tilde{\rho}^{\cE}_{ABE}$ denote the state held after interacting
with the ideal system, where the registers $A$ and $B$ contain the
final keys or error messages, and the register $E$ holds both the state
$\rho_{E'}$ obtained from tampering with the quantum channel and the
post-processing transcript. Distinguishing between these two systems
thus reduces to maximizing over the distinguisher strategies (the
choice of $\cE$) and distinguishing between the resulting states,
$\rho^{\cE}_{ABE}$ and $\tilde{\rho}^{\cE}_{ABE}$:
\[ d\left( \pi_A^{\qkd}\pi_B^{\qkd}(\aQ \| \aA) , \aK \sigma_E \right)
= \max_{\cE} d\left( \rho^{\cE}_{ABE},\tilde{\rho}^{\cE}_{ABE} \right)
\ . \] The advantage a distinguisher has in guessing whether it holds
the state $\rho^{\cE}_{ABE}$ or $\tilde{\rho}^{\cE}_{ABE}$ is given by
the trace distance between these states, i.e., \[d\left(
  \rho^{\cE}_{ABE},\tilde{\rho}^{\cE}_{ABE} \right) =
D\left(\rho^{\cE}_{ABE},\tilde{\rho}^{\cE}_{ABE} \right) \ . \] This
was first proven by Helstrom~\cite{Hel76}. For completeness, we
provide a proof in \appendixref{app:op.distadv},
\thmref{thm:op.distinguishing}.

The distinguishing advantage between the real and ideal systems of
\eqnref{eq:qkd.security} thus reduces to the trace distance between
the quantum states gathered by the distinguisher. In the following, we
usually omit ${\cE}$ where it is clear that we are maximizing over the
distinguisher strategies, and simply express the security criterion
as \begin{equation} \label{eq:qkd.security.1}
  D(\rho_{ABE},\tilde{\rho}_{ABE}) \leq \eps \ , \end{equation} where
$\rho_{ABE}$ and $\tilde{\rho}_{ABE}$ are the quantum states gathered
by the distinguisher interacting with the real and ideal systems,
respectively.

\subsection{Simulator}
\label{sec:security.simulator}

In the real setting (\figref{fig:qkd.real.adv}), Eve has full control
over the quantum channel and obtains the entire classical transcript
of the protocol. So for the real and ideal settings to be
indistinguishable, a simulator $\sigma^{\qkd}_E$ must generate the
same communication as in the real setting. This can be done by
internally running Alice's and Bob's protocol
$(\pi^{\qkd}_A,\pi^{\qkd}_B)$, producing the same messages at Eve's
interface as the real system. However, instead of letting this
(simulated) protocol decide the value of the key as in the real
setting, the simulator only checks whether they actually produce a key
or an error message, and presses the switch on the secret key resource
accordingly. We illustrate this in \figref{fig:qkd.ideal}.


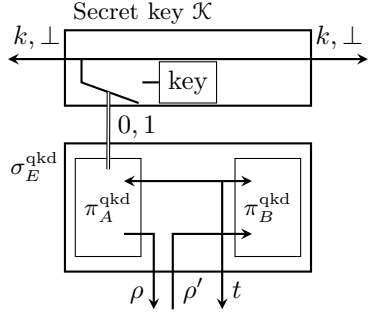
\begin{figure}[htb]
\begin{centering}

\begin{tikzpicture}\small

\def\t{2.368} 
\def\u{-1.85} 
\def\w{-3.2} 
\def\v{.118} 
\def\s{1.05}

\node[thinResource] (keyBox) at (0,0) {};
\node[draw] (key) at (0,\v/2-.25) {key};
\node (junc) at (-1.4,0 |- key.center) {};
\node[yshift=-1.5,above right] at (keyBox.north west) {\footnotesize
  Secret key $\aK$};
\node (alice) at (-\t,\v) {};
\node (bob) at (\t,\v) {};

\draw[sArrow,<->] (alice.center) to node[pos=.08,auto] {$k,\bot$} node[pos=.92,auto] {$k,\bot$} (bob.center);
\draw[thick] (junc.center |- 0,\v) to (junc.center) to node[pos=.472] (handle) {} +(160:-.8);
\draw[thick] (-.6,0 |- junc.center) to (key);

\node[simulator] (sim) at (0,\u) {};
\node[xshift=1.5,below left] at (sim.north west) {\footnotesize
  $\sigma^{\qkd}_E$};
\node[innersim,draw] (pleft) at (-\s,\u) {\footnotesize $\pi^{\qkd}_A$};
\node[innersim,draw] (pright) at (\s,\u) {\footnotesize $\pi^{\qkd}_B$};
\node[innersnode] (a1) at (-\s,\u+.35) {};
\node[innersnode] (a2) at (-\s,\u-.35) {};
\node[innersnode] (b1) at (\s,\u+.35) {};
\node[innersnode] (b2) at (\s,\u-.35) {};

\node (evel) at (-.45,\w) {};
\node (juncl) at (evel |- a2) {};
\node (evec) at (-.2,\w) {};
\node (juncc) at (evec |- a2) {};
\node (ever) at (.45,\w) {};
\node (juncr) at (ever |- a1) {};

\draw[double] (a1) to node[pos=.55,auto,swap] {$0,1$} (handle.center);
\draw[sArrow,<->] (a1) to (b1);
\draw[sArrow] (juncr.center) to node[pos=.852,auto] {$t$} (ever.center);
\draw[sArrow] (a2) to (juncl.center) to node[pos=.805,auto,swap] {$\rho$} (evel.center);
\draw[sArrow] (evec.center) to node[pos=.25,auto,swap] {$\rho'$} (juncc.center) to (b2);

\end{tikzpicture}

\end{centering}
\caption[Simulator for QKD]{\label{fig:qkd.ideal}The ideal QKD system \--- Alice
  has access to the left interface, Bob to the right interface and Eve
  to the lower interface \--- consists of the ideal secret key
  resource and a simulator $\sigma^{\qkd}_E$.}
\end{figure}

The security criterion from \eqnref{eq:qkd.security.1} can now be
simplified by noting that with this simulator, the states of the ideal
and real systems are identical when no key is produced. The outputs at
Alice's and Bob's interfaces are classical, elements of the set
$\{\bot\} \cup \cK$, where $\bot$ symbolizes an error and $\cK$ is the
set of possible keys. The states of the real and ideal systems can be
written as
\begin{align*}  \rho_{ABE} & = p^\bot \proj{\bot_A,\bot_B} \tensor
  \rho^\bot_E \\
  & \qquad \qquad + \sum_{k_A,k_B \in \cK} p_{k_A,k_B} \proj{k_A,k_B} \tensor
  \rho^{k_A,k_B}_E \ ,\\
  \tilde{\rho}_{ABE} & = p^\bot \proj{\bot_A,\bot_B}
  \tensor \rho^\bot_E \\ & \qquad \qquad +\frac{1}{|\cK|} \sum_{k \in \cK} \proj{k,k}
  \tensor \sum_{k_A,k_B \in \cK} p_{k_A,k_B} \rho^{k_A,k_B}_E \ .
\end{align*}

Plugging these in \eqnref{eq:qkd.security.1} we get
\begin{equation} \label{eq:qkd.security.2} D\left(
    \rho_{ABE},\tilde{\rho}_{ABE}\right) = (1- p^{\bot})
  D\left(\rho^{\top}_{ABE},\tau_{AB} \tensor \rho^{\top}_{E}\right)
  \leq \eps \ , \end{equation}
where \begin{equation} \label{eq:qkd.security.tmp} \rho^{\top}_{ABE}
  \coloneqq \frac{1}{1- p^{\bot}} \sum_{k_A,k_B \in \cK} p_{k_A,k_B}
  \proj{k_A,k_B} \tensor \rho^{k_A,k_B}_E \end{equation} is the
renormalized state of the system conditioned on not aborting and
$\tau_{AB} \coloneqq \frac{1}{|\cK|} \sum_{k \in \cK} \proj{k,k}$ is a
perfectly uniform shared key.

\subsection{Correctness \& secrecy}
\label{sec:security.simple}

We now break \eqnref{eq:qkd.security.2} down into two components,
often referred to as \emph{correctness} and \emph{secrecy}, and
recover the security definition for QKD introduced in
\cite{RK05,BHLMO05,Ren05}. The correctness of a QKD protocol refers to
the probability that Alice and Bob end up holding different keys. We
say that a protocol is \emph{$\eps_{\corr}$\=/correct} if for all
adversarial strategies,
\begin{equation}
  \label{eq:qkd.cor}
  \Pr \left[ K_A \neq K_B \right] \leq \eps_{\corr} \ ,
\end{equation}
where $K_A$ and $K_B$ are random variables over the alphabet $\cK \cup
\{\bot\}$ describing Alice's and Bob's outputs.\footnote{This can
  equivalently be written as $(1-p^\bot)\Pr \left[ K^\top_A \neq
    K^\top_B \right] \leq \eps_{\corr}$, where $p^\bot$ is the
  probability of aborting and $K^\top_A$ and $K^\top_B$ are Alice and
  Bob's keys conditioned on not aborting.} The secrecy of a QKD
protocol measures how close the final key is to a distribution that is
uniform and independent of the adversary's system. Let $p^\bot$ be the
probability that the protocol aborts, and $\rho^\top_{AE}$ be the
resulting state of the $AE$ subsystems conditioned on not aborting. A
protocol is \emph{$\eps_{\secr}$\=/secret} 
if for all adversarial strategies,
\begin{equation}
  \label{eq:qkd.sec}
  (1-p^\bot) D\left(\rho^\top_{AE},\tau_A \tensor \rho^{\top}_E\right)
  \leq \eps_{\secr} \ ,
\end{equation}
where the distance $D(\cdot,\cdot)$ is the trace distance and $\tau_A$
is the fully mixed state.\footnote{\eqnref{eq:qkd.sec} is a
  reformulation of \eqnref{eq:d}.}

\begin{thm}
  \label{thm:qkd}
  If a QKD protocol is $\eps_{\corr}$\=/correct and
  $\eps_{\secr}$\=/secret, then \eqnref{eq:qkd.security} is satisfied
  for $\eps = \eps_{\corr} + \eps_{\secr}$.
\end{thm}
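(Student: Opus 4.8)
The plan is to reduce \eqnref{eq:qkd.security} to \eqnref{eq:qkd.security.2} (which was already established in \secref{sec:security.simulator} once the simulator $\sigma^{\qkd}_E$ is fixed), and then show that \eqnref{eq:qkd.security.2} is implied by the combination of $\eps_{\corr}$\=/correctness and $\eps_{\secr}$\=/secrecy. Concretely, it suffices to bound
\[
 (1-p^\bot) D\left(\rho^{\top}_{ABE},\tau_{AB}\tensor\rho^{\top}_E\right) \leq \eps_{\corr} + \eps_{\secr} \ .
\]
The natural tool is the triangle inequality for the trace distance: I would insert an intermediate state in which Alice's and Bob's registers have been made to agree but the key is not yet uniform. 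A good choice is $\sigma^{\top}_{ABE}$, obtained from $\rho^{\top}_{ABE}$ by applying the CPTP map that copies Alice's $A$\=/register onto the $B$\=/register (overwriting it); equivalently, $\sigma^{\top}_{ABE} := \frac{1}{1-p^\bot}\sum_{k_A,k_B} p_{k_A,k_B}\proj{k_A,k_A}\tensor\rho^{k_A,k_B}_E$. Then
\[
 D\left(\rho^{\top}_{ABE},\tau_{AB}\tensor\rho^{\top}_E\right)
 \leq D\left(\rho^{\top}_{ABE},\sigma^{\top}_{ABE}\right)
 + D\left(\sigma^{\top}_{ABE},\tau_{AB}\tensor\rho^{\top}_E\right) \ .
\]

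For the first term, the copy map acts as the identity on the subspace where $k_A=k_B$ and only moves weight on the $k_A\neq k_B$ part, so by the contractivity of the trace distance under CPTP maps (and a direct estimate on the ``mismatch'' part of the state) one gets $(1-p^\bot)D(\rho^{\top}_{ABE},\sigma^{\top}_{ABE}) \leq (1-p^\bot)\Pr[K^\top_A\neq K^\top_B] = \Pr[K_A\neq K_B] \leq \eps_{\corr}$, using the footnote reformulation of \eqnref{eq:qkd.cor}. For the second term, note that $\sigma^{\top}_{ABE}$ has both classical registers equal and its $AE$\=/marginal is exactly $\rho^{\top}_{AE}$; applying the isometry that adjoins a copy of $A$ into the $B$ slot to $\rho^{\top}_{AE}$ turns $\tau_A\tensor\rho^{\top}_E$ into $\tau_{AB}\tensor\rho^{\top}_E$ and $\rho^{\top}_{AE}$ into $\sigma^{\top}_{ABE}$. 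Since the trace distance is invariant under the application of an isometry, $D(\sigma^{\top}_{ABE},\tau_{AB}\tensor\rho^{\top}_E) = D(\rho^{\top}_{AE},\tau_A\tensor\rho^{\top}_E)$, so multiplying by $(1-p^\bot)$ and invoking \eqnref{eq:qkd.sec} bounds this term by $\eps_{\secr}$. Combining the two bounds yields \eqnref{eq:qkd.security.2} with $\eps = \eps_{\corr}+\eps_{\secr}$, which by the reduction of \secref{sec:security.dist} and \secref{sec:security.simulator} is exactly \eqnref{eq:qkd.security}.

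I expect the only mildly delicate point to be the first term: one must argue carefully that overwriting $B$ by (a copy of) $A$ changes the state by at most the mismatch probability in trace distance. The cleanest way is to decompose $\rho^{\top}_{ABE} = (1-q)\,\omega^{=}_{ABE} + q\,\omega^{\neq}_{ABE}$ according to whether the recorded $(k_A,k_B)$ satisfy $k_A=k_B$, where $q=\Pr[K^\top_A\neq K^\top_B]$; the copy map fixes $\omega^{=}_{ABE}$ and sends $\omega^{\neq}_{ABE}$ to some state $\omega'_{ABE}$, so $D(\rho^{\top}_{ABE},\sigma^{\top}_{ABE}) = q\,D(\omega^{\neq}_{ABE},\omega'_{ABE}) \leq q$, since any trace distance is at most $1$. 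Everything else is a direct application of the triangle inequality, isometric invariance, and the already-stated identities from \eqnsref{eq:qkd.security.1}--\eqref{eq:qkd.security.tmp}.
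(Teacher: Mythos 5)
Your proposal is correct and follows essentially the same route as the paper: the intermediate state $\sigma^{\top}_{ABE}$ you insert is exactly the paper's $\gamma_{ABE}$, the triangle inequality splits the distance into the same correctness and secrecy terms, and the second term is handled by the same isometric\-/invariance argument. The only cosmetic difference is that you bound the first term by grouping the state into its $k_A=k_B$ and $k_A\neq k_B$ blocks, whereas the paper sums the (trivially computed) trace distances over the classical mixture \--- both give $\Pr[K_A\neq K_B]/(1-p^\bot)$.
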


\begin{proof}
  Let us define $\gamma_{ABE}$ to be a state obtained from
  $\rho^{\top}_{ABE}$ (\eqnref{eq:qkd.security.tmp}) by throwing away
  the $B$ system and replacing it with a copy of $A$, i.e., \[
  \gamma_{ABE} = \frac{1}{1-p^\bot} \sum_{k_A,k_B \in \cK} p_{k_A,k_B}
  \proj{k_A,k_A} \tensor \rho^{k_A,k_B}_E \ .\] From the triangle
  inequality we get \[ D(\rho^\top_{ABE},\tau_{AB} \tensor
  \rho^\top_{E}) \leq D(\rho^\top_{ABE},\gamma_{ABE}) +
  D(\gamma_{ABE},\tau_{AB} \tensor \rho^\top_{E}) \ .\]

Since in the states $\gamma_{ABE}$ and $\tau_{AB} \tensor
\rho^\top_{E}$ the $B$ system is a copy of the $A$ system, it does not
modify the distance. Furthermore, $\trace[B]{\gamma_{ABE}} =
\trace[B]{\rho^{\top}_{ABE}}$. Hence \[D(\gamma_{ABE},\tau_{AB}
\tensor \rho^\top_{E}) = D(\gamma_{AE},\tau_{A} \tensor \rho^\top_{E})
= D(\rho^\top_{AE},\tau_{A} \tensor \rho^\top_{E})  \ .\]

For the other term note that
\begin{align*}
  & D(\rho^\top_{ABE},\gamma_{ABE}) \\
  & \qquad \leq \sum_{k_A,k_B} \frac{p_{k_A,k_B}}{1-p^{\bot}}
  D\left(\proj{k_A,k_B} \tensor \rho^{k_A,k_B}_E,\proj{k_A,k_A} \tensor \rho^{k_A,k_B}_E \right)\\
  & \qquad = \sum_{k_A \neq k_B} \frac{p_{k_A,k_B}}{1-p^{\bot}} = \frac{1}{1-p^{\bot}}\Pr
  \left[ K_A \neq K_B \right] \ .
\end{align*}
Putting the above together with \eqnref{eq:qkd.security.2}, we get
\begin{align*} D(\rho_{ABE},\tilde{\rho}_{ABE}) & = (1-p^\bot)
  D(\rho^\top_{ABE},\tau_{AB} \tensor \rho^\bot_{E}) \\ & \leq \Pr
  \left[ K_A \neq K_B \right] + (1-p^\bot) D(\rho^\top_{AE},\tau_{A}
  \tensor \rho^\top_{E}) \ . \qedhere \end{align*}
\end{proof}

\begin{rem}[Tightness of the security criteria] In \thmref{thm:qkd} we
  prove a bound on the second security condition of
  \defref{def:security} for QKD in terms of the correctness and
  secrecy of the protocol. The converse can also be shown: if
  \eqnref{eq:qkd.security} holds for some $\eps$, then the
  corresponding QKD protocol is both $\eps$\=/correct and
  $2\eps$\=/secret.\footnote{The factor $2$ is a result of the
    \emph{existence} of the simulator $\sigma_E$ in the security
    definition. We cannot exclude that for some specific QKD protocol
    there exists a different simulator $\bar{\sigma}^\qkd_E$ \---
    different from the one used in this proof \--- generating a state
    $\bar{\rho}_E$ when interacting with the distinguisher, such that
    $D\left(\rho^{\top}_{AE},\tau_{A} \tensor
      \bar{\rho}^{\top}_E\right) \leq D\left(\rho^{\top}_{AE},\tau_{A}
      \tensor \rho^{\top}_E\right)$.  However, by the triangle
    inequality we also have that for any $\bar{\rho}_E$,
    $D\left(\rho^{\top}_{AE},\tau_{A} \tensor
      \bar{\rho}^{\top}_E\right) \geq \frac{1}{2}
    D\left(\rho^{\top}_{AE},\tau_{A} \tensor
      \rho^{\top}_E\right)$. Hence the failure $\eps$ of the generic
    simulator used in this proof is at most twice larger than
    optimal.}\end{rem}

\subsection{Robustness}
\label{sec:security.rob}

So far in this section we have discussed the security of a QKD
protocol with respect to a malicious Eve using the second condition
from \defref{def:security} (\eqnref{eq:qkd.security}). A QKD protocol
which always aborts without producing any key trivially satisfies
\eqnref{eq:qkd.security} with $\eps=0$, but is not a useful protocol
at all! The real system must not only be indistinguishable from ideal
when an adversary is present, but also when the adversarial interfaces
are covered by filters emulating honest behavior. This is modeled by
the first condition from \defref{def:security}, namely
\eqnref{eq:qkd.robust} for QKD. If no adversary is tampering with the
quantum channel \--- only natural non\-/malicious noise is present
\--- we expect a secret key to be generated with high
probability. This can be captured by designing the filter $\lozenge_E$
to allow a key to be produced with high probability: if the real
system does not generate a key with the same probability, this
immediately results in a gap noticeable by the distinguisher.

The probability of a key being generated depends on the noise
introduced by the filter $\sharp_E$ covering the adversarial interface
of the insecure quantum channel $\aQ$ in the real system (illustrated
in \figref{fig:qkd.real.filter}). Suppose that this noise is
parametrized by a value $q$, e.g., a depolarizing channel with
probability $q$. For every $q$, the protocol has a probability of
aborting, $\delta$, which is called the \emph{robustness}. Let
$\sharp^q_E$ denote a filter of the channel $\aQ$ that models this
noise, and let $\lozenge^\delta_E$ denote the filter of the ideal key
resource $\aK$, which flips the switch to prevent a key from being
generated with corresponding probability
$\delta$. \eqnref{eq:qkd.robust} thus becomes
\begin{equation} \label{eq:robustness} \pi_A^{\qkd}\pi_B^{\qkd}(\aQ \|
  \aA) (\sharp^q_E \| \flat_E ) \close{\eps} \aK \lozenge^\delta_E \
  ,\end{equation} where varying $q$ and $\delta$ results in a family
of real and ideal systems.

We now prove that in this case the failure $\eps$ from
\eqnref{eq:robustness} is bounded by $\eps_{\corr}+\eps_{\secr}$. Note
that this statement is only useful if the probability of aborting,
$\delta$, is small for reasonable noise models $q$.

\begin{lem} \label{lem:robustness}
If the filters from \eqnref{eq:robustness} are parametrized such that
$\lozenge^\delta_E$ aborts with exactly the same probability as the
protocol $(\pi_A^{\qkd},\pi_B^{\qkd})$ run on the noisy channel
$\aQ\sharp^q_E$, then the availability of the protocol is bounded by
the security, i.e., 
\[ d\left( \pi_A^{\qkd}\pi_B^{\qkd}(\aQ \| \aA) (\sharp^q_E \| \flat_E
  ),\aK \lozenge^\delta_E\right) \leq d\left(
  \pi_A^{\qkd}\pi_B^{\qkd}(\aQ \| \aA),\aK \sigma^{\qkd}_E\right) \
,\] where the simulator $\sigma^{\qkd}_E$ is the one used in the
previous sections, introduced in \secref{sec:security.simulator},
\figref{fig:qkd.ideal}.
\end{lem}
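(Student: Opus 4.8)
The plan is to exhibit an explicit distinguisher strategy against the security condition \eqref{eq:qkd.security} that reproduces, as a special case, the distinguishing problem on the left-hand side of the availability condition \eqref{eq:robustness}. Concretely, recall that the security side compares the real system $\pi_A^{\qkd}\pi_B^{\qkd}(\aQ\|\aA)$ with $\aK\sigma^{\qkd}_E$, where the distinguisher controls the $E$-interface fully (it may apply any map $\cE$ to the quantum register on $\aQ$ and receives the transcript). The availability side compares $\pi_A^{\qkd}\pi_B^{\qkd}(\aQ\|\aA)(\sharp^q_E\|\flat_E)$ with $\aK\lozenge^\delta_E$, where Eve's interface is sealed off by the honest filters. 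First I would observe that the availability systems are obtained from the security systems by attaching a fixed converter at the $E$-interface: the converter $(\sharp^q_E\|\flat_E)$ on the real side, and some converter $\theta_E$ on the ideal side that combines $\sigma^{\qkd}_E$ with the honest filtering so that the net effect is $\aK\lozenge^\delta_E$.

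Second, I would verify that such a $\theta_E$ exists with the right behaviour. On the real side the filter $\sharp^q_E$ feeds honest $q$-noise into the quantum channel and $\flat_E$ blocks the authentic-channel leakage; the protocol then aborts with exactly probability $\delta$ by hypothesis. On the ideal side, plugging the honest filtering into the simulator $\sigma^{\qkd}_E$ of \figref{fig:qkd.ideal}: the simulator internally runs $\pi^{\qkd}_A,\pi^{\qkd}_B$, but now on its own internally-generated $q$-noisy channel instead of on distinguisher-supplied states (since its $E$-interface is sealed), checks whether this internal run aborts, and presses the switch on $\aK$ accordingly. By the hypothesis of the lemma, this internal run aborts with probability exactly $\delta$ — the same $\delta$ used to define $\lozenge^\delta_E$ — and when it does not abort, $\aK$ outputs a uniform key $k,k$ to both Alice and Bob, which is precisely the output distribution of $\aK\lozenge^\delta_E$. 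Hence the composed ideal converter is (input–output) equivalent to $\lozenge^\delta_E$, so $\aK\sigma^{\qkd}_E$ with honest filtering equals $\aK\lozenge^\delta_E$ as a resource.

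Third, having written the two availability resources as $\alpha_E\big(\pi_A^{\qkd}\pi_B^{\qkd}(\aQ\|\aA)\big)$ and $\alpha_E\big(\aK\sigma^{\qkd}_E\big)$ for the single converter $\alpha_E := (\sharp^q_E\|\flat_E)$ (note that on the ideal side $\alpha_E$ acting on $\aK\sigma^{\qkd}_E$ yields $\aK\lozenge^\delta_E$ by the previous paragraph), the conclusion is immediate from the non-increase of the pseudo-metric under composition with a converter, \eqref{eq:axioms.nonincrease}:
\[
  d\!\left(\alpha_E\big(\pi_A^{\qkd}\pi_B^{\qkd}(\aQ\|\aA)\big),\alpha_E\big(\aK\sigma^{\qkd}_E\big)\right)
  \leq d\!\left(\pi_A^{\qkd}\pi_B^{\qkd}(\aQ\|\aA),\aK\sigma^{\qkd}_E\right)\ .
\]
The left-hand side is exactly $d\big(\pi_A^{\qkd}\pi_B^{\qkd}(\aQ\|\aA)(\sharp^q_E\|\flat_E),\aK\lozenge^\delta_E\big)$, giving the claim.

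The main obstacle is the second step — making precise that the honestly-filtered simulator really does reproduce $\lozenge^\delta_E$, i.e., that the abort probability of the simulator's internal run of $(\pi^{\qkd}_A,\pi^{\qkd}_B)$ on $\aQ\sharp^q_E$ matches the $\delta$ appearing in $\lozenge^\delta_E$, and that conditioned on not aborting the outputs at $A$ and $B$ coincide in distribution. The first part is exactly the parametrization hypothesis of the lemma, so it reduces to checking that sealing the simulator's $E$-interface forces it to use precisely the channel $\aQ\sharp^q_E$ internally (rather than some different effective channel) — a matter of tracing through the wiring in \figref{fig:qkd.ideal} together with the filter semantics. The first and third steps are then purely formal applications of the composition axioms.
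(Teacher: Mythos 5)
Your proposal is correct and takes essentially the same route as the paper's proof: the paper likewise substitutes $\sigma^{\qkd}_E(\sharp^q_E \| \flat_E)$ for $\lozenge^\delta_E$ \--- justified by the parametrization hypothesis and the fact that the simulator internally runs $(\pi^{\qkd}_A,\pi^{\qkd}_B)$ \--- and then concludes by the non\-/increase of the metric under composition with the converter $\sharp^q_E \| \flat_E$ (\eqnref{eq:axioms.nonincrease}). Your write\-/up is simply a more detailed elaboration of the two\-/sentence argument given in the paper.
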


\begin{proof}
  Since $\lozenge^\delta_E$ aborts with exactly the same probability
  as the real system and since $\sigma^{\qkd}_E$ simulates the real
  system, we can substitute $\sigma^{\qkd}_E(\sharp^q_E \| \flat_E )$
  for $\lozenge^\delta_E$. The result then follows, because the
  converter $\sharp^q_E \| \flat_E$ on both the real and ideal systems
  can only decrease their distance (\eqnref{eq:axioms.nonincrease}).
\end{proof}

\section{Examples of composition}
\label{sec:ex}

It is immediate from the AC framework~\cite{MR11} that the composition
of two protocols satisfying \defref{def:security} is still
secure.\footnote{See \appendixref{app:generic} for a proof
  sketch.} In this section we attempt to provide a better feeling for
protocol composition by illustrating it with several examples. We
compose QKD in series and in parallel, and show that \--- as a result
of the triangle inequality and the security of the individual
protocols \--- the corresponding composed real systems are
indistinguishable from the composed ideal systems.

In \secref{sec:ex.leak} we first look at a situation in which part of
the key is known to the adversary. In \secref{sec:ex.otp} we compose
QKD with a one-time pad. And in \secref{sec:ex.qkd} we compose two runs of
a QKD protocol in parallel. We provide a more extensive example of
protocol composition in \appendixref{app:ex.auth}, where we model
the security of authentication and compose it with QKD, resulting in a
key expansion protocol.

To simplify the examples, we only consider security in the presence of
an adversary and ignore the first condition from
\defref{def:security}. For the same reason, when writing up the
security condition with the trace distance, we hard-code the simulator
used in \secref{sec:security} in the security criterion. Furthermore,
as shown in \secref{sec:security.simulator}, conditioned on aborting,
the real and ideal systems of QKD are identical, so the security
criterion can be reduced to the case in which the QKD protocol
terminates with a shared key between Alice and Bob, which happens with
probability $1-\pabort$. With these simplifications, a QKD protocol is
$\eps$\=/secure if
\begin{equation} \label{eq:qkd.security.3} 
(1-\pabort) D\left(\rho_{ABE},\tau_{AB} \tensor \rho_{E}\right) \leq \eps \ ,
\end{equation}
where $\tau_{AB}$ is a perfect shared key and $\rho_{ABE}$ and
$\tau_{AB} \tensor \rho_E$ are the final states, conditioned on
producing a key, that the distinguisher holds after interacting with
the real and ideal systems, respectively.

\subsection{Partially known key}
\label{sec:ex.leak}

The accessible information given in \eqnref{eq:localqkd} is shown to
be insufficient to define security for a QKD protocol by considering a
setting in which part of the key $K$ is available to
Eve~\cite{KRBM07}. This allows her to guess the remaining bits of the
key, which would not have been possible had the key been distributed
using an ideal resource. We analyze exactly this setting here, and
argue that this does not affect the security of a QKD scheme that
satisfies \defref{def:security}.

To model this partial knowledge of the key, let Alice run a protocol
$\pi'_A$ that receives part of the secret key \--- generated either by a
QKD protocol or by an ideal resource \--- and sends it on a channel to
Eve. Plugging this in the real and ideal QKD systems from
Figures~\ref{fig:qkd.real.adv} and \ref{fig:qkd.ideal}, we get
\figref{fig:ex.leak}.

\begin{figure}[htbp]
\begin{subfigure}[b]{\textwidth}
\begin{centering}

\begin{tikzpicture}\small

\def\ta{6.067} 
\def\tb{4.022} 
\def\e{-2.5}
\def\u{2.5}
\def\uu{4.545} 
\def\v{.75}
\def\w{.9}

\node[pnode] (a1) at (-\u,\v) {};
\node[pnode] (a2) at (-\u,0) {};
\node[pnode] (a3) at (-\u,-\v) {};
\node[protocol] (a) at (-\u,0) {};
\node[yshift=-2,above right] at (a.north west) {\footnotesize
  $\pi^{\qkd}_A$};
\node (alice) at (-\ta,\v) {};

\node[pnode] (b1) at (\u,\v) {};
\node[pnode] (b2) at (\u,0) {};
\node[pnode] (b3) at (\u,-\v) {};
\node[protocol] (b) at (\u,0) {};
\node[yshift=-2,above right] at (b.north west) {\footnotesize $\pi^{\qkd}_B$};
\node (bob) at (\tb,0) {};

\node[pnode] (f1) at (-\uu,0) {};
\node[pnode] (f3) at (-\uu,-2*\v) {};
\node[protocol,gray] (f) at (-\uu,-\v) {};
\node[gray,xshift=2,below left] at (f.north west) {\footnotesize
  $\pi'_A$};

\node (eveq1) at (-\w,\e) {};
\node (junc1) at (eveq1 |- a3) {};
\node (eveq2) at (0,\e) {};
\node (junc2) at (eveq2 |- a3) {};
\node (evec) at (\w,\e) {};
\node (junc3) at (evec |- b1) {};
\node (evef) at (-2*\w,\e) {};
\node (junc0) at (evef |- f3) {};

\draw[sArrow,<->] (a1) to (b1);
\draw[sArrow] (junc3.center) to node[auto,pos=.9,swap] {$t$} (evec.center);

\draw[sArrow] (a2) to node[auto,pos=.5,swap] {$k^1_{A}$} (f1);
\draw[sArrow] (b2) to node[auto,pos=.6] {$k_{B}$} (bob.center);

\draw[sArrow] (a3) to (junc1.center) to node[pos=.85,auto,swap] {$\rho$} (eveq1.center);
\draw[sArrow] (eveq2.center) to node[pos=.19,auto] {$\rho'$}
(junc2.center) to (b3);

\draw[sArrow] (a1) to node[auto,pos=.5,swap] {$k^2_A$} (alice.center);
\draw[gray,sArrow] (f3) to (junc0.center) to node[pos=.7,auto,swap] {$k^1_A$} (evef.center);

\end{tikzpicture}

\end{centering}
\caption[QKD protocol]{\label{fig:ex.leak.real}The QKD protocol
  $(\pi_A^\qkd,\pi_B^\qkd)$ generates a pair of keys
  $(k_A,k_B)$. Alice then runs $\pi'_A$, which provides the first part
  of $k_A=k^1_A \| k^2_A$ to Eve. The drawing of the insecure quantum
  channel and authentic classical channel have been removed to
  simplify the figure.}
\end{subfigure}

\vspace{12pt}

\begin{subfigure}[b]{\textwidth}
\begin{centering}

\begin{tikzpicture}\small

\def\ta{4.913} 
\def\tb{2.368} 
\def\u{-1.85} 
\def\w{-3.95} 
\def\v{.118} 
\def\s{1.05}
\def\f{-3.5}
\def\fh{-1.85}

\node[thinResource] (keyBox) at (0,0) {};
\node[draw] (key) at (0,\v/2-.25) {key};
\node (junc) at (-1.4,0 |- key.center) {};
\node[yshift=-1.5,above right] at (keyBox.north west) {\footnotesize
  Secret key};
\node (alice) at (-\ta,\v) {};
\node (bob) at (\tb,\v) {};

\node[draw,thick,gray,minimum width=1.545cm,minimum height=3.2cm] (f) at (\f,\fh) {};
\node[pnode] (f1) at (\f,-.75) {};
\node[pnode] (f3) at (\f,-2.95) {};
\node[gray,xshift=2,below left] at (f.north west) {\footnotesize
  $\pi'_A$};

\draw[sArrow,<->] (alice.center) to node[pos=.2,auto] {$k^2$} node[pos=.94,auto] {$k$} (bob.center);
\draw[thick] (junc.center |- 0,\v) to (junc.center) to node[pos=.472] (handle) {} +(160:-.8);
\draw[thick] (-.6,0 |- junc.center) to (key);

\node[simulator] (sim) at (0,\u) {};
\node[xshift=-1.5,below right] at (sim.north east) {\footnotesize
  $\sigma^{\qkd}_E$};
\node[innersim] (pleft) at (-\s,\u) {};
\node[innersim] (pcenter) at (0,\u) {};
\node[innersim] (pright) at (\s,\u) {};

\node (evell) at (-2*\s,\w) {};
\node (juncll) at (evell |- f3) {};
\node (evel) at (-\s,\w) {};
\node (evec) at (0,\w) {};
\node (ever) at (\s,\w) {};

\node (junctt) at (evell |- alice) {};
\node (junct) at (evell |- f1) {};

\draw[double] (pleft) to node[pos=.55,auto,swap] {$0,1$} (handle.center);
\draw[sArrow] (pright) to node[pos=.8,auto] {$t$} (ever.center);
\draw[sArrow] (pleft) to node[pos=.805,auto] {$\rho$} (evel.center);
\draw[sArrow] (evec.center) to node[pos=.22,auto,swap] {$\rho'$} (pcenter);

\draw[sArrow] (junctt.center) to (junct.center) to node[pos=.4,auto,swap] {$k^1$} (f1);

\draw[gray,sArrow] (f3) to (juncll.center) to node[auto,pos=.7] {$k^1$} (evell.center);

\end{tikzpicture}

\end{centering}
\caption[Ideal key]{\label{fig:ex.leak.ideal}The ideal secret key resource
  generates a key $k=k^1\|k^2$, part of which is provided to Eve by $\pi'_A$. A
  simulator $\sigma^{\qkd}_E$ pads the ideal key resource to generate
  the same communication as in the real setting.}
\end{subfigure}

\caption[Example: partially known secret key]{\label{fig:ex.leak}Alice
  runs a protocol $\pi'_A$ which reveals the first half of her key to
  Eve. In each figure, Alice and Bob have access to the left and right
  interfaces, and Eve to the lower interface. If we remove the parts
  in gray we recover the real and ideal systems of QKD.}
\end{figure}
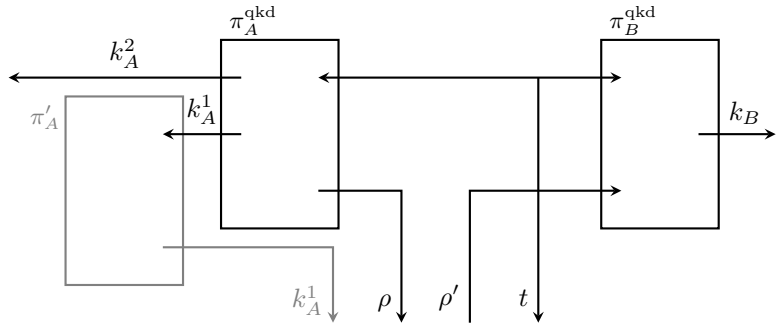
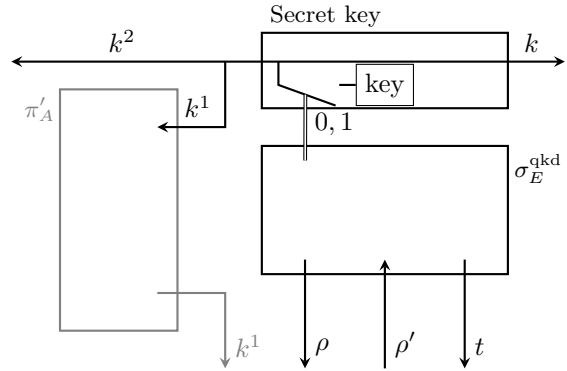

It is immediate from \figref{fig:ex.leak} that $\pi'_A$ cannot
increase the distance between the real and ideal systems and therefore
cannot compromise security: the systems in gray can be run internally by a
distinguisher attempting to guess whether it is interacting with the
real or ideal QKD system, so this case is already bounded by the
security of QKD.

This reasoning is summed up in the following equation, which can be
directly derived from \eqnref{eq:axioms.nonincrease}:
\[ \pi^{\qkd}_A \pi^{\qkd}_B \left( \aQ \| \aA \right) \close{\eps}
\aK \sigma^{\qkd}_E \implies \pi'_A \pi^{\qkd}_A \pi^{\qkd}_B \left(
  \aQ \| \aA \right) \close{\eps} \pi'_A \aK \sigma^{\qkd}_E \ . \]

The same can be obtained from the properties of the trace distance if
we write out explicitly the states gathered by the distinguisher. If
the QKD protocol is $\eps$\=/secure, we have from
\eqnref{eq:qkd.security.3} that
\[(1-\pabort) D(\rho_{ABE},\tau_{AB} \tensor \rho_{E}) \leq \eps,\]
where $\rho_{ABE}$ is the state gathered by a distinguisher
interacting with the real QKD system (\figref{fig:qkd.real.adv}) and
$\tau_{AB} \tensor \rho_{E}$ is the state gathered by interacting with
the ideal system (\figref{fig:qkd.ideal}), conditioned on the protocol
not aborting. A distinguisher interacting with either of the two
systems from \figref{fig:ex.leak} gets extra information at Eve's
interface, namely the first part of Alice's key $k^1$, and only the
second part of that key $k^2$ at Alice's interface. The complete
states gathered by interacting with \figref{fig:ex.leak.real} and
\figref{fig:ex.leak.ideal} are given by $\rho_{A'BE'} =
\rho_{A_2BA_1E}$ and $\tilde{\rho}_{A'BE'} = \tau_{A_2BA_1} \tensor
\rho_E$, respectively, where the orignal system $A=A_1A_2$ containing
Alice's key is split in two, $A'=A_2$ and $E'=A_1E$. These can be
obtained from $\rho_{ABE}$ and $\tau_{AB} \tensor \rho_E$ by a unitary
map which simply permutes the registers. Thus, the trace distance does
not increase. So we have
\begin{equation*} (1-\pabort) D\left(\rho_{A'BE'},\tilde{\rho}_{A'BE'}\right) =
  (1-\pabort) D\left(\rho_{ABE},\tau_{AB} \tensor \rho_E\right) \leq \eps \ .
\end{equation*}

If we analyze the same situation from the perspective of an adversary
that can access only the $E$\=/interface, composing QKD with a
protocol that reveals $k^1$ results in a net gain of information for
this adversary. But as shown above, for a distinguisher that also
receives the outputs of the honest players \--- the generated secret
keys \--- there is no gain. 

\subsection{Sequential composition of key distribution and one-time
  pad}
\label{sec:ex.otp}

If we compose a one-time pad (depicted in \figref{fig:otp.real}) and a
QKD protocol (depicted in \figref{fig:qkd.real.adv}), we obtain
\figref{fig:ex.otp.real}, where the secret key resource used by the
one-time pad is replaced by the QKD protocol. We showed in
\secref{sec:ac.otp} that a one-time pad constructs a secure channel
(\figref{fig:otp.ideal.resource}), which provides Eve with only one
functionality, learning the length of the message. However, this was
if the one-time pad protocol had access to a secret key resource with
a blank $E$\=/interface, as in \figref{fig:qkd.resource.simple}. In
reality, QKD constructs a resource that allows Eve to prevent a key
from being generated, as in \figref{fig:qkd.resource.switch}. It can
easily be shown that with access to this resource, a one-time pad
constructs a secure channel with two controls at Eve's interface: one
for preventing any message from being sent and a second for learning
the length of the message if she did not activate the first. This
resource is illustrated in \figref{fig:ex.otp.ideal}, along with the
appropriate simulator for constructing this resource with a one-time
pad and a QKD protocol: the combination of the two simulators used in
the individual proofs of the one-time pad (\figref{fig:otp.ideal}) and
QKD (\figref{fig:qkd.ideal}).

\begin{figure}[htbp]
\begin{subfigure}[b]{\textwidth}
\begin{centering}

\begin{tikzpicture}\small

\def\t{6.067} 
\def\e{-2.5}
\def\u{2.5}
\def\uu{4.545} 
\def\v{.75}
\def\w{.9}

\node[pnode] (a1) at (-\u,\v) {};
\node[pnode] (a2) at (-\u,0) {};
\node[pnode] (a3) at (-\u,-\v) {};
\node[protocol] (a) at (-\u,0) {};
\node[yshift=-2,above right] at (a.north west) {\footnotesize
  $\pi^{\qkd}_A$};
\node (alice) at (-\t,-\v) {};

\node[pnode] (b1) at (\u,\v) {};
\node[pnode] (b2) at (\u,0) {};
\node[pnode] (b3) at (\u,-\v) {};
\node[protocol] (b) at (\u,0) {};
\node[yshift=-2,above right] at (b.north west) {\footnotesize $\pi^{\qkd}_B$};
\node (bob) at (\t,-\v) {};

\node[pnode] (f1) at (-\uu,0) {};
\node[pnode] (f2) at (-\uu,-\v) {};
\node[pnode] (f3) at (-\uu,-2*\v) {};
\node[protocol,gray,text width=1.2cm] (f) at (-\uu,-\v) {\footnotesize $y
  =$\\$\ x_A \xor k_A$};
\node[gray,yshift=-2,above right] at (f.north west) {\footnotesize
  $\pi^{\otp}_A$};

\node[pnode] (g1) at (\uu,0) {};
\node[pnode] (g2) at (\uu,-\v) {};
\node[pnode] (g3) at (\uu,-2*\v) {};
\node[protocol,gray,text width=1.2cm] (g) at (\uu,-\v) {\footnotesize $x_B
  =$\\$\ \ y \xor k_A$};
\node[gray,yshift=-2,above right] at (g.north west) {\footnotesize
  $\pi^{\otp}_B$};

\node (eveq1) at (-\w,\e) {};
\node (junc1) at (eveq1 |- a3) {};
\node (eveq2) at (0,\e) {};
\node (junc2) at (eveq2 |- a3) {};
\node (evec) at (\w,\e) {};
\node (junc3) at (evec |- b1) {};
\node (evef) at (2*\w,\e) {};
\node (junc4) at (evef |- f3) {};

\draw[sArrow,<->] (a1) to (b1);
\draw[sArrow] (junc3.center) to node[auto,pos=.9,swap] {$t$} (evec.center);

\draw[sArrow] (a2) to node[auto,pos=.5,swap] {$k_{A}$} (f1);
\draw[sArrow] (b2) to node[auto,pos=.5] {$k_{B}$} (g1);

\draw[sArrow] (a3) to (junc1.center) to node[pos=.85,auto,swap] {$\rho$} (eveq1.center);
\draw[sArrow] (eveq2.center) to node[pos=.19,auto] {$\rho'$}
(junc2.center) to (b3);

\draw[gray,sArrow] (alice.center) to node[auto,pos=.4] {$x_A$} (f2);
\draw[gray,sArrow] (g2) to node[auto,pos=.6] {$x_B$} (bob.center);
\draw[gray,sArrow] (f3) to node[pos=.07,auto,swap] {$y$} node[pos=.93,auto,swap] {$y$} (g3);
\draw[gray,sArrow] (junc4.center) to node[pos=.7,auto,swap] {$y$} (evef.center);

\end{tikzpicture}

\end{centering}
\caption[One-time pad \& QKD]{\label{fig:ex.otp.real}The composition
  of a one-time pad and a QKD protocol. The authentic channels and
  insecure quantum channels used have not been depicted as boxes to
  simplify the figure.}
\end{subfigure}

\vspace{12pt}

\begin{subfigure}[b]{\textwidth}
\begin{centering}

\begin{tikzpicture}\small

\def\t{4.913} 
\def\u{-1.85} 
\def\w{-3.95} 
\def\v{.118} 
\def\s{1.05}
\def\f{3.5}
\def\fh{-1.475}
\def\vv{-2.95}

\node[thinResource] (keyBox) at (0,0) {};
\node[draw] (key) at (0,\v/2-.25) {key};
\node (junc) at (-1.4,0 |- key.center) {};
\node[yshift=-1.5,above right] at (keyBox.north west) {\footnotesize
  Secret key $\aK$};
\node (alice) at (-\t,\fh) {};
\node (bob) at (\t,\fh) {};

\node[draw,thick,gray,minimum width=1.545cm,minimum height=3.95cm,text
  width=1.1cm] (f) at (-\f,\fh) {\footnotesize $y =$\\$\quad x \xor k$};
\node[pnode] (f1) at (-\f,\v) {};
\node[pnode] (f2) at (-\f,\fh) {};
\node[pnode] (f3) at (-\f,\vv) {};
\node[gray,yshift=-2,above right] at (f.north west) {\footnotesize
  $\pi^{\otp}_A$};

\node[draw,thick,gray,minimum width=1.545cm,minimum height=3.95cm,text
  width=1.2cm] (g) at (\f,\fh) {\footnotesize $x =$\\$\quad y \xor k$};

\node[pnode] (g1) at (\f,\v) {};
\node[pnode] (g2) at (\f,\fh) {};
\node[pnode] (g3) at (\f,\vv) {};
\node[gray,yshift=-2,above right] at (g.north west) {\footnotesize
  $\pi^{\otp}_B$};

\draw[sArrow,<->] (f1) to node[pos=.15,auto] {$k$} node[pos=.85,auto] {$k$} (g1);
\draw[thick] (junc.center |- 0,\v) to (junc.center) to node[pos=.472] (handle) {} +(160:-.8);
\draw[thick] (-.6,0 |- junc.center) to (key);

\node[simulator,dashed] (sim) at (0,\u) {};
\node[xshift=-1.5,below right] at (sim.north east) {\footnotesize
  $\sigma^{\qkd}_E$};
\node[innersim] (pleft) at (-\s,\u) {};
\node[innersim] (pcenter) at (0,\u) {};
\node[innersim] (pright) at (\s,\u) {};

\node (everr) at (2*\s,\w) {};
\node (juncrr) at (everr |- f3) {};
\node (evel) at (-\s,\w) {};
\node (evec) at (0,\w) {};
\node (ever) at (\s,\w) {};

\draw[double] (pleft) to node[pos=.55,auto,swap] {$0,1$} (handle.center);
\draw[dashed,sArrow] (pright) to node[pos=.8,auto,swap] {$t$} (ever.center);
\draw[dashed,sArrow] (pleft) to node[pos=.805,auto,swap] {$\rho$} (evel.center);
\draw[dashed,sArrow] (evec.center) to node[pos=.22,auto] {$\rho'$} (pcenter);

\draw[gray,sArrow] (f3) to node[auto,pos=.1,swap] {$y$} node[auto,pos=.9,swap] {$y$} (g3);
\draw[gray,sArrow] (juncrr.center) to node[auto,pos=.7,swap] {$y$} (everr.center);
\draw[gray,sArrow] (alice.center) to node[auto,pos=.4] {$x_A$} (f2);
\draw[gray,sArrow] (g2) to node[auto,pos=.6] {$x_B$} (bob.center);

\end{tikzpicture}

\end{centering}
\caption[One-time pad \& ideal key]{\label{fig:ex.otp.hybrid}A hybrid system consisting of a real
one-time pad and ideal secret key resource with simulator.}
\end{subfigure}

\vspace{12pt}

\begin{subfigure}[b]{\textwidth}
\begin{centering}

\begin{tikzpicture}\small

\def\b{7.436cm} 
\def\t{4.468} 
\def\u{-1.85} 
\def\s{1.05}
\def\w{-3.45} 

\node[draw,thick,minimum width=\b,minimum height=1cm] (channel) at (0,0) {};
\node[yshift=-1.5,above right] at (channel.north west) {\footnotesize
  Secure channel $\aS$};
\node (alice) at (-\t,0) {};
\node (bob) at (\t,0) {};

\node[simulator,dashed] (simqkd) at (-2*\s,\u) {};
\node[xshift=1.5,below left] at (simqkd.north west) {\footnotesize
  $\sigma^{\qkd}_E$};
\node[innersim] (pleft) at (-3*\s,\u) {};
\node[innersim] (pcenter) at (-2*\s,\u) {};
\node[innersim] (pright) at (-\s,\u) {};

\node[simulator] (simotp) at (2*\s,\u) {};
\node[xshift=-1.5,below right] at (simotp.north east) {\footnotesize
  $\sigma^{\otp}_E$};
\node[innersim] (qleft) at (\s,\u) {};
\node[innersim] (qcenter) at (2*\s,\u) {};
\node[innersim] (qright) at (3*\s,\u) {};

\node[xshift=-.4cm] (ajunc) at (pcenter |- alice) {};
\draw[thick] (alice) to node[pos=.2,auto] {$x$} (ajunc.center) to node[pos=.528] (handle) {} +(160:-.8);
\draw[double] (pcenter) to node[pos=.35,auto,swap] {$0,1$} (handle.center);
\node[xshift=.4cm] (bjunc) at (pcenter |- bob) {};
\draw[sArrow] (bjunc.center) to node[pos=.96,auto] {$x$} (bob);
\node (leak) at (qcenter |- bob) {};
\draw[dotted,sArrow] (leak.center) to node[pos=.65,auto,swap] {$|x|$} (qcenter);

\node (peve1) at (-3*\s,\w) {};
\node (peve2) at (-2*\s,\w) {};
\node (peve3) at (-\s,\w) {};
\node (qeve) at (2*\s,\w) {};

\draw[dashed,sArrow] (pleft) to node[pos=.6,auto] {$\rho$} (peve1);
\draw[dashed,sArrow] (peve2) to node[pos=.45,auto,swap] {$\rho'$} (pcenter);
\draw[dashed,sArrow] (pright) to node[pos=.6,auto] {$t$} (peve3);
\draw[sArrow] (qcenter) to node[pos=.6,auto] {$y$} (qeve);

\end{tikzpicture}

\end{centering}
\caption[Secure channel]{\label{fig:ex.otp.ideal}The ideal secure channel and
  corresponding composed simulator $\sigma^{\otp}_E\sigma^{\qkd}_E$.}
\end{subfigure}

\caption[Example: serial composition of one-time pad and
QKD]{\label{fig:ex.otp}Steps in the security proof of the sequential
  composition of a one-time pad and QKD protocol. In each figure,
  Alice and Bob have access to the left and right interfaces, and Eve
  to the lower interface. If we remove the gray parts from
  Figures~\ref{fig:ex.otp.real} and \ref{fig:ex.otp.hybrid}, we
  recover the real and ideal systems of QKD. If we remove the dashed
  parts from Figures~\ref{fig:ex.otp.hybrid} and
  \ref{fig:ex.otp.ideal} we recover the real and ideal systems of the
  one-time pad.}
\end{figure}
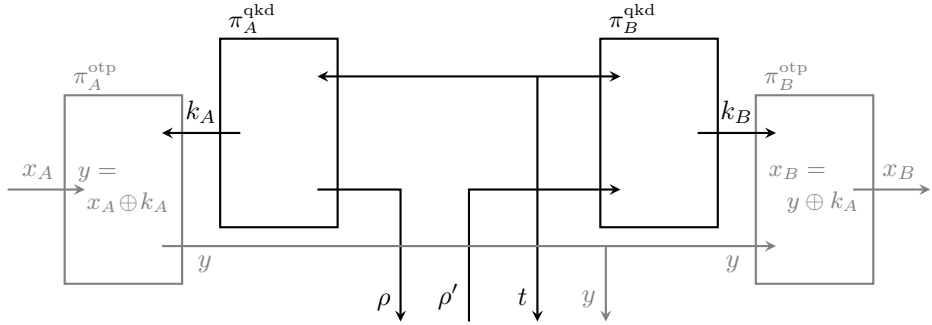
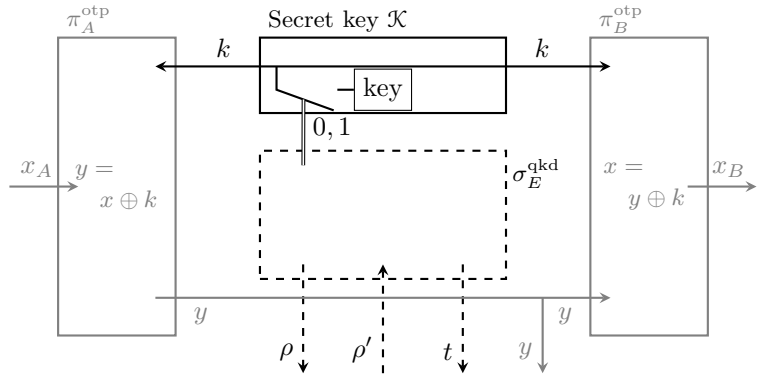
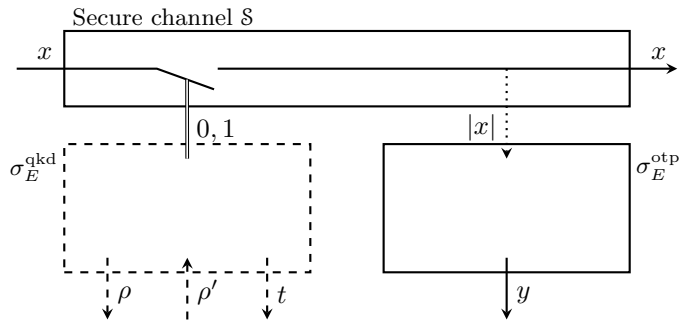

We now wish to show that the combination of an $\eps$\=/secure QKD
protocol and a (perfect) one-time pad results in a combined scheme
that constructs within $\eps$ a secure channel from authentic
classical channels\footnote{The QKD protocol requires a two-way
  authentic channel, whereas the one-time pad needs only a single use
  one-way authentic channel. This distinction is however not relevant
  to the current argument, so we refer to both resources as
  ``authentic channels'' and use the same notation, $\aA$, for each.}
and an insecure quantum channel. To do this, we look at an
intermediary step consisting of the combination of an ideal secret key
resource and a one-time pad, which we illustrate in
\figref{fig:ex.otp.hybrid}. If we remove the gray parts from
Figures~\ref{fig:ex.otp.real} and \ref{fig:ex.otp.hybrid}, we recover
the real and ideal systems of QKD. If the QKD protocol is
$\eps$\=/secure, then the distinguishing advantage between these two
figures can also be at most $\eps$.  Likewise, if we remove the dashed
parts from Figures~\ref{fig:ex.otp.hybrid} and \ref{fig:ex.otp.ideal}
we recover the real and ideal systems of the one-time pad. Since the
one-time pad is perfectly secure, the distinguishing advantage between
these two figures must be $0$. It follows from the triangle inequality
that the composition of an $\eps$\=/secure QKD protocol and a one-time
pad is $\eps$\=/secure.

This reasoning is summed up in the following equation, which can be
directly derived from \eqnsref{eq:axioms.order},
\eqref{eq:axioms.nonincrease} and the triangle inequality
(\eqnref{eq:pm.tri}):
\[ \left.\begin{aligned}
\pi^{\qkd}_A \pi^{\qkd}_B \left( \aQ \| \aA \right) & \close{\eps}
\aK \sigma^{\qkd}_E \\
\pi^{\otp}_A \pi^{\otp}_B \left( \aK \| \aA \right) & = \aS \sigma^{\otp}_E
\end{aligned}\right\}
\implies \pi^{\otp}_A \pi^{\otp}_B\pi^{\qkd}_A \pi^{\qkd}_B \left( \aQ
  \| \aA \| \aA \right) \close{\eps} \aS \sigma^{\otp}_E\sigma^{\qkd}_E \
. \]

The same can be obtained from the properties of the trace distance if
we write out explicitly the states gathered by the
distinguisher. After a run of an $\eps$\=/secure QKD scheme, we know
that
\[(1-\pabort) D(\rho_{ABE},\tau_{AB} \tensor \rho_{E}) \leq \eps \ .\]
The encryption and decryption operations of the one-time pad,
$(\pi^{\otp}_A,\pi^{\otp}_B)$, plugged into
Figures~\ref{fig:ex.otp.real} and \ref{fig:ex.otp.hybrid}, modify the
states $\rho_{AB}$ and $\tau_{AB}$. They correspond to a unitary map
$\cE^{\otp} : \cX \times \cK \times \cK \to \cX \times \cX \times \cX$
which takes the message $x_A$ and Alice's and Bob's keys $k_A$, $k_B$,
and generates the ciphertext and Bob's message while persevering
Alice's message, \[\cE^{\otp} : (x_A,k_A,k_B) \mapsto (x_A,x_A \xor
k_A \xor k_B,x_A \xor k_A) \ .\] A unitary map does not change the
trace distance, so for \[\rho_{X_AX_BY} = \cE^{\otp} (\rho_{X_A}
\tensor \rho_{AB}) \qquad \text{and} \qquad \tau_{X_AX_BY} =
\cE^{\otp} (\rho_{X_A} \tensor \tau_{AB})\] we have
\[(1-\pabort) D(\rho_{X_AX_BYE},\tau_{X_AX_BY} \tensor \rho_E) =
(1-\pabort) D(\rho_{ABE},\tau_{AB} \tensor \rho_{E}) \leq \eps \ ,\] where
$\rho_{X_AX_BYE}$ and $\tau_{X_AX_BY} \tensor \rho_E$ are the states
held be a distinguisher interacting with Figures~\ref{fig:ex.otp.real}
and \ref{fig:ex.otp.hybrid}, respectively.

We also know that the one-time pad perfectly constructs a secure
channel from an authentic channel and a secret key, i.e., if we remove
the simulator $\sigma^{\qkd}_E$ from Figures~\ref{fig:ex.otp.hybrid}
and \ref{fig:ex.otp.ideal}, the corresponding systems are
indistinguishable \--- a distinguisher interacting with them obtains
two states $\tau_{X_AX_BY}$ and $\tau'_{X_AX_BY}$ with
$D(\tau_{X_AX_BY},\tau'_{X_AX_BY}) = 0$. Plugging the simulator
$\sigma^{\qkd}_E$ in Eve's interface simply results in the state
$\rho_E$ being appended to $\tau$ and $\tau'$. The final
state held by the distinguisher is thus $\tau_{X_AX_BY} \tensor
\rho_E$ and $\tau'_{X_AX_BY} \tensor \rho_E$, respectively, with
$D(\tau_{X_AX_BY} \tensor \rho_E,\tau'_{X_AX_BY} \tensor \rho_E) = 0$.

By the triangle inequality, the distance between
Figures~\ref{fig:ex.otp.real} and \ref{fig:ex.otp.ideal} is then
\[ (1-\pabort) D( \rho_{X_AX_BYE} , \tau'_{X_AX_BY} \tensor \rho_E) \leq \eps \
. \]

\subsection{Parallel composition of key distribution with itself}
\label{sec:ex.qkd}

If two QKD protocols are run in parallel, as illustrated in
\figref{fig:ex.qkd.real} the adversary can entwine their respective
messages as she pleases, e.g, parts of the state $\rho$ sent on the
insecure channel by the first protocol can be input into the insecure
channel of the second protocol. We wish to show that even in this
case, the combined protocol is still $2\eps$\=/secure \--- i.e.,
indistinguishable from the parallel compositions of two ideal key
resources and their individual simulators \--- if each QKD protocol is
$\eps$\=/secure. This ideal case is depicted in
\figref{fig:ex.qkd.ideal}.

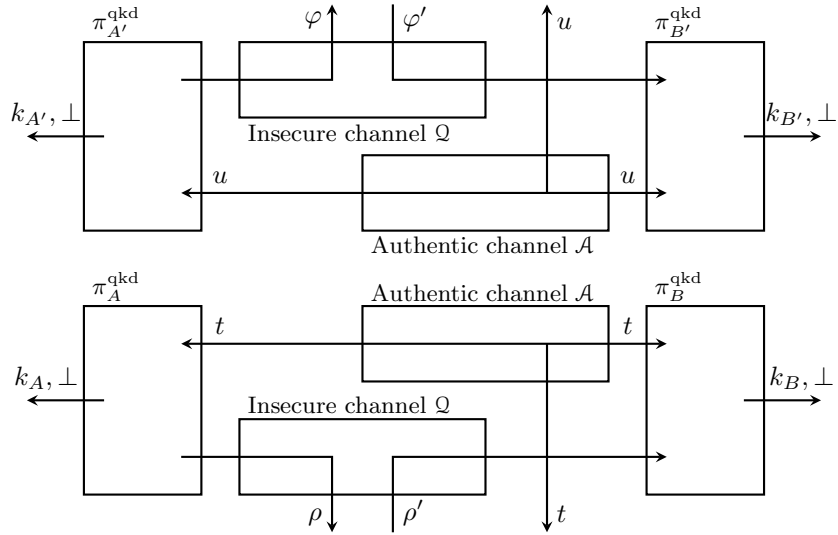
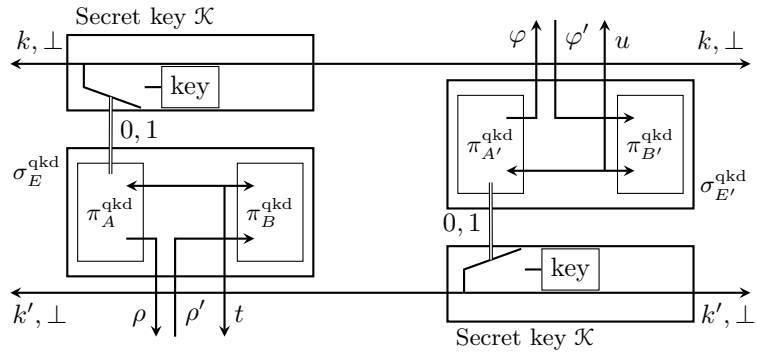
\begin{figure}[htbp]
\begin{subfigure}[b]{\textwidth}
\begin{centering}

\begin{tikzpicture}\small

\def\t{5.222} 
\def\u{3.7} 
\def\v{.75}
\def\w{.809}
\def\s{1.75cm}

\node[yshift=-\s,pnode] (a1) at (-\u,\v) {};
\node[yshift=-\s,pnode] (a2) at (-\u,0) {};
\node[yshift=-\s,pnode] (a3) at (-\u,-\v) {};
\node[yshift=-\s,protocol] (a) at (-\u,0) {};
\node[yshift=-2,above right] at (a.north west) {\footnotesize
  $\pi^{\qkd}_A$};
\node[yshift=-\s] (alice) at (-\t,0) {};

\node[yshift=-\s,pnode] (b1) at (\u,\v) {};
\node[yshift=-\s,pnode] (b2) at (\u,0) {};
\node[yshift=-\s,pnode] (b3) at (\u,-\v) {};
\node[yshift=-\s,protocol] (b) at (\u,0) {};
\node[yshift=-2,above right] at (b.north west) {\footnotesize $\pi^{\qkd}_B$};
\node[yshift=-\s] (bob) at (\t,0) {};

\node[yshift=-\s,thinResource] (cch) at (\w,\v) {};
\node[yshift=-2,above right] at (cch.north west) {\footnotesize
  Authentic channel $\aA$};
\node[yshift=-\s,thinResource] (qch) at (-\w,-\v) {};
\node[yshift=-1.5,above right] at (qch.north west) {\footnotesize
  Insecure channel $\aQ$};
\node[yshift=-\s] (eveq1) at (-\w-.4,-1.75) {};
\node (junc1) at (eveq1 |- a3) {};
\node[yshift=-\s] (eveq2) at (-\w+.4,-1.75) {};
\node (junc2) at (eveq2 |- a3) {};
\node[yshift=-\s] (evec) at (\w+\w,-1.75) {};
\node (junc3) at (evec |- b1) {};

\draw[sArrow,<->] (a1) to node[auto,pos=.08] {$t$} node[auto,pos=.92] {$t$}  (b1);
\draw[sArrow] (junc3.center) to node[auto,pos=.9] {$t$} (evec.center);

\draw[sArrow] (a2) to node[auto,pos=.75,swap] {$k_{A},\bot$} (alice.center);
\draw[sArrow] (b2) to node[auto,pos=.75] {$k_{B},\bot$} (bob.center);

\draw[sArrow] (a3) to (junc1.center) to node[pos=.8,auto,swap] {$\rho$} (eveq1.center);
\draw[sArrow] (eveq2.center) to node[pos=.264,auto,swap] {$\rho'$} (junc2.center) to (b3);

\node[yshift=\s,pnode] (a1a) at (-\u,-\v) {};
\node[yshift=\s,pnode] (a2a) at (-\u,0) {};
\node[yshift=\s,pnode] (a3a) at (-\u,\v) {};
\node[yshift=\s,protocol] (aa) at (-\u,0) {};
\node[yshift=-2,above right] at (aa.north west) {\footnotesize
  $\pi^{\qkd}_{A'}$};
\node[yshift=\s] (alicea) at (-\t,0) {};

\node[yshift=\s,pnode] (b1a) at (\u,-\v) {};
\node[yshift=\s,pnode] (b2a) at (\u,0) {};
\node[yshift=\s,pnode] (b3a) at (\u,\v) {};
\node[yshift=\s,protocol] (ba) at (\u,0) {};
\node[yshift=-2,above right] at (ba.north west) {\footnotesize $\pi^{\qkd}_{B'}$};
\node[yshift=\s] (boba) at (\t,0) {};

\node[yshift=\s,thinResource] (ccha) at (\w,-\v) {};
\node[yshift=2,below right] at (ccha.south west) {\footnotesize
  Authentic channel $\aA$};
\node[yshift=\s,thinResource] (qcha) at (-\w,\v) {};
\node[yshift=1.5,below right] at (qcha.south west) {\footnotesize
  Insecure channel $\aQ$};
\node[yshift=\s] (eveq1a) at (-\w-.4,1.75) {};
\node (junc1a) at (eveq1a |- a3a) {};
\node[yshift=\s] (eveq2a) at (-\w+.4,1.75) {};
\node (junc2a) at (eveq2a |- a3a) {};
\node[yshift=\s] (eveca) at (\w+\w,1.75) {};
\node (junc3a) at (eveca |- b1a) {};

\draw[sArrow,<->] (a1a) to node[auto,pos=.08] {$u$} node[auto,pos=.92] {$u$}  (b1a);
\draw[sArrow] (junc3a.center) to node[auto,pos=.9,swap] {$u$} (eveca.center);

\draw[sArrow] (a2a) to node[auto,pos=.75,swap] {$k_{A'},\bot$} (alicea.center);
\draw[sArrow] (b2a) to node[auto,pos=.75] {$k_{B'},\bot$} (boba.center);

\draw[sArrow] (a3a) to (junc1a.center) to node[pos=.8,auto] {$\varphi$} (eveq1a.center);
\draw[sArrow] (eveq2a.center) to node[pos=.19,auto] {$\varphi'$} (junc2a.center) to (b3a);

\end{tikzpicture}

\end{centering}
\caption[QKD protocols]{\label{fig:ex.qkd.real}Two QKD protocols and their respective
  resources run in parallel.}
\end{subfigure}

\vspace{12pt}

\begin{subfigure}[b]{\textwidth}
\begin{centering}

\begin{tikzpicture}\small

\def\t{2.368} 
\def\u{-1.85} 
\def\w{-3.5} 
\def\v{.118} 
\def\s{1.05}
\def\l{2.5cm}
\def\r{2.8cm}

\node[xshift=-\l,thinResource] (keyBox) at (0,0) {};
\node[xshift=-\l,draw] (key) at (0,\v/2-.25) {key};
\node[xshift=-\l] (junc) at (-1.4,0 |- key.center) {};
\node[yshift=-1.5,above right] at (keyBox.north west) {\footnotesize
  Secret key $\aK$};
\node[xshift=-\l] (alice) at (-\t,\v) {};
\node[xshift=\l] (bob) at (\t,\v) {};

\draw[sArrow,<->] (alice.center) to node[pos=.04,auto] {$k,\bot$} node[pos=.96,auto] {$k,\bot$} (bob.center);
\draw[thick] (junc.center |- 0,\v) to (junc.center) to node[pos=.472] (handle) {} +(160:-.8);
\draw[xshift=-\l,thick] (-.6,0 |- junc.center) to (key);

\node[xshift=-\l,simulator] (sim) at (0,\u) {};
\node[xshift=1.5,below left] at (sim.north west) {\footnotesize
  $\sigma^{\qkd}_E$};
\node[xshift=-\l,innersim,draw] (pleft) at (-\s,\u) {\footnotesize $\pi^{\qkd}_A$};
\node[xshift=-\l,innersim,draw] (pright) at (\s,\u) {\footnotesize $\pi^{\qkd}_B$};
\node[xshift=-\l,innersnode] (a1) at (-\s,\u+.35) {};
\node[xshift=-\l,innersnode] (a2) at (-\s,\u-.35) {};
\node[xshift=-\l,innersnode] (b1) at (\s,\u+.35) {};
\node[xshift=-\l,innersnode] (b2) at (\s,\u-.35) {};

\node[xshift=-\l] (evel) at (-.45,\w) {};
\node (juncl) at (evel |- a2) {};
\node[xshift=-\l] (evec) at (-.2,\w) {};
\node (juncc) at (evec |- a2) {};
\node[xshift=-\l] (ever) at (.45,\w) {};
\node (juncr) at (ever |- a1) {};

\draw[double] (a1) to node[pos=.55,auto,swap] {$0,1$} (handle.center);
\draw[sArrow,<->] (a1) to (b1);
\draw[sArrow] (juncr.center) to node[pos=.852,auto] {$t$} (ever.center);
\draw[sArrow] (a2) to (juncl.center) to node[pos=.805,auto,swap] {$\rho$} (evel.center);
\draw[sArrow] (evec.center) to node[pos=.25,auto,swap] {$\rho'$} (juncc.center) to (b2);


\node[xshift=\l,yshift=-\r,thinResource] (keyBoxa) at (0,0) {};
\node[xshift=\l,yshift=-\r,draw] (keya) at (0,-\v/2+.25) {key};
\node[xshift=\l] (junca) at (-1.4,0 |- keya.center) {};
\node[yshift=1.5,below right] at (keyBoxa.south west) {\footnotesize
  Secret key $\aK$};
\node[xshift=-\l,yshift=-\r] (alicea) at (-\t,-\v) {};
\node[xshift=\l,yshift=-\r] (boba) at (\t,-\v) {};

\draw[sArrow,<->] (alicea.center) to node[pos=.04,auto,swap] {$k',\bot$} node[pos=.97,auto,swap] {$k',\bot$} (boba.center);
\draw[thick,yshift=-\r] (junca.center |- 0,-\v) to (junca.center) to node[pos=.472] (handlea) {} +(200:-.8);
\draw[xshift=\l,thick] (-.6,0 |- junca.center) to (keya);

\node[xshift=\l,yshift=-\r,simulator] (sima) at (0,-1*\u) {};
\node[xshift=-1.5,above right] at (sima.south east) {\footnotesize
  $\sigma^{\qkd}_{E'}$};
\node[xshift=\l,yshift=-\r,innersim,draw] (plefta) at (-\s,-1*\u) {\footnotesize $\pi^{\qkd}_{A'}$};
\node[xshift=\l,yshift=-\r,innersim,draw] (prighta) at (\s,-1*\u) {\footnotesize $\pi^{\qkd}_{B'}$};
\node[xshift=\l,yshift=-\r,innersnode] (a1a) at (-\s,-1*\u-.35) {};
\node[xshift=\l,yshift=-\r,innersnode] (a2a) at (-\s,-1*\u+.35) {};
\node[xshift=\l,yshift=-\r,innersnode] (b1a) at (\s,-1*\u-.35) {};
\node[xshift=\l,yshift=-\r,innersnode] (b2a) at (\s,-1*\u+.35) {};

\node[xshift=\l,yshift=-\r] (evela) at (-.45,-1*\w) {};
\node (juncla) at (evela |- a2a) {};
\node[xshift=\l,yshift=-\r] (eveca) at (-.2,-1*\w) {};
\node (juncca) at (eveca |- a2a) {};
\node[xshift=\l,yshift=-\r] (evera) at (.45,-1*\w) {};
\node (juncra) at (evera |- a1a) {};

\draw[double] (a1a) to node[pos=.55,auto,swap] {$0,1$} (handlea.center);
\draw[sArrow,<->] (a1a) to (b1a);
\draw[sArrow] (juncra.center) to node[pos=.852,auto,swap] {$u$} (evera.center);
\draw[sArrow] (a2a) to (juncla.center) to node[pos=.805,auto] {$\varphi$} (evela.center);
\draw[sArrow] (eveca.center) to node[pos=.17,auto] {$\varphi'$} (juncca.center) to (b2a);

\end{tikzpicture}

\end{centering}
\caption[Ideal keys]{\label{fig:ex.qkd.ideal}Two secret key resources and two
  simulators run in parallel.}
\end{subfigure}

\caption[Example: parallel composition of QKD]{\label{fig:ex.qkd}Real
  and ideal systems for two QKD protocols executed in parallel.  In
  each figure, Alice and Bob have access to the left and right
  interfaces, and Eve to both the upper and lower interface.}
\end{figure}

Like for serial composition, this follows from the triangle
inequality. If the real QKD system is $\eps$\=/close to the ideal QKD
system, then two real QKD systems in parallel must be $\eps$\=/close
to an ideal and real QKD system composed in parallel, since otherwise
a distinguisher could run a real QKD system internally in parallel to
the system it is testing. Likewise, a real and ideal QKD system in
parallel must be $\eps$\=/close to two ideal QKD systems in
parallel. And hence two parallel runs of an $\eps$\=/secure QKD
protocol is $2\eps$\=/secure.

The trace distance notation does not lend itself to writing up
parallel composition of protocols. So instead of using this notation
as in the previous examples, we write up the reasoning from the
paragraph above in more detail using the resource\-/converter
formalism. If the real and ideal system of a QKD protocol are
$\eps$\=/close, then
\[\pi^{\qkd}_A \pi^{\qkd}_B \left(
  \aQ \| \aA \right) \close{\eps} \aK \sigma^{\qkd}_E \ .\] It follows
immediately from this and \eqnref{eq:axioms.nonincrease} that
\begin{align*}
  \left(\pi^{\qkd}_A \pi^{\qkd}_B \left( \aQ \| \aA \right) \right)
  \Big\| \left(\pi^{\qkd}_{A'} \pi^{\qkd}_{B'} \left( \aQ \| \aA
    \right) \right) & \close{\eps} \left( \aK \sigma^{\qkd}_E \right)
  \Big\| \left(\pi^{\qkd}_{A'} \pi^{\qkd}_{B'} \left( \aQ \| \aA
    \right)
  \right) \ , \\
  \left( \aK \sigma^{\qkd}_E \right) \Big\| \left(\pi^{\qkd}_{A'}
    \pi^{\qkd}_{B'} \left( \aQ \| \aA \right) \right) & \close{\eps}
  \left( \aK \sigma^{\qkd}_E \right) \Big\| \left( \aK
    \sigma^{\qkd}_{E'} \right) \ .
\end{align*}
From the triangle equality (\eqnref{eq:pm.tri}) we then have
\[ \left(\pi^{\qkd}_A \pi^{\qkd}_B \left( \aQ \| \aA \right) \right)
\Big\| \left(\pi^{\qkd}_{A'} \pi^{\qkd}_{B'} \left( \aQ \| \aA \right)
\right) \close{2\eps} \left( \aK \sigma^{\qkd}_E \right) \Big\| \left(
  \aK \sigma^{\qkd}_{E'} \right) \ .\] Finally, using
\eqnref{eq:axioms.order} to rearrange this expression, we get
\[ \left(\pi^{\qkd}_A \pi^{\qkd}_B \big\| \pi^{\qkd}_{A'}
  \pi^{\qkd}_{B'} \right)\left( \aQ \| \aA \| \aQ \| \aA \right)
\close{2\eps} \left(\aK \| \aK \right) \left(\sigma^{\qkd}_E \big\|
  \sigma^{\qkd}_{E'} \right) \ ,\] i.e., the parallel composition of
two runs of a QKD protocol, $\pi^{\qkd}_A \pi^{\qkd}_B \big\|
\pi^{\qkd}_{A'} \pi^{\qkd}_{B'}$, run with authentic classical and
insecure quantum channel resources, $\aQ \| \aA \| \aQ \| \aA$, is
$2\eps$\=/close to the parallel composition of two ideal key
resources, $\aK \| \aK$, and a simulator, $\sigma^{\qkd}_E \big\|
\sigma^{\qkd}_{E'}$.

\appendix
\appendixpage
\phantomsection
\label{app}
\addcontentsline{toc}{section}{Appendices}

In \appendixref{app:op} we formally define the trace distance and show
that it corresponds to the distinguishing advantage between two
quantum states. We also prove several lemmas that help interpret its
meaning and how to choose a value in a practical
implementation. In \appendixref{app:alternative} we discuss an
alternative to the secrecy criterion of \eqnref{eq:d}, which has
appeared in the literature. In \appendixref{app:moreAC} we provide
some details on technical aspects of the Abstract Cryptography
framework. In \appendixref{app:moreAC.dist} we discuss how to define a
distinguisher so that the resulting distinguishing advantage is
non\-/increasing under compositions. In
\appendixref{app:generic} we sketch a proof that the security
definition from \defref{def:security} is composable. A complete proof
of this can be found in \cite{MR11,Mau12}. And finally,
in \appendixref{app:ex.auth} we model the security of authentication
with universal hashing~\cite{WC81,Sti94}, then use this as a subprotocol
of QKD to authenticate the classical post\-/processing. Since this type
of authentication uses a short key (and an insecure classical channel)
to construct an authentic channel and QKD uses an authentic channel
(and an insecure quantum channel) to construct a long key, the
composition of the two is a key expansion protocol, which constructs a
long key from a short key (and insecure channels).


\section{Trace distance}
\label{app:op}

We have used several times in this work the well\-/known fact that the
distinguishing advantage between two systems that output states $\rho$
and $\sigma$ is equivalent to the trace distance between these
states. In this appendix, we prove this fact, along with several other
theorems that help interpret the meaning of the trace distance.

In \appendixref{app:op.definitions} we first define the trace distance
\--- as well as its classical counterpart, the total variation
distance\--- and prove some basic lemmas that can also be found in
textbooks such as \cite{NC00}. In \appendixref{app:op.distadv} we then
show the connection between trace distance and distinguishing
advantage, which was originally proven by
Helstrom~\cite{Hel76}. In \appendixref{app:op.failure} we prove that
we can alternatively think of the trace distance between a real and
ideal system as a bound on the probability that a failure occurs in
the real system. Finally, in \appendixref{app:op.local} we bound two
typical information theory notions of secrecy \--- the conditional
entropy of a key given the eavesdropper's information and her
probability of correctly guessing the key \--- in terms of the trace
distance. Although such measures of information are generally
ill-suited for defining cryptographic security, they can help
interpret the notion of a key being $\eps$\-/close to uniform.

\subsection{Metric definitions}
\label{app:op.definitions}

In the case of a classical system, statistical security is defined by
the total variation (or statistical) distance between the probability
distributions describing the real and ideal settings, which is defined
as follows.\footnote{We employ the same notation $D(\cdot,\cdot)$ for
  both the total variation and trace distance, since the former is a
  special case of the latter.}

\begin{deff}[total variation distance]
  \label{def:vdist}
  The total variation distance between two probability distributions
  $P_Z$ and $P_{\tilde{Z}}$ over an alphabet $\cZ$ is defined as
  \[D(P_Z,P_{\tilde{Z}}) \coloneqq \frac{1}{2} \sum_{z \in \cZ} \left| P_Z(z) -
    P_{\tilde{Z}}(z) \right| \ .\]
\end{deff}

Using the fact that $|a-b| = a+b-2 \min(a,b)$, the total variation
distance can also be written as
\begin{equation} \label{eq:vdist.alt} D(P_Z, P_{\tilde{Z}}) = 1-
  \sum_{z \in \cZ} \min[P_Z(z), P_{\tilde{Z}}(z)] \ .\end{equation}

In the case of quantum states instead of classical random variables,
the total variation distance generalizes to the trace distance. More
precisely, the trace distance between two density operators that are
diagonal in the same orthonormal basis is equal to the total variation
distance between the probability distributions defined by their
respective eigenvalues.

\begin{deff}[trace distance]
  \label{def:tdist}
  The trace distance between two quantum states $\rho$ and $\sigma$ is
  defined as
  \[D(\rho,\sigma) \coloneqq \frac{1}{2} \tr |\rho-\sigma| \ .\]
\end{deff}

We now introduce some technical lemmas involving the trace distance,
which help us derive the theorems in the next sections. Most of these
proofs are taken from \cite{NC00}.

\begin{lem}
  \label{lem:op.definitions.pos}
  For any two states $\rho$ and $\sigma$ and any operator $0 \leq
  M \leq I$, the two following inequalities hold:
  \begin{align}
    D(\rho,\sigma) \geq \trace{M(\rho-\sigma)} \
    , \label{eq:op.definitions.pos} \\
    \trace{M |\rho-\sigma|} \geq \left| \trace{M (\rho-\sigma)}
    \right| \ . \label{eq:op.definitions.abs}
\end{align} Furthermore, each of these inequalities is tight for some
values of $M$.
\end{lem}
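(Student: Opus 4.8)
The plan is to reduce everything to the Jordan--Hahn decomposition of the Hermitian operator $\rho-\sigma$. First I would use the spectral theorem to write $\rho-\sigma = P - Q$, where $P,Q \geq 0$ have mutually orthogonal supports (the positive and negative parts of $\rho-\sigma$). Then $|\rho-\sigma| = P+Q$, so $D(\rho,\sigma) = \frac12\trace{P+Q}$. Since $\rho$ and $\sigma$ are normalized, $\trace{\rho-\sigma} = 0$, hence $\trace{P} = \trace{Q}$ and therefore $D(\rho,\sigma) = \trace{P} = \trace{Q}$. This identity is the workhorse for both inequalities.

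For \eqnref{eq:op.definitions.pos}, I would bound $\trace{M(\rho-\sigma)} = \trace{MP} - \trace{MQ} \leq \trace{MP} \leq \trace{P} = D(\rho,\sigma)$, where the first inequality uses $M \geq 0$ (so $\trace{MQ}\geq 0$) and the second uses $M \leq I$ (so $\trace{MP} \leq \trace{P}$). For \eqnref{eq:op.definitions.abs}, since $\trace{MP}$ and $\trace{MQ}$ are both nonnegative, $\left|\trace{M(\rho-\sigma)}\right| = \left|\trace{MP} - \trace{MQ}\right| \leq \trace{MP} + \trace{MQ} = \trace{M(P+Q)} = \trace{M|\rho-\sigma|}$.

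For the tightness claim I would take $M = \Pi_+$, the projector onto the support of $P$ (which satisfies $0 \leq \Pi_+ \leq I$). By orthogonality of the supports, $\trace{\Pi_+ Q} = 0$, so $\trace{\Pi_+(\rho-\sigma)} = \trace{\Pi_+ P} = \trace{P} = D(\rho,\sigma)$, saturating the first inequality, and simultaneously $\trace{\Pi_+|\rho-\sigma|} = \trace{\Pi_+ P} = \trace{P} = \left|\trace{\Pi_+(\rho-\sigma)}\right|$, saturating the second.

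There is no real obstacle here: the only point requiring a little care is invoking the spectral decomposition of the Hermitian operator $\rho-\sigma$ into positive and negative parts and recording that these parts have orthogonal supports — that orthogonality is exactly what makes both the inequalities and their tightness go through. Everything else is a one-line trace manipulation, and this lemma is the standard stepping stone toward the Helstrom bound proved later in \appendixref{app:op.distadv}.
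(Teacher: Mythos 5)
Your proof is correct and follows essentially the same route as the paper's: both decompose $\rho-\sigma$ into its positive and negative parts (the paper's $Q_+,Q_-$ are your $P,Q$), use $\tr Q_+ = \tr Q_-$ to identify $D(\rho,\sigma)$ with $\tr Q_+$, and saturate both inequalities with the projector onto the positive eigenspace. No gaps.
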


The trace distance can thus alternatively be written
as \begin{equation} \label{eq:op.definitions.pos.d}
D(\rho,\sigma) = \max_M \trace{M(\rho-\sigma)} \ .\end{equation}

\begin{proof}
  We start with the proof of \eqnref{eq:op.definitions.pos}. Let
  $\{\lambda_x,\ket{\psi_x}\}_x$ be the eigenvalues and vectors of
  $\rho-\sigma$, and define \[Q_+ \coloneqq \sum_{x : \lambda_x \geq
    0} \lambda_x \proj{\psi_x} \qquad \text{and} \qquad Q_- \coloneqq
  \sum_{x : \lambda_x < 0} -\lambda_x \proj{\psi_x} \ .\] We have
  $\rho-\sigma = Q_+-Q_-$ and $|\rho-\sigma| = Q_++Q_-$. Note that
  since $\trace{Q_+-Q_-} = \trace{\rho-\sigma} = 0$, we have $\tr Q_+
  = \tr Q_-$, hence \[D(\rho,\sigma) = \frac{1}{2} \tr |\rho-\sigma|
  =\frac{1}{2}(\tr Q_+ + \tr Q_-) = \tr Q_+ \ .\] If we set $\Gamma_+
  \coloneqq \sum_{x : \lambda_x \geq 0} \proj{\psi_x}$, the projector
  on $Q_+$, we get \[\trace{\Gamma_+(\rho-\sigma)} = \tr Q_+ =
  D(\rho,\sigma) \ .\] And for any operator $0 \leq M \leq I$,
\begin{equation*} 
  \trace{M (\rho-\sigma)} = \trace{M (Q_+-Q_-)} \leq \trace{M Q_+} \leq \tr Q_+ = D(\rho,\sigma) \ . 
\end{equation*}

To prove that \eqnref{eq:op.definitions.abs} holds, note that for any
operator $0 \leq M \leq I$,
  \begin{align*}
    \left| \trace{M(\rho-\sigma)} \right| & = \left|
        \trace{M(Q_+-Q_-)} \right| \\
    & \leq \trace{M(Q_++Q_-)}  = \trace{M|\rho-\sigma|} \ .
  \end{align*}
  \eqnref{eq:op.definitions.abs} is tight for any operator $M$ which
  satisfies \[\left|\trace{M(Q_+-Q_-)} \right| = \trace{M(Q_++Q_-)} \
  , \]  i.e., any operator such that either $0 \leq M \leq \Gamma_+$ or $0
  \leq M \leq \Gamma_-$, where $\Gamma_+$ is defined as above and
  $\Gamma_- \coloneqq \sum_{x : \lambda_x \leq 0} \proj{\psi_x}$.
\end{proof}

Let $\{\Gamma_x\}_{x}$ be a positive operator\-/valued measure (POVM)
\--- a set of operators $0 \leq \Gamma_x \leq I$ such that
$\sum_x \Gamma_x = I$ \--- and let $P_X$ denote the outcome of
measuring a quantum state $\rho$ with $\{\Gamma_x\}_{x}$, i.e.,
$P_X(x) = \trace{\Gamma_x \rho}$. Our next lemma says that the trace
distance between two states $\rho$ and $\sigma$ is equal to the total
variation between the outcomes \--- $P_X$ and $Q_X$ \--- of an optimal
measurement on the two states.

\begin{lem}
  \label{lem:op.definitions.opt}
  For any two states $\rho$ and $\sigma$,
  \begin{equation} \label{eq:op.definitions.opt} D(\rho,\sigma) =
    \max_{\left\{\Gamma_x\right\}_x} D(P_X,Q_X) \ ,\end{equation}
  where $P_X$ and $Q_X$ are the probability distributions resulting
  from measuring $\rho$ and $\sigma$ with a POVM $\{\Gamma_x\}_x$,
  respectively, and the maximization is over all POVMs. Furthermore,
  if the two states $\rho_{ZB}$ and $\sigma_{ZB}$ have a classical
  subsystem $Z$, then the measurement satisfying
  \eqnref{eq:op.definitions.opt} leaves the classical subsystem
  unchanged, i.e., the maximum is reached for a POVM with
  elements \begin{equation} \label{eq:op.definitions.opt.classical}
    \Gamma_x = \sum_z \proj{z} \tensor M^z_x \ ,\end{equation} where
  $\{\ket{z}\}_z$ is the classical orthonormal basis of $Z$.
\end{lem}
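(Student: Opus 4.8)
The plan is to prove the two inequalities $D(\rho,\sigma)\ge\max_{\{\Gamma_x\}_x}D(P_X,Q_X)$ and $D(\rho,\sigma)\le\max_{\{\Gamma_x\}_x}D(P_X,Q_X)$ separately, and then to check that when a classical subsystem is present the maximizing measurement can be chosen to leave it untouched.

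For the inequality $D(\rho,\sigma)\ge D(P_X,Q_X)$, I would fix an arbitrary POVM $\{\Gamma_x\}_x$ and observe that $P_X(x)-Q_X(x)=\trace{\Gamma_x(\rho-\sigma)}$. Applying \eqnref{eq:op.definitions.abs} of \lemref{lem:op.definitions.pos} with $M=\Gamma_x$ (each $\Gamma_x$ satisfies $0\le\Gamma_x\le I$) gives $|P_X(x)-Q_X(x)|\le\trace{\Gamma_x|\rho-\sigma|}$; summing over $x$ and using $\sum_x\Gamma_x=I$ yields $\sum_x|P_X(x)-Q_X(x)|\le\trace{|\rho-\sigma|}$, i.e., $D(P_X,Q_X)\le D(\rho,\sigma)$. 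Since the POVM was arbitrary, this also bounds the maximum.

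For the reverse inequality I would exhibit a single measurement attaining equality, reusing the spectral decomposition $\rho-\sigma=Q_+-Q_-$ from the proof of \lemref{lem:op.definitions.pos}: take the two\-/outcome projective measurement $\{\Gamma_+,\Gamma_-\}$, where $\Gamma_+$ is the projector onto the span of the eigenvectors of $\rho-\sigma$ with non\-/negative eigenvalue and $\Gamma_-=I-\Gamma_+$. Then $\Gamma_+(\rho-\sigma)=Q_+$, so $\trace{\Gamma_+(\rho-\sigma)}=\trace{Q_+}=D(\rho,\sigma)$, while $\trace{\Gamma_-(\rho-\sigma)}=\trace{\rho-\sigma}-\trace{Q_+}=-D(\rho,\sigma)$. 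Hence the outcome distributions $P_X,Q_X$ satisfy $|P_X(+)-Q_X(+)|=|P_X(-)-Q_X(-)|=D(\rho,\sigma)$, so $D(P_X,Q_X)=D(\rho,\sigma)$; combined with the previous paragraph this proves \eqnref{eq:op.definitions.opt}.

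For the furthermore clause, write $\rho_{ZB}=\sum_z\proj{z}\tensor\rho_z$ and $\sigma_{ZB}=\sum_z\proj{z}\tensor\sigma_z$, so that $\rho_{ZB}-\sigma_{ZB}=\sum_z\proj{z}\tensor(\rho_z-\sigma_z)$ commutes with every $\proj{z}\tensor I$. Since the spectral projector of a self\-/adjoint operator onto the half\-/line $[0,\infty)$ is unique, it too commutes with each $\proj{z}\tensor I$; hence the optimal $\Gamma_+$ constructed above, and with it $\Gamma_-=I-\Gamma_+$, has the block form $\Gamma_x=\sum_z\proj{z}\tensor M^z_x$, where $M^z_+$ is the projector onto the non\-/negative eigenspace of $\rho_z-\sigma_z$ and $M^z_-=I-M^z_+$. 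This is precisely \eqnref{eq:op.definitions.opt.classical}. The argument is short, and the only step that deserves a moment's care is this last one, the claim that an optimal measurement may always be taken block diagonal with respect to $Z$; it is settled at once by the uniqueness of the spectral projectors of $\rho_{ZB}-\sigma_{ZB}$.
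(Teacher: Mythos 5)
Your proof is correct and follows essentially the same route as the paper: the upper bound via \eqnref{eq:op.definitions.abs} of \lemref{lem:op.definitions.pos}, achievability via the spectral decomposition $\rho-\sigma=Q_+-Q_-$, and block\-/diagonality of the optimal measurement for the classical part. The only (cosmetic) difference is that you use the coarse two\-/outcome measurement $\{\Gamma_+,I-\Gamma_+\}$ where the paper uses the full rank\-/one eigenbasis measurement $\Gamma_x=\proj{\psi_x}$; both achieve equality for the same reason.
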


\begin{proof}
  Using \eqnref{eq:op.definitions.abs} from
  \lemref{lem:op.definitions.pos} we get
  \begin{align*}D(P_X,Q_X) & = \frac{1}{2} \sum_x \left|
      \trace{\Gamma_x(\rho-\sigma)} \right| \\
      & \leq \frac{1}{2} \sum_x \trace{\Gamma_x|\rho-\sigma|} \\
      & = \frac{1}{2} \tr{|\rho-\sigma|} = D(\rho,\sigma)\ .
  \end{align*}
  The conditions for equality are given at the end of the proof of
  \lemref{lem:op.definitions.pos}, e.g., a measurement with $\Gamma_x
  = \proj{\psi_x}$, where $\{\ket{\psi_x}\}_x$ are the eigenvectors of
  $\rho-\sigma$. If $\rho_{ZB} = \sum_z p_z \proj{z} \tensor \rho_B^z$
  and $\sigma_{ZB} = \sum_z q_z \proj{z} \tensor \sigma_B^z$, then \[
  \rho-\sigma = \sum_z \proj{z} \tensor
  \left(p_z\rho^z_B-q_z\sigma^z_B\right) \ ,\] and the eigenvectors of
  $\rho-\sigma$ have the form $\ket{\psi_{z,x}} = \ket{z} \tensor
  \ket{\varphi^z_x}_B$, where $\ket{\varphi^z_x}_B$ is an eigenvector
  of $p_z\rho^z_B-q_z\sigma^z_B$. So the optimal measurement,
  $\Gamma_{z,x} = \proj{z} \tensor \proj{\varphi^z_x}_B$, satisfies
  \eqnref{eq:op.definitions.opt.classical}.
\end{proof}

\subsection{Distinguishing advantage}
\label{app:op.distadv}

Helstrom~\cite{Hel76} proved that the advantage a distinguisher has in
guessing whether it was provided with one of two states, $\rho$ or
$\sigma$, is given by the trace distance between the two,
$D(\rho,\sigma)$.\footnote{Actually, Helstrom~\cite{Hel76} solved a
  more general problem, in which the states $\rho$ and $\sigma$ are
  picked with apriori probabilities $p$ and $1-p$, respectively,
  instead of $1/2$ as in the definition of the distinguishing
  advantage.} We first sketch the classical case, then prove the
quantum version.

Let a distinguisher be given a value sampled according to probability
distributions $P_Z$ or $P_{\tilde{Z}}$, where $P_Z$ and
$P_{\tilde{Z}}$ are each chosen with probability $1/2$. Suppose the
value received by the distinguisher is $z \in \cZ$. If $P_Z(z) >
P_{\tilde{Z}}(z)$, its best guess is that the value was sampled
according to $P_Z$. Otherwise, it should guess that it was
$P_{\tilde{Z}}$. Let $\cZ'\coloneqq \{z \in \cZ : P_Z(z) >
P_{\tilde{Z}}(z)\}$ and $\cZ'' \coloneqq \{z \in \cZ : P_Z(z) \leq
P_{\tilde{Z}}(z)\}$. There are a total of $2|\cZ|$ possible events:
the sample is chosen according to $P_Z$ or $P_{\tilde{Z}}$ and takes
the value $z \in \cZ$. These events have probabilities
$\frac{P_Z(z)}{2}$ and $\frac{P_{\tilde{Z}}(z)}{2}$. Conditioned on
$P_Z$ being chosen and $z$ being the sampled value, the distinguisher
has probability $1$ of guessing correctly with the strategy outlined
above if $z \in \cZ'$, and $0$ otherwise. Likewise, if $P_{\tilde{Z}}$
was selected, it has probability $1$ of guessing correctly if $z \in
\cZ''$ and $0$ otherwise. The probability of correctly guessing
whether it was given a value sampled according to $P_Z$ or
$P_{\tilde{Z}}$, which we denote $\distinguish{P_Z,P_{\tilde{Z}}}$, is
obtained by summing over all possible events weighted by their
probabilities. Hence
\begin{align*}
  \distinguish{P_Z,P_{\tilde{Z}}} & = \sum_{z \in \cZ'} \frac{P_Z(z)}{2} + \sum_{z \in
    \cZ''} \frac{P_{\tilde{Z}}(z)}{2} \\
  & =  \frac{1}{2} \left(1- \sum_{z \in
    \cZ''} P_{Z}(z) \right) + \frac{1}{2} \left(1 - \sum_{z \in
    \cZ'} P_{\tilde{Z}}(z)\right) \\
  & = 1 - \frac{1}{2} \sum_{z \in \cZ} \min[P_Z(z), P_{\tilde{Z}}(z)] \\
  & = \frac{1}{2} + \frac{1}{2} D(P_Z,P_{\tilde{Z}}) \ , 
\end{align*}
where in the last equality we used the alternative formulation of the
total variation distance from \eqnref{eq:vdist.alt}.

We now generalize the argument above to quantum states.

\begin{thm}
  \label{thm:op.distinguishing}
  For any states $\rho$ and $\sigma$, we have \[\distinguish{\rho,\sigma} =
  \frac{1}{2}+\frac{1}{2}D(\rho,\sigma)  \ .\]
\end{thm}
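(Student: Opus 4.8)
The plan is to show that an optimal distinguishing strategy can always be taken to be a two‑outcome measurement followed by the obvious decision rule, reduce the guessing probability to an optimization of $\trace{M(\rho-\sigma)}$ over operators $0 \le M \le I$, and then invoke \lemref{lem:op.definitions.pos}, which already identifies this optimum with $D(\rho,\sigma)$. First I would argue that, without loss of generality, the distinguisher's strategy is a binary POVM $\{M,I-M\}$ with $0 \le M \le I$ together with the rule ``guess $\rho$ on the first outcome and $\sigma$ on the second''. A completely general strategy is a POVM $\{\Gamma_x\}_x$ whose outcome $x$ is fed into a (possibly randomized) guessing function; setting $M \coloneqq \sum_x \Pr[\text{guess }\rho \mid x]\,\Gamma_x$ produces an operator with $0 \le M \le I$ that reproduces exactly the same guessing statistics, so nothing is lost by restricting to this form.

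For such a strategy, since $\rho$ and $\sigma$ are each supplied with probability $1/2$ and the distinguisher is correct with probability $\trace{M\rho}$ (resp.\ $\trace{(I-M)\sigma}$) conditioned on having received $\rho$ (resp.\ $\sigma$), we get
\[
\distinguish{\rho,\sigma} = \max_{0 \le M \le I} \left( \frac{1}{2}\trace{M\rho} + \frac{1}{2}\trace{(I-M)\sigma} \right) = \frac{1}{2} + \frac{1}{2}\max_{0 \le M \le I} \trace{M(\rho-\sigma)} \ ,
\]
where I used $\trace{\sigma}=1$. Now \eqnref{eq:op.definitions.pos.d}, the content of \lemref{lem:op.definitions.pos}, states precisely that $\max_{0 \le M \le I}\trace{M(\rho-\sigma)} = D(\rho,\sigma)$, the maximum being attained by $M = \Gamma_+$, the projector onto the non‑negative eigenspace of $\rho-\sigma$. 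Substituting yields $\distinguish{\rho,\sigma} = \frac{1}{2} + \frac{1}{2}D(\rho,\sigma)$, as claimed.

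The only genuinely delicate point is the reduction in the first step: one must check that neither measuring with a finer POVM nor post‑processing the outcome with private randomness can beat the binary strategy. This holds because the success probability is affine in the randomization, so a deterministic decision rule is optimal, and any such rule groups the outcomes into two classes that collapse to a single effect $M$. Everything else is the one‑line computation above together with the already‑proven operator form of the trace distance. I note the structural parallel with the classical calculation preceding the theorem: $\Gamma_+$ is the quantum analogue of the set $\cZ' = \{z : P_Z(z) > P_{\tilde{Z}}(z)\}$ on which the distinguisher guesses $P_Z$. An equivalent route would bypass the binary reduction and instead combine \lemref{lem:op.definitions.opt} (an optimal POVM achieves $D(\rho,\sigma)$ in total variation distance) with the classical identity $\distinguish{P_Z,P_{\tilde{Z}}} = \frac{1}{2} + \frac{1}{2}D(P_Z,P_{\tilde{Z}})$ established just above.
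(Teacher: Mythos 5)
Your proof is correct and follows essentially the same route as the paper's: reduce to a binary POVM, rewrite the success probability as $\frac{1}{2} + \frac{1}{2}\max_{M}\trace{M(\rho-\sigma)}$, and conclude via \eqnref{eq:op.definitions.pos.d} from \lemref{lem:op.definitions.pos}. The only difference is that you explicitly justify the reduction to a two-outcome measurement with a deterministic decision rule, which the paper takes for granted.
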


\begin{proof}
  If a distinguisher is given one of two states $\rho$ or $\sigma$,
  each with probability $1/2$, its probability of guessing which one
  it holds is given by a maximization of all possible measurements it
  may do: it chooses some POVM $\{\Gamma_0,\Gamma_1\}$, where
  $\Gamma_0$ and $\Gamma_1$ are positive operators with
  $\Gamma_0+\Gamma_1 = I$, and measures the state it holds. If it gets
  the outcome $0$, it guesses that it holds $\rho$ and if it gets the
  outcome $1$, it guesses that it holds $\sigma$. The probability of
  guessing correctly is given by
\begin{align}
  \distinguish{\rho,\sigma} & = \max_{\Gamma_0,\Gamma_1} \left[
    \frac{1}{2}\trace{\Gamma_0\rho} +
    \frac{1}{2}\trace{\Gamma_1\sigma} \right] \notag \\
  & = \frac{1}{2}\max_{\Gamma_0} \left[ \trace{\Gamma_0\rho} +
    \trace{(I-\Gamma_0)\sigma} \right] \notag \\
  & = \frac{1}{2} + \frac{1}{2} \max_{\Gamma_0}
  \trace{\Gamma_0(\rho-\sigma)} \ . \label{eq:op.distinguishing.thm}
\end{align}
The proof concludes by plugging \eqnref{eq:op.definitions.pos.d} in
\eqnref{eq:op.distinguishing.thm}.
\end{proof}

\subsection{Probability of a failure}
\label{app:op.failure}

The trace distance is used as the security definition of QKD, because
the relevant measure for cryptographic security is the distinguishing
advantage (as discussed in \secref{sec:ac}), and as proven in
\thmref{thm:op.distinguishing}, the distinguishing advantage between
two quantum states corresponds to their trace distance. This
operational interpretation of the trace distance involves two worlds,
an ideal one and a real one, and the distance measure is the
(renormalized) difference between the probabilities of the
distinguisher correctly guessing to which world it is connected.

In this section we describe a different interpretation of the total
variation and trace distances. Instead of having two different worlds,
we consider one world in which the outcomes of interacting with the
real and ideal systems co-exist. And instead of these distance
measures being a difference between probability distributions, they
become the probability that any (classical) value occurring in one of
the systems does not \emph{simultaneously} occur in the other. We call
such an event a \emph{failure} \--- since one system is ideal, if the
other behaves differently, it must have failed \--- and the trace
distance becomes the probability of a failure occurring.

Given two random variables $Z$ and $\tilde{Z}$ with probability
distributions $P_Z$ and $P_{\tilde{Z}}$, any distribution
$P_{Z\tilde{Z}}$ with marginals given by $P_Z$ and $P_{\tilde{Z}}$ is
called a coupling of $P_Z$ and $P_{\tilde{Z}}$. The interpretation of
the trace distance treated in this section uses one specific coupling,
known as a \emph{maximal coupling} in probability theory~\cite{Tho00}.

\begin{thm}[maximal coupling] \label{thm:difference}
  Let $P_Z$ and $P_{\tilde{Z}}$ be two probability distributions over
  the same alphabet $\cZ$. Then there exists a probability
  distribution $P_{Z \tilde{Z}}$ on $\cZ \times \cZ$ such that
  \begin{equation} \label{eq:equalityprob}
    \Pr[Z  = \tilde{Z}] := \sum_{z} P_{Z \tilde{Z}}(z, z) \geq 1-  D(P_Z, P_{\tilde{Z}}) 
  \end{equation}
  and such that $P_Z$ and $P_{\tilde{Z}}$ are the marginals of $P_{Z
    \tilde{Z}}$, i.e., 
  \begin{align} \label{eq:marginal1}
    P_Z(z) & = \sum_{\tilde{z}} P_{Z \tilde{Z}}(z,\tilde{z})  \quad
    (\forall z \in \cZ)
    \\ \label{eq:marginal2}
    P_{\tilde{Z}}(\tilde{z}) & = \sum_z P_{Z \tilde{Z}}(z,\tilde{z})
    \quad (\forall \tilde{z} \in \cZ)\ .
  \end{align}
\end{thm}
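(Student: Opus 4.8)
The plan is to construct the coupling $P_{Z\tilde{Z}}$ explicitly by splitting off the ``overlap'' of the two distributions and then distributing the leftover mass independently. Concretely, for each $z \in \cZ$ put the diagonal mass $P_{Z\tilde{Z}}(z,z) \mathrel{:=} \min[P_Z(z), P_{\tilde{Z}}(z)]$ plus a correction coming from the off-diagonal part. Let $r \mathrel{:=} \sum_z \min[P_Z(z), P_{\tilde{Z}}(z)]$; by \eqnref{eq:vdist.alt} we have $r = 1 - D(P_Z, P_{\tilde{Z}})$, so this already accounts for a total diagonal mass of exactly $1 - D(P_Z,P_{\tilde{Z}})$, which is what \eqnref{eq:equalityprob} demands (in fact with equality, if we are careful that the off-diagonal part contributes nothing extra to the diagonal). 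The remaining mass $D(P_Z,P_{\tilde{Z}}) = 1-r$ must be placed off the diagonal in a way that repairs both marginals.

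First I would define the two ``excess'' sub-distributions $a(z) \mathrel{:=} P_Z(z) - \min[P_Z(z),P_{\tilde Z}(z)] = \max[P_Z(z)-P_{\tilde Z}(z),0]$ and $b(\tilde z) \mathrel{:=} P_{\tilde Z}(\tilde z) - \min[P_Z(\tilde z),P_{\tilde Z}(\tilde z)] = \max[P_{\tilde Z}(\tilde z)-P_Z(\tilde z),0]$. These are non-negative, their supports are disjoint (on $\cZ'$ and $\cZ''$ respectively, in the notation of \appendixref{app:op.distadv}), and each sums to $1-r = D(P_Z,P_{\tilde Z})$. If $D(P_Z,P_{\tilde Z}) > 0$, I then set the off-diagonal part to be the normalized product $P_{Z\tilde Z}(z,\tilde z) \mathrel{:=} a(z)\,b(\tilde z) / (1-r)$ for $z \neq \tilde z$, and I add this to the diagonal contribution defined above; if $D(P_Z,P_{\tilde Z})=0$ the distributions are equal and the diagonal coupling already works. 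One checks it is a valid probability distribution: all entries are non-negative and the total mass is $r + (1-r)^2/(1-r) = r + (1-r) = 1$. Because $a$ and $b$ have disjoint supports, the product term $a(z)b(\tilde z)$ vanishes whenever $z=\tilde z$, so the off-diagonal construction adds nothing to $\Pr[Z=\tilde Z]$ and \eqnref{eq:equalityprob} holds with equality. For the marginals, summing over $\tilde z$ gives $\min[P_Z(z),P_{\tilde Z}(z)] + a(z)\cdot\big(\sum_{\tilde z} b(\tilde z)\big)/(1-r) = \min[P_Z(z),P_{\tilde Z}(z)] + a(z) = P_Z(z)$, which is \eqnref{eq:marginal1}; the symmetric computation gives \eqnref{eq:marginal2}.

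I expect the only mildly delicate point to be bookkeeping around the disjoint-support property — verifying that $a$ and $b$ are supported on $\cZ'$ and $\cZ''$ respectively so that the product term never lands on the diagonal, and handling the edge case $D(P_Z,P_{\tilde Z}) = 0$ (equivalently $r=1$) where the normalization $1/(1-r)$ is undefined and one simply takes the pure diagonal coupling. Everything else is the routine verification sketched above. Note that since the construction achieves equality in \eqnref{eq:equalityprob}, this coupling is in fact the maximal one (no coupling can do better, since for any coupling $\Pr[Z=\tilde Z] \le \sum_z \min[P_Z(z),P_{\tilde Z}(z)] = 1 - D(P_Z,P_{\tilde Z})$), though the theorem as stated only asks for the inequality.
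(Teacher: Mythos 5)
Your construction is exactly the paper's: the diagonal overlap $\min[P_Z(z),P_{\tilde Z}(z)]$ plus the normalized product of the excess functions $R_Z R_{\tilde Z}/D(P_Z,P_{\tilde Z})$, with the same marginal and normalization checks. The proposal is correct and essentially identical to the paper's proof (and even adds the minor points the paper glosses over: the disjoint supports of the excesses, the degenerate case $D=0$, and the converse inequality showing the coupling is maximal).
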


It turns out that the inequality in \eqnref{eq:equalityprob} is tight,
i.e., one can also show that for any distribution $P_{Z\tilde{Z}}$,
$\Pr[Z = \tilde{Z}] \leq 1 - D(P_Z, P_{\tilde{Z}})$. We will however
not use this fact here.

Consider now a real system that outputs values given by $Z$ and an
ideal system that outputs values according to
$\tilde{Z}$. \thmref{thm:difference} tells us that there exists a
coupling of these distributions such that the probability of the real
system producing a different value from the ideal system is bounded by
the total variation distance between $P_Z$ and $P_{\tilde{Z}}$. Thus,
the real system behaves ideally except with probability $D(P_Z,
P_{\tilde{Z}})$.

We first prove this theorem, then in \corref{cor:difference} here
below we apply it to quantum systems.

\begin{proof}[Proof of \thmref{thm:difference}]
  Let $Q_{Z \tilde{Z}}$ be the real function on $\cZ \times \cZ$
  defined by
  \begin{align*}
    Q_{Z \tilde{Z}}(z,\tilde{z}) = \begin{cases} \min[P_Z(z),
      P_{\tilde{Z}}(\tilde{z})] & \text{if $z=\tilde{z}$} \\ 0 & \text{otherwise}\end{cases} 
  \end{align*}
  (for all $z, \tilde{z} \in \cZ$).  Furthermore, let $R_Z$ and
  $R_{\tilde{Z}}$ be the real functions on $\cZ$ defined by
  \begin{align*}
    R_Z(z) & = P_Z(z) - Q_{Z \tilde{Z}}(z, z) \ , \\
    R_{\tilde{Z}}(\tilde{z}) & = P_{\tilde{Z}}(\tilde{z}) - Q_{Z
        \tilde{Z}}(\tilde{z}, \tilde{z})  \ .
  \end{align*}
  We then define $P_{Z \tilde{Z}}$ by
  \begin{align*}
    P_{Z \tilde{Z}}(z, \tilde{z}) = Q_{Z \tilde{Z}}(z, \tilde{z}) +
    \frac{1}{D(P_Z, P_{\tilde{Z}})} R_Z(z) R_{\tilde{Z}}(\tilde{z}) \ .
  \end{align*}
   
  We now show that $P_{Z \tilde{Z}}$ satisfies the conditions of the
  theorem. For this, we note that for any $z \in \cZ$
  \begin{align*}
    R_Z(z) = P_Z(z) - \min[P_Z(z), P_{\tilde{Z}}(z)] \geq 0 \
    , 
  \end{align*}
  i.e., $R_Z$, and, likewise, $R_{\tilde{Z}}$, are nonnegative. Since
  $Q_{Z \tilde{Z}}$ is by definition also nonnegative, we have that
  $P_{Z \tilde{Z}}$ is nonnegative, too. From \eqnref{eq:marginal1}
  or~\eqref{eq:marginal2}, which we will prove below, it follows that
  $P_{Z \tilde{Z}}$ is also normalized. Hence, $P_{Z \tilde{Z}}$ is a
  valid probability distribution.

  To show \eqnref{eq:equalityprob} we use again the non\-/negativity
  of $R_{Z}$ and $R_{\tilde{Z}}$, which implies
  \begin{align*}
    \sum_z P_{Z \tilde{Z}}(z,z) \geq \sum_z Q_{Z \tilde{Z}}(z,z) =
    \sum_z \min[P_Z(z), P_{\tilde{Z}}(z)] = 1- D(P_Z, P_{\tilde{Z}}) \ ,
  \end{align*}
  where in the last equality we used the alternative formulation of the
  total variation distance from \eqnref{eq:vdist.alt}.

  To prove \eqnref{eq:marginal1}, we first note that
  \begin{align*}
    \sum_{\tilde{z}} R_{\tilde{Z}}(\tilde{z}) & = \sum_{\tilde{z}}
    P_{\tilde{Z}}(\tilde{z}) - \sum_{\tilde{z}}  Q_{Z
      \tilde{Z}}(\tilde{z},\tilde{z})  \\ & =
    1 - \sum_{\tilde{z}} \min[P_Z(\tilde{z}),
    P_{\tilde{Z}}(\tilde{z})] = D(P_Z,  P_{\tilde{Z}}) \ .
  \end{align*}
  Using this we find that for any $z \in \cZ$
  \begin{align*}
    \sum_{\tilde{z}} P_{Z \tilde{Z}}(z,\tilde{z})
  & =
    \sum_{\tilde{z}} Q_{Z \tilde{Z}}(z,\tilde{z}) + R_Z(z) \frac{1}{D(P_Z, P_{\tilde{Z}})}
    \sum_{\tilde{z}}  R_{\tilde{Z}}(\tilde{z})  \\
  & = Q_{Z \tilde{Z}}(z,z) +R_Z(z) = P_Z(z)\ .
  \end{align*}  
  By symmetry, this also proves \eqnref{eq:marginal2}.
\end{proof}

In the case of quantum states, \thmref{thm:difference} can be used to
couple the outcomes of any observable applied to the quantum systems.

\begin{cor}
\label{cor:difference}
For any states $\rho$ and $\sigma$ with trace distance $D(\rho,\sigma)
\leq \eps$, and any measurement given by its POVM operators
$\{\Gamma_w\}_w$ with outcome probabilities $P_W(w) = \trace{\Gamma_w
  \rho}$ and $P_{\tilde{W}}(w) = \trace{\Gamma_w \sigma}$, there
exists a coupling of $P_W$ and $P_{\tilde{W}}$ such that \[ \Pr[W \neq
\tilde{W}] \leq D(\rho,\sigma) \ .\]
\end{cor}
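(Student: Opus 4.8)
The plan is to reduce the claim to the classical \thmref{thm:difference} by first pushing the trace-distance bound through the measurement. The two ingredients are already available: \lemref{lem:op.definitions.opt}, which states that a POVM cannot increase the distance, and \thmref{thm:difference}, which produces a maximal coupling of two distributions.

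First I would observe that $P_W$ and $P_{\tilde{W}}$ are obtained by applying the \emph{same} POVM $\{\Gamma_w\}_w$ to $\rho$ and to $\sigma$, respectively. Hence \lemref{lem:op.definitions.opt}, applied to this particular POVM (rather than to the optimal one), gives
\[
  D(P_W,P_{\tilde{W}}) \leq \max_{\{\Gamma_x\}_x} D(P_X,Q_X) = D(\rho,\sigma) \leq \eps \ .
\]
This leaves a purely classical statement about the pair of distributions $P_W, P_{\tilde{W}}$.

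Then I would apply \thmref{thm:difference} to $P_W$ and $P_{\tilde{W}}$: it furnishes a joint distribution $P_{W\tilde{W}}$ on the outcome alphabet with marginals $P_W$ and $P_{\tilde{W}}$ and with $\sum_w P_{W\tilde{W}}(w,w) \geq 1 - D(P_W,P_{\tilde{W}})$. Passing to the complementary event and chaining with the previous display yields
\[
  \Pr[W \neq \tilde{W}] \leq D(P_W,P_{\tilde{W}}) \leq D(\rho,\sigma) \ ,
\]
so that $P_{W\tilde{W}}$ is the desired coupling, completing the proof.

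There is no genuine obstacle here; the only point worth a remark is the degenerate case $D(P_W,P_{\tilde{W}}) = 0$, in which the explicit formula for $P_{W\tilde{W}}$ in the proof of \thmref{thm:difference} divides by zero. But then $P_W = P_{\tilde{W}}$, and the diagonal coupling $P_{W\tilde{W}}(w,\tilde{w}) = P_W(w)$ for $w = \tilde{w}$ and $0$ otherwise already gives $\Pr[W \neq \tilde{W}] = 0 \leq D(\rho,\sigma)$, so the conclusion still holds.
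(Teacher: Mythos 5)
Your proposal is correct and matches the paper's own proof, which simply combines \lemref{lem:op.definitions.opt} with \thmref{thm:difference} exactly as you describe. Your remark about the degenerate case $D(P_W,P_{\tilde{W}})=0$ is a sensible extra observation but not needed for the argument.
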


\begin{proof}
  Immediate by combining \lemref{lem:op.definitions.opt} and
  \thmref{thm:difference}.
\end{proof}

\corref{cor:difference} tells us that if two systems produce states
$\rho$ and $\sigma$, then for any observations made on those systems
there exists a coupling for which the values of each measurement will
differ with probability at most $D(\rho,\sigma)$. It is instructive to
remember that this operational meaning is not essential to the
security notion or part of the framework in any way. It is an
intuitive way of understanding the trace distance, so as to better
choose a suitable value. It allows this distance to be thought of as a
maximum failure probability, and the value for $\eps$ to be chosen
accordingly.

\subsection{Measures of uncertainty}
\label{app:op.local}

Non\-/composable security models often use measures of uncertainty to
quantify how much information an adversary might have about a secret,
e.g., entropy as used by Shannon to prove the security of the one-time
pad~\cite{Sha49}. These measures are often weaker than what one
obtains using a global distinguisher, and in general do not provide
good security definitions. They are however quite intuitive and in
order to further illustrate the quantitative value of the
distinguishing advantage, we derive bounds on two of these measures of
uncertainty in terms of the trace distance, namely on the probability
of guessing the secret key in
\appendixref{app:op.local.guessing} and on the von Neumann entropy of the
secret key in \appendixref{app:op.local.entropy}.

\subsubsection{Probability of guessing}
\label{app:op.local.guessing}

Let $\rho_{KE} = \sum_{k \in \cK} p_k \proj{k}_K \tensor \rho^k_E$ be
the joint state of a secret key in the $K$ subsystem and Eve's
information in the $E$ subsystem. To guess the value of the key, Eve
can pick a POVM $\{\Gamma_k\}_{k \in \cK}$, measure her system, and
output the result of the measurement. Given that the key is $k$, her
probability of having guessed correctly is $\trace{\Gamma_k
  \rho^k_E}$. The average probability of guessing correctly for this
measurement is then given by the sum over all $k$, weighted by their
respective probabilities $p_k$. And Eve's probability of correctly
guessing the key is defined by taking the maximum over all
measurements,
\begin{equation} \label{eq:pguess} \pguess[\rho]{K|E} \coloneqq
  \max_{\{\Gamma_k\}} \sum_{k \in \cK} p_k \trace{\Gamma_k \rho^k_E} \
  . \end{equation}

\begin{lem}
  \label{lem:pguess}
For any bipartite state $\rho_{KE}$ with classical $K$, 
\[\pguess[\rho]{K|E} \leq
\frac{1}{|\cK|} + D\left( \rho_{KE},\tau_K \tensor \rho_E\right) \
, \] where $\tau_K$ is the fully mixed state.
\end{lem}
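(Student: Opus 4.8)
The plan is to bound $\pguess[\rho]{K|E}$ by relating it to the guessing probability against the ideal state $\tau_K \tensor \rho_E$, then using the non-increasing property of the trace distance under measurement. First I would fix an arbitrary POVM $\{\Gamma_k\}_{k \in \cK}$ achieving (or approximating) the maximum in \eqnref{eq:pguess}, so that $\pguess[\rho]{K|E} = \sum_k p_k \trace{\Gamma_k \rho^k_E}$. The key observation is that $\sum_k p_k \trace{\Gamma_k \rho^k_E}$ can be rewritten as $\trace{M \rho_{KE}}$ where $M \coloneqq \sum_k \proj{k}_K \tensor \Gamma_k$ is a valid operator with $0 \leq M \leq I$ (since each $\Gamma_k \leq I$ and the $\proj{k}$ are orthogonal). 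This rephrases Eve's guessing as the acceptance probability of a single measurement operator on the full state $\rho_{KE}$.

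Next I would apply the same operator $M$ to the ideal state $\tau_K \tensor \rho_E = \frac{1}{|\cK|}\sum_k \proj{k}_K \tensor \rho_E$, giving $\trace{M (\tau_K \tensor \rho_E)} = \frac{1}{|\cK|} \sum_k \trace{\Gamma_k \rho_E} = \frac{1}{|\cK|} \trace{(\sum_k \Gamma_k) \rho_E} = \frac{1}{|\cK|}$, using $\sum_k \Gamma_k = I$. Then by \eqnref{eq:op.definitions.pos} from \lemref{lem:op.definitions.pos} (or equivalently \eqnref{eq:op.definitions.pos.d}), we have $\trace{M(\rho_{KE} - \tau_K \tensor \rho_E)} \leq D(\rho_{KE}, \tau_K \tensor \rho_E)$. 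Combining the three displays gives
\[ \pguess[\rho]{K|E} = \trace{M \rho_{KE}} = \trace{M(\tau_K \tensor \rho_E)} + \trace{M(\rho_{KE} - \tau_K \tensor \rho_E)} \leq \frac{1}{|\cK|} + D\left(\rho_{KE}, \tau_K \tensor \rho_E\right) \ . \]
Since this holds for the optimal POVM, the claim follows.

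I do not expect any serious obstacle here — the only point requiring a little care is checking that $M = \sum_k \proj{k}_K \tensor \Gamma_k$ really satisfies $0 \leq M \leq I$, which uses both the orthogonality of the $\ket{k}$ and the POVM conditions $0 \leq \Gamma_k \leq I$; this makes $M$ block-diagonal with blocks $\Gamma_k$. One could alternatively phrase the argument via \lemref{lem:op.definitions.opt} by viewing the guessing procedure as a measurement whose outcome is the guessed key, comparing the induced output distributions on $\rho_{KE}$ and $\tau_K\tensor\rho_E$, but the direct operator argument above is cleaner and avoids discussing the joint measurement on $K$ and $E$.
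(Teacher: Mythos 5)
Your proposal is correct and follows essentially the same route as the paper's proof: both define $M = \sum_k \proj{k} \tensor \Gamma_k$ for the optimal POVM, compute $\trace{M(\tau_K \tensor \rho_E)} = 1/|\cK|$, and apply \eqnref{eq:op.definitions.pos} from \lemref{lem:op.definitions.pos}. Your added remark verifying $0 \leq M \leq I$ is a sensible detail the paper leaves implicit.
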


\begin{proof}
  Note that for $M \coloneqq \sum_k \proj{k} \tensor \Gamma_k$, where
  $\{\Gamma_k\}$ maximizes \eqnref{eq:pguess}, the guessing
  probability can equivalently be written \[\pguess[\rho]{K|E} =
  \trace{M \rho_{KE}} \ . \] Furthermore,
\[\ktrace{M \left(\tau_K \tensor \rho_{E}\right)} = \frac{1}{|\cK|} \
. \] In \lemref{lem:op.definitions.pos} we proved that for any
operator $0 \leq M \leq I$,
\[\trace{M(\rho - \sigma)} \leq D(\rho,\sigma) \ . \]
Setting $\rho = \rho_{KE}$ and $\sigma = \tau_K \tensor \rho_{E}$ in
the above inequality, we finish the proof:
\begin{align*}
\trace{M\rho_{KE}} & \leq \trace{M\left(\tau_K \tensor \rho_{E}\right)}
+ D\left( \rho_{KE} , \tau_K \tensor \rho_{E} \right) \ , \\
 \implies \qquad \pguess[\rho]{K|E} & \leq \frac{1}{|\cK|} + D\left( \rho_{KE},\tau_K
   \tensor \rho_E\right) \ . \qedhere
\end{align*}
\end{proof}

\subsubsection{Entropy}
\label{app:op.local.entropy}

Let $\rho_{KE} = \sum_{k \in \cK} p_k \proj{k}_K \tensor \rho^k_E$ be
the joint state of a secret key in the $K$ subsystem and Eve's
information in the $E$ subsystem. We wish to bound the von Neumann
entropy of $K$ given $E$ \--- $\Ss(K|E)_{\rho} = \Ss(\rho_{KE}) -
\Ss(\rho_E)$, where $S(\rho) \coloneqq -\trace{\rho \log \rho}$ \---
in terms of the trace distance $D\left( \rho_{KE},\tau_K \tensor
  \rho_E\right)$. We first derive a lower bound on the von Neumann
entropy, using the following theorem from Alicki and
Fannes~\cite{AF04}.

\begin{thm}[From~\cite{AF04}]
  \label{thm:AF04}
  For any bipartite states $\rho_{AB}$ and $\sigma_{AB}$ with trace
  distance $D(\rho,\sigma) = \eps \leq 1/4$ and $\dim \hilbert_A =
  d_A$, we have \[ \left| \Ss(A|B)_\rho - \Ss(A|B)_\sigma \right| \leq
  8 \eps \log d_A + 2h(2\eps) \ , \] where $h(p) = - p \log p - (1-p)
  \log (1-p)$ is the binary entropy.
\end{thm}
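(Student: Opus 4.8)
The plan is to reproduce the Alicki--Fannes argument, whose whole purpose is to obtain a bound depending on $d_A$ and not on $\dim\hilbert_B$ (which may be arbitrarily large). The naive route, applying the Fannes inequality for the von Neumann entropy separately to $\Ss(AB)_\rho$ versus $\Ss(AB)_\sigma$ and to $\Ss(B)_\rho$ versus $\Ss(B)_\sigma$, fails because it produces a factor $\log(d_A\dim\hilbert_B)$. The fix is a ``common refinement'' construction, which I expect to be the only genuinely non-routine step.

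First I would reduce to the one-sided estimate $\Ss(A|B)_\sigma - \Ss(A|B)_\rho \le \text{RHS}$; the matching bound with $\rho$ and $\sigma$ interchanged follows from the identical argument, and the two together give the absolute value. Set $\Delta \coloneqq \rho_{AB} - \sigma_{AB}$ and let $\Delta = \Delta_+ - \Delta_-$ be its decomposition into orthogonal positive operators; since $\trace{\Delta} = 0$, we have $\trace{\Delta_+} = \trace{\Delta_-} = D(\rho,\sigma) = \eps$. Then $\rho_{AB} + \Delta_- = \sigma_{AB} + \Delta_+$, so
\[
\omega_{AB} \coloneqq \frac{1}{1+\eps}\bigl(\rho_{AB} + \Delta_-\bigr) = \frac{1}{1+\eps}\bigl(\sigma_{AB} + \Delta_+\bigr)
\]
is a density operator which is simultaneously a mixture of $\{\rho_{AB},\, \Delta_-/\eps\}$ and of $\{\sigma_{AB},\, \Delta_+/\eps\}$, in both cases with weights $\bigl(\tfrac{1}{1+\eps},\, \tfrac{\eps}{1+\eps}\bigr)$.

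Next I would use the standard ``mixing'' bounds for the conditional von Neumann entropy: for any decomposition $\omega_{AB} = \sum_i p_i\,\omega^i_{AB}$ one has $\sum_i p_i\,\Ss(A|B)_{\omega^i} \le \Ss(A|B)_\omega \le \sum_i p_i\,\Ss(A|B)_{\omega^i} + H(\{p_i\})$, where $H$ is the Shannon entropy. Both follow from strong subadditivity applied to the flagged state $\sum_i p_i\,\omega^i_{AB}\tensor\proj{i}_X$ with a classical register $X$: the lower bound is $\Ss(A|BX) \le \Ss(A|B)$, and the upper bound is $\Ss(A|B) - \Ss(A|BX) = I(A\!:\!X|B) \le H(X)$, valid because $X$ is classical. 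Applying the upper bound to the first decomposition of $\omega$ and the lower bound to the second and subtracting, the $\Ss(A|B)_\omega$ terms cancel and I obtain
\[
\Ss(A|B)_\sigma - \Ss(A|B)_\rho \ \le\ \eps\bigl(\Ss(A|B)_{\Delta_-/\eps} - \Ss(A|B)_{\Delta_+/\eps}\bigr) + (1+\eps)\,h\!\left(\tfrac{\eps}{1+\eps}\right).
\]

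The step that delivers $\dim\hilbert_B$-independence is then the \emph{universal} bound $|\Ss(A|B)_\theta| \le \log d_A$, valid for every state $\theta_{AB}$ (upper: subadditivity; lower: Araki--Lieb): the perturbation $\Delta$ enters only with weight $\eps$ and contributes only through conditional entropies living on the $d_A$-dimensional system. It gives $\Ss(A|B)_{\Delta_-/\eps} - \Ss(A|B)_{\Delta_+/\eps} \le 2\log d_A$, hence $\Ss(A|B)_\sigma - \Ss(A|B)_\rho \le 2\eps\log d_A + (1+\eps)\,h(\eps/(1+\eps))$, and interchanging $\rho$ and $\sigma$ yields the same bound on $|\Ss(A|B)_\rho - \Ss(A|B)_\sigma|$. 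To reach the stated form I would finally use that for $\eps \le 1/4$ one has $\eps/(1+\eps) \le 2\eps \le 1/2$, on which range $h$ is increasing, so $h(\eps/(1+\eps)) \le h(2\eps)$ and $1+\eps \le 2$; relaxing $2\eps\log d_A$ to $8\eps\log d_A$ then gives $8\eps\log d_A + 2h(2\eps)$. (The argument in fact proves the sharper $2\eps\log d_A + (1+\eps)h(\eps/(1+\eps))$; the looser constants merely accommodate a cruder final estimate.) The remaining steps --- the Jordan decomposition, the mixing inequalities, and the numeric relaxation of $h$ --- are routine; the only place I expect to need care is setting up $\omega$ so that the perturbation is isolated as an $\eps$-weighted common component of $\rho$ and $\sigma$.
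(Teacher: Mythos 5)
Your proposal is correct. The paper does not actually prove this statement \--- it is imported from Alicki and Fannes~\cite{AF04} as a black box (only \corref{cor:AF04} is derived from it) \--- so there is no in-paper proof to compare against; what you have reconstructed is the standard Alicki\--Fannes argument in its streamlined form, with the common state $\omega = \frac{1}{1+\eps}(\rho+\Delta_-) = \frac{1}{1+\eps}(\sigma+\Delta_+)$ isolating the perturbation as an $\eps$-weighted component, the two mixing inequalities for $\Ss(A|B)$ (concavity and the $I(A{:}X|B)\leq H(X)$ bound for classical $X$), and the dimension bound $|\Ss(A|B)_\theta|\leq \log d_A$ doing exactly the work you say they do. Each step checks out, including the final numeric relaxation $h(\eps/(1+\eps))\leq h(2\eps)$ and $1+\eps\leq 2$ for $\eps\leq 1/4$; as you note, the argument in fact establishes the sharper bound $2\eps\log d_A + (1+\eps)h(\eps/(1+\eps))$, of which the stated inequality is a loosening consistent with the constants quoted from \cite{AF04} after converting the trace norm to the trace distance.
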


\begin{cor}
  \label{cor:AF04}
 For any state $\rho_{KE}$ with $D(\rho_{KE},\tau_K \tensor
  \rho_E) = \eps \leq 1/4$, where $\tau_K$ is the fully mixed state,
  we have
  \[ \Ss(K|E)_{\rho} \geq (1- 8\eps) \log |\cK| - 2h(2\eps) . \]
\end{cor}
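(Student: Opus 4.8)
The plan is to obtain Corollary~\ref{cor:AF04} as an essentially immediate consequence of the Alicki--Fannes bound (\thmref{thm:AF04}), applied to the pair of states $\rho_{AB} \coloneqq \rho_{KE}$ and $\sigma_{AB} \coloneqq \tau_K \tensor \rho_E$, with the role of $A$ played by the key register $K$ (so $d_A = \dim \hilbert_K = |\cK|$) and the role of $B$ played by Eve's register $E$. By hypothesis the trace distance between these two states is $D(\rho_{KE},\tau_K \tensor \rho_E) = \eps \leq 1/4$, so \thmref{thm:AF04} applies directly and yields
\[ \left| \Ss(K|E)_{\rho} - \Ss(K|E)_{\tau_K \tensor \rho_E} \right| \leq 8\eps \log |\cK| + 2h(2\eps) \ . \]

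The one computational step is to evaluate the conditional von Neumann entropy of the product state $\tau_K \tensor \rho_E$. Since $\Ss$ is additive on tensor products, $\Ss(\tau_K \tensor \rho_E) = \Ss(\tau_K) + \Ss(\rho_E)$, and therefore by definition $\Ss(K|E)_{\tau_K \tensor \rho_E} = \Ss(\tau_K \tensor \rho_E) - \Ss(\rho_E) = \Ss(\tau_K) = \log |\cK|$, using that $\tau_K$ is the fully mixed state on a space of dimension $|\cK|$. Substituting this into the displayed inequality and keeping only the lower bound (i.e.\ discarding the easy direction of the absolute value) gives
\[ \Ss(K|E)_{\rho} \geq \log |\cK| - 8\eps \log |\cK| - 2h(2\eps) = (1-8\eps)\log |\cK| - 2h(2\eps) \ , \]
which is the claim.

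There is no real obstacle here: the only things to be careful about are (i) correctly matching the hypothesis $\eps \leq 1/4$ of \thmref{thm:AF04} — which is exactly the hypothesis imposed in the corollary — and (ii) the elementary identity $\Ss(K|E)_{\tau_K \tensor \rho_E} = \log|\cK|$. Everything else is just a rearrangement, and the apparent loss (we use only one side of the two-sided Alicki--Fannes estimate) is expected, since the corollary only asserts a lower bound on $\Ss(K|E)_\rho$.
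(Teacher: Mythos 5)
Your proposal is correct and follows exactly the paper's own (one-line) argument: apply \thmref{thm:AF04} to the pair $\rho_{KE}$ and $\tau_K \tensor \rho_E$, note that $\Ss(K|E)_{\tau_K \tensor \rho_E} = \log|\cK|$, and rearrange. The paper simply states this is ``immediate''; your write-up supplies the same steps explicitly.
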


\begin{proof}
  Immediate by plugging $\rho_{KE}$ and $\tau_K \tensor \rho_E$ in
  \thmref{thm:AF04}.
\end{proof}

Given the von Neumann entropy of $K$ conditioned on $E$,
$\Ss(K|E)_{\rho}$, one can also upper bound the trace distance of
$\rho_{KE}$ from $\tau_K \tensor \rho_E$ by relating $\Ss(K|E)_{\rho}$
to the relative entropy of $\rho_{KE}$ to $\tau_K \tensor \rho_E$ \---
the relative entropy of $\rho$ to $\sigma$ is defined as $\Ss(\rho \|
\sigma) \coloneqq \trace{\rho \log \rho} - \trace{\rho \log \sigma}$.

\begin{lem}
  \label{lem:entropybound}
  For any quantum state $\rho_{KE}$,
  \[ D(\rho_{KE},\tau_K \tensor \rho_E) \leq
  \sqrt{\frac{1}{2}\left(\log |\cK| - \Ss(K|E)_{\rho}\right)} \ . \]
\end{lem}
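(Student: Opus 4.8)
The plan is to recognize $\log|\cK| - \Ss(K|E)_\rho$ as the relative entropy of $\rho_{KE}$ to $\tau_K \tensor \rho_E$ and then apply the quantum Pinsker inequality. Concretely, I would first compute $\Ss(\rho_{KE}\|\tau_K\tensor\rho_E)$ directly from the definition $\Ss(\rho\|\sigma) = \trace{\rho\log\rho} - \trace{\rho\log\sigma}$. Using $\log(\tau_K\tensor\rho_E) = (\log\tau_K)\tensor I + I\tensor(\log\rho_E)$, the fact that $\log\tau_K$ acts as $-\log|\cK|$ times the identity on the support of $\tau_K$ (which contains the support of $\rho_K$ since $K$ is classical over $\cK$), and $\rho_E = \trace[K]{\rho_{KE}}$, one gets $\trace{\rho_{KE}\log(\tau_K\tensor\rho_E)} = -\log|\cK| - \Ss(\rho_E)$. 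Since $\trace{\rho_{KE}\log\rho_{KE}} = -\Ss(\rho_{KE})$, this yields
\[
  \Ss(\rho_{KE}\|\tau_K\tensor\rho_E) = \log|\cK| + \Ss(\rho_E) - \Ss(\rho_{KE}) = \log|\cK| - \Ss(K|E)_\rho \ ,
\]
using the paper's definition $\Ss(K|E)_\rho = \Ss(\rho_{KE}) - \Ss(\rho_E)$. I would note in passing that this relative entropy is finite, because $\operatorname{supp}\rho_{KE} \subseteq \operatorname{supp}(\tau_K\tensor\rho_E)$, and that its nonnegativity reproduces the familiar bound $\Ss(K|E)_\rho \leq \log|\cK|$, so the quantity under the square root in the lemma is indeed nonnegative.

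The second and final step is the quantum Pinsker inequality $D(\rho,\sigma) \leq \sqrt{\tfrac12 \Ss(\rho\|\sigma)}$, applied to $\rho = \rho_{KE}$ and $\sigma = \tau_K\tensor\rho_E$; combined with the identity above it gives the claimed bound directly. Since Pinsker's inequality is not established elsewhere in this paper, I would either cite a standard reference such as \cite{NC00}, or keep the argument self-contained by deriving it from tools already at hand: by \lemref{lem:op.definitions.opt}, $D(\rho_{KE},\tau_K\tensor\rho_E)$ equals the total variation distance $D(P_X,Q_X)$ between the outcome distributions of an optimal POVM, and then the classical Pinsker inequality $D(P_X,Q_X) \leq \sqrt{\tfrac12\,\Ss(P_X\|Q_X)}$ together with monotonicity of relative entropy under measurement, $\Ss(P_X\|Q_X) \leq \Ss(\rho_{KE}\|\tau_K\tensor\rho_E)$, closes the gap (both being standard facts).

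I do not anticipate a genuine obstacle; the computation is routine and Pinsker's inequality is classical. The only points requiring a little care are the bookkeeping of the logarithm base \--- with $\log = \log_2$ the sharp constant in Pinsker is $(\ln 2)/2 \leq 1/2$, so the bound as stated, with constant $1/2$, is comfortably implied \--- and checking the support condition that makes the relative entropy (and hence the identity $\Ss(\rho_{KE}\|\tau_K\tensor\rho_E) = \log|\cK| - \Ss(K|E)_\rho$) well defined.
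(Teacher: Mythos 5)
Your proposal matches the paper's proof essentially step for step: the paper also establishes the identity $\Ss(\rho_{KE}\,\|\,\tau_K \tensor \rho_E) = \log|\cK| - \Ss(K|E)_\rho$ and then invokes the quantum Pinsker inequality $\Ss(\rho\|\sigma) \geq 2\left(D(\rho,\sigma)\right)^2$, citing Ohya--Petz. Your extra remarks on support conditions, the logarithm base, and the option of deriving Pinsker from its classical version via an optimal measurement are all correct but not needed beyond what the paper does.
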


\begin{proof}
  From the definitions of the relative and von Neumann entropies we
  have
  \[ \Ss \left(\rho_{KE} \middle\| \tau_K \tensor \rho_E \right) =
  \log |\cK| + \Ss \left(\rho_{KE} \middle\| \id_K \tensor \rho_E
  \right) = \log |\cK| - \Ss(K|E)_{\rho},\] where $\id_K$ is the
  identity matrix. We then use the following bound on the relative
  entropy \cite[Theorem~1.15]{OP93} to conclude the proof:
  \[ \Ss\left(\rho \middle\| \sigma \right) \geq 2
  \left(D(\rho,\sigma)\right)^2. \qedhere\]
\end{proof}

\corref{cor:AF04} and \lemref{lem:entropybound} can be written
together in one equation, upper and lower bounding the conditional von
Neumann entropy:
\[ (1- 8\eps) \log |\cK| - 2h(2\eps) \leq \Ss(K|E)_{\rho} \leq \log
|\cK| - 2\eps^2\ , \] where $\eps = D(\rho_{KE},\tau_K \tensor
\rho_E)$.

\section{Alternative secrecy criterion}
\label{app:alternative}

In \secref{sec:security} we derived two conditions \--- secrecy and
correctness \--- that together imply that a real QKD system is
indistinguishable from the ideal one. An alternative definition for
\emph{$\eps$\=/secrecy} was proposed in the
literature~\cite{TSSR10,TLGR12}:
\begin{equation}\label{eq:alternative.sec}
  (1-\pabort)\min_{\sigma_E}D\left(\rho_{KE}, \tau_K \tensor
    \sigma_E\right) \leq \eps \ .\end{equation}

This alternative notion is equivalent to the standard definition of
secrecy (\eqnref{eq:d}) up to a factor $2$, as can be seen by the
following calculation. Let $\varphi_E$ be the state for which the
minimum in \eqnref{eq:alternative.sec} is achieved. Then,
\begin{align*}
D\left(\rho_{KE}, \tau_K \tensor \rho_E\right) & \leq
D\left(\rho_{KE}, \tau_K \tensor \varphi_E\right) + D\left(\tau_K \tensor
  \varphi_{E},
  \tau_K \tensor \rho_E\right) \\
  & \leq 2 D\left(\rho_{KE}, \tau_K \tensor \varphi_E\right) \ .
\end{align*}
We thus have 
\[ D\left(\rho_{KE}, \tau_K \tensor \rho_E\right) \leq 2
\min_{\sigma_E}D\left(\rho_{KE}, \tau_K \tensor \sigma_E\right) \leq 2
D\left(\rho_{KE}, \tau_K \tensor \rho_E\right)\ .\]
This means that any QKD scheme proven secure with one definition is
still secure according to the other, with a minor adjustment of the
failure parameter $\eps$.

However, we do not know how to derive this alternative notion from a
composable framework. In particular, it is not clear if the failure
$\eps$ from \eqnref{eq:alternative.sec} is additive under parallel
composition. For example, the concatenation of two keys that each,
individually, satisfy \eqnref{eq:alternative.sec}, could possibly have
distance from uniform greater than $2\eps$. For this reason, the arXiv
version of \cite{TLGR12} was updated to use \eqnref{eq:d} instead.

\section{More Abstract Cryptography}
\label{app:moreAC}

\subsection{Distinguishing metric}
\label{app:moreAC.dist}

A distinguisher has been introduced as a single entity that has to
guess which of two systems it is holding. Mathematically, it is more
convenient to model a distinguisher as a set $\fD$. Each
element $D \in \fD$ is a system with $n+1$ interfaces. $n$ of
them connect to the $n$ interfaces of a resource $\aR$ or $\aS$ and
the last interface outputs a bit, as illustrated in
\figref{fig:distinguisher} on \pref{fig:distinguisher}. Thus, for any
$D \in \fD$ and any compatible system $\aR$, $D(\aR)$ is a
binary random variable. The distinguishing advantage can be rewritten
as
\[ d^{\fD}(\aR,\aS) = \max_{D \in \fD} \Pr[D(\aR)=1]
- \Pr[D(\aS)=1] \ .\]

For a set $\fD$ to be a valid distinguisher, it has to be
closed under composition with all resources and converters. For a
converter $\alpha$ and a resource $\aT$, define
\begin{align*}
  D(\alpha \cdot) : \ & \aR \mapsto D(\alpha\aR) \\
  D\left(\cdot \| \aT \right) : \ & \aR \mapsto D\left(\aR \| \aT \right) \ .
\end{align*}
A distinguisher $\fD$ is closed under composition with a set of
converters $\Sigma$ and a set of resources $\Gamma$,\footnote{For a
  set of converters $\Sigma$ and a set of resources $\Gamma$ to be
  valid, they also have to be closed under composition, i.e., for all
  $\alpha,\beta \in \Sigma$ and all $\aR,\aS \in \Gamma$, \[
  \alpha\beta \in \Sigma \ , \qquad \alpha\|\beta \in \Sigma \ ,
  \qquad \alpha\aR \in \Gamma \ , \quad \text{and} \qquad \aR\|\aS \in
  \Gamma \ .\]} if for all $D \in \fD$, all $\alpha \in
\Sigma$ and all $\cT \in \Gamma$,
\begin{equation} \label{eq:moreAC.dist.closure} D(\alpha \cdot) \in
  \fD \qquad \text{and} \qquad D\left(\cdot \| \aT \right)
  \in \fD \ .  \end{equation} For example, the set of all
possible distinguishers is closed under composition with the sets of
all possible converters and resources, and is used for
information\-/theoretic security. The set of all efficient
distinguishers is closed under composition with the sets of all
efficient converters and resources, and is used for computational
security. The fact that the distinguishing advantage is
non\-/increasing under composition (see \eqnref{eq:axioms.nonincrease}
on \pref{eq:axioms.nonincrease}) follows directly from the closure of
the distinguisher, \eqnref{eq:moreAC.dist.closure}.

\subsection{Generic protocol composition}
\label{app:generic}

In this section we briefly sketch why the security criteria of
\defref{def:security} guarantee that the composition of two secure
protocols is also secure. We write up the argument in the case where
an adversary is present (\eqnref{eq:def.sec} from
\defref{def:security}). The case with no adversary follows similarly
with the simulator $\sigma$ removed and a filter connected to Eve's
interface of every resource.
Proofs of this can be found in \cite{MR11,Mau12}.

\subsubsection{Sequential composition}
\label{app:generic.seq}

Let protocols $\pi$ and $\pi'$ construct $\aS_\lozenge$ from $\aR_\sharp$ and $\aT_\square$
from $\aS_\lozenge$ within $\eps$ and $\eps'$, respectively, i.e.,
\[\aR_\sharp \xrightarrow{\pi,\eps}\aS_\lozenge \qquad \text{and} \qquad \aS_\lozenge
\xrightarrow{\pi',\eps'}\aT_\square \ .\] It then follows from the
triangle inequality of the distinguishing metric that $\pi'\pi$
constructs $\aT_\square$ from $\aR_\sharp$ within $\eps+\eps'$, \[\aR_\sharp
\xrightarrow{\pi'\pi,\eps+\eps'}\aT_\square \ .\]

To see why this holds when an adversary is present, note that since
$\pi\aR$ cannot be distinguished from $\aS\sigma$ with advantage
greater than $\eps$, by \eqnref{eq:axioms.nonincrease} a distinguisher
running $\pi'$ in particular cannot distinguish them. Hence
\begin{equation*} \label{eq:composability.security.1} \pi'\pi\aR
  \close{\eps} \pi'\aS\sigma \ .\end{equation*} Likewise, a
distinguisher running $\sigma$ does not know if it is interacting with
$\pi'\aS$ or $\aT\sigma'$, i.e.,
\begin{equation*} \label{eq:composability.security.2} \pi'\aS\sigma
  \close{\eps'} \aT\sigma'\sigma \ .\end{equation*} Combining
the two equations above, we get \[\pi'\pi\aR
\close{\eps+\eps'} \aT\sigma'\sigma \ .\]

So there exists a simulator, namely $\sigma'\sigma$, such that the
real and ideal systems cannot be distinguished with advantage greater
than $\eps+\eps'$. 

\subsubsection{Parallel composition}
\label{app:generic.para}

The argument for parallel composition is similar to that of sequential
composition. Let $\pi$ and $\pi'$ construct $\aS_\lozenge$ and
$\aS'_\square$ from $\aR_\sharp$ and $\aR'_\flat$ within $\eps$ and
$\eps'$, respectively, i.e., \[\aR_\sharp
\xrightarrow{\pi,\eps}\aS_\lozenge \qquad \text{and} \qquad \aR'_\flat
\xrightarrow{\pi',\eps'} \aS'_\square \ .\] If these resources and
protocols are composed in parallel, we find that $\pi \| \pi'$
constructs $\aS_\lozenge \| \aS'_\square$ from $\aR_\sharp \|
\aR'_\flat$ within $\eps+\eps'$, \[\aR_\sharp \| \aR'_\flat
\xrightarrow{\pi \| \pi',\eps+\eps'} \aS_\lozenge \| \aS'_\square \
,\] where $\aR_\sharp \| \aR'_\flat \coloneqq \left(\aR \| \aR',\sharp
  \| \flat\right)$ is the filtered resource consisting of the parallel
composition of the resources and filters from $\aR_\sharp$ and
$\aR'_\flat$.

For the case where an adversary is present, this can be proven as
follows. From the definition of parallel composition of converters in
\eqnref{eq:axioms.order} we have \[\left( \pi \| \pi' \right) \left(
  \aR \| \aR' \right) = \left( \pi \aR \right) \| \left( \pi' \aR'
\right) \ .\] Since for some $\sigma$, $\pi\aR$ cannot be
distinguished from $\aS\sigma$ with advantage greater than $\eps$, by
\eqnref{eq:axioms.nonincrease} running $\pi'\aR'$ in parallel cannot
help the distinguisher, hence \[ \pi\aR \| \pi'\aR' \close{\eps}
\aS\sigma \| \pi'\aR' \ . \] For the same reason we also have \[
\aS\sigma \| \pi'\aR' \close{\eps'} \aS\sigma \| \aS'\sigma' \ . \]
Combining the equations above and one more use of
\eqnref{eq:axioms.order}, we obtain \[\left( \pi \| \pi' \right)
\left( \aR \| \aR' \right) \close{\eps+\eps'} \left( \aS \| \aS'
\right) \left( \sigma \| \sigma' \right) \ .\]

Thus, there exists a simulator, namely $\sigma\|\sigma'$, such that
the real and ideal systems cannot be distinguished with advantage
greater than $\eps+\eps'$. 

\section{Composition of authentication and QKD}
\label{app:ex.auth}

In this section we model recursive composition of an authentication
protocol and QKD. Starting with a short (uniform) key, we construct an
authentic channel, which is then used in a QKD protocol to obtain a
long key. Part of this new key is then consumed in another round of
authentication, which is used by QKD, resulting in more shared secret
key. This can be repeated indefinitely. From the composability of the
security definitions we immediately have that the total failure is
bounded by the sum of the individual failures of (each run of) each
protocol.

The goal of this section is to write this out explicitly: we show that
\begin{equation} \label{eq:auth.recursive} D\left(
    \rho_{A^nB^nE},\tilde{\rho}_{A^nB^nE} \right) \leq
  n (\eps^{\auth}+\eps^{\qkd}) \ ,
\end{equation} where $A^n$ and $B^n$ are registers containing all
shared, unused secret keys generated in $n$ rounds of authentication
and QKD, $E$ contains the adversary's information, $\rho_{A^nB^nE}$
and $\tilde{\rho}_{A^nB^nE}$ are the states obtained by a
distinguisher interacting with the real and ideal
systems,\footnote{Unlike the examples from \secref{sec:ex}, we cannot  
  simplify the state $\tilde{\rho}_{A^nB^nE}$ by conditioning on
  obtaining a key and writing it as $\tau_{A^nB^n} \tensor \rho_E$,
  because the $n$ repetitions of the protocol lead to $n+1$ events: it
  never aborted, aborted after one round, two rounds, etc.} and
$\eps^{\auth}$ and $\eps^{\qkd}$ are the (probabilities of)
failure of the authentication and QKD protocols in each
round.\footnote{One can also use different parameters in each round,
  e.g., so that the failure in round $i$ is half of that in round
  $i-1$, and the sum $\sum_i (\eps^{\auth}_i + \eps^{\qkd}_i)$ is
  bounded for all $n$.}

For this recursive construction it is not necessary to use the
(interactive) authentication protocol of Renner and Wolf~\cite{RW03}
which only requires a password. Instead we use the simpler universal
hashing of Wegman and Carter~\cite{WC81,Sti94}, which appends a
tag\footnote{This tag is often called a message authentication code
  (MAC) in the literature.} to the message that is sent, but requires
an (almost) uniform key. A more detailed analysis of this
authentication method, including key recycling and a proof that strong
universal hashing meets the corresponding security definition can be
found in \cite{Por14}.

In \appendixref{app:ex.auth.model} we sketch how to construct an
authentic channel from a shared secret key and insecure channel
resource, and provide a generic
simulator. In \appendixref{app:ex.auth.auth} we compose multiple
authentication protocols in parallel so that we may have multiple use
authentic channels in QKD. In \appendixref{app:ex.auth.qkd} we compose
such a construction with QKD, obtaining a key expansion protocol. And
finally in \appendixref{app:ex.auth.exp} we iteratively compose a key
expansion protocol with itself, resulting in a continuous stream of
secret key bits.

\subsection{Authentication}
\label{app:ex.auth.model}

The QKD and one-time pad protocols discussed in this work make use of
authentic channels as depicted in \figref{fig:auth.resource.simple},
which always deliver the correct message to the receiver. This is
however impossible to construct from an insecure channel, since Eve
can always cut the communication between Alice and Bob, and prevent
any message from being transmitted. What can be constructed, is a
channel which guarantees that Bob does not receive a corrupted
message. He either receives the correct message sent by Alice, or an
error, which symbolizes an attempt by Eve to change the message. This
can be modeled by giving Eve's idealized interface two controls: the
first provides her with Alice's message, the second allows her to
input one bit that prevents Alice's message from being delivered to
Bob and produces an error instead. We illustrate this in
\figref{fig:auth.resource.switch}.

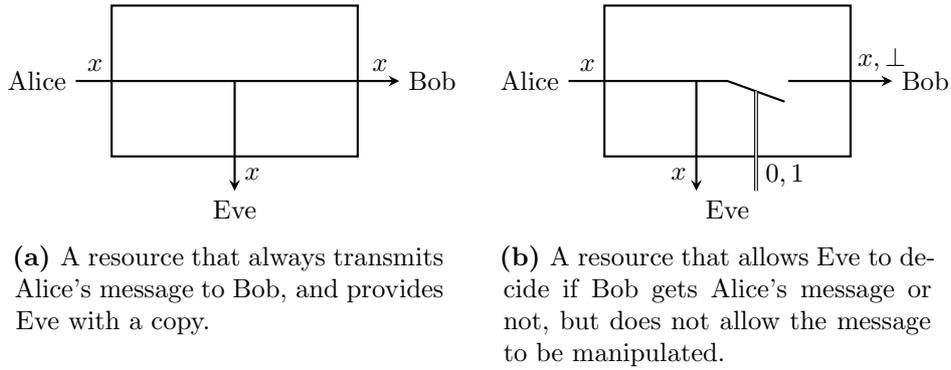
\begin{figure}[htb]
  \subcaptionbox[Simple authentic
  channel]{\label{fig:auth.resource.simple}A resource that always
    transmits Alice's message to Bob, and provides Eve with a
    copy.}[.5\textwidth][l]{
\begin{tikzpicture}\small

\node[largeResource] (keyBox) at (0,0) {};
\node (alice) at (-2.6,0) {Alice};
\node (bob) at (2.6,0) {Bob};
\node (eve) at (0,-1.7) {Eve};
\node (ajunc) at (eve.north |- alice) {};

\draw[sArrow] (alice) to node[pos=.06,auto] {$x$} node[pos=.94,auto] {$x$} (bob);
\draw[sArrow] (ajunc.center) to node[pos=.85,auto] {$x$} (eve);

\end{tikzpicture}
} \subcaptionbox[Authentic channel with
switch]{\label{fig:auth.resource.switch}A resource that allows Eve to
  decide if Bob gets Alice's message or not, but does not allow the
  message to be manipulated.}[.5\textwidth][r]{
\begin{tikzpicture}\small

\node[largeResource] (keyBox) at (0,0) {};
\node (alice) at (-2.6,0) {Alice};
\node (bob) at (2.6,0) {Bob};
\node (eve) at (0,-1.7) {Eve};
\node (ajunc) at (eve.north west |- alice) {};

\draw[thick] (alice) to node[pos=.12,auto] {$x$} (0,0) to node[pos=.5]
(ejunc) {} +(160:-.8);
\draw[sArrow] (ajunc.center) to node[pos=.85,auto,swap] {$x$} (eve.north west);
\draw[sArrow] (.8,0) to node[pos=.9,auto] {$x,\bot$} (bob);
\draw[double] (ejunc.center |- eve.north) to node[pos=.15,auto,swap] {$0,1$} (ejunc.center);

\end{tikzpicture}
} \caption[Authentic channel resources]{\label{fig:auth.resource}The
  authentic channel on the right can be constructed by an
  authentication protocol. The one on the left is a simplification as
  used in the one-time pad (\figref{fig:otp.real}) or QKD
  (\figref{fig:qkd.real}) constructions.}
\end{figure}

A construction of this authentic channel resource from an insecure
channel and a shared secret key resource is typically accomplished by
computing the hash $h_k(x)$ of the message $x$, and sending the string
$x \| h_k(x)$ to Bob, where $k$ is the shared secret key and
$\{h_k\}_{k \in \cK}$ a family of hash functions~\cite{Sti94}. Alice's
part of the authentication protocol $\pi^{\auth}_A$ thus gets a key
$k$ from an ideal key resource, a message $x$ from Alice, and sends $x
\| h_k(x)$ down the insecure channel. When Bob receives a string $x'
\| y'$, he needs to check whether $y' = h_k(x')$. His part of the
protocol gets a key $k$ from an ideal key resource, a message $x' \|
y'$ from the channel, and outputs $x'$ if $y' = h_k(x')$, otherwise an
error $\bot$. If the ideal key resource used by both players produces
an error instead of a key, Alice and Bob abort, and the protocol is
trivially secure. So for simplicity we omit this possibility in the
following, and assume that they always get a shared secret key. This
is depicted in \figref{fig:auth.real}.

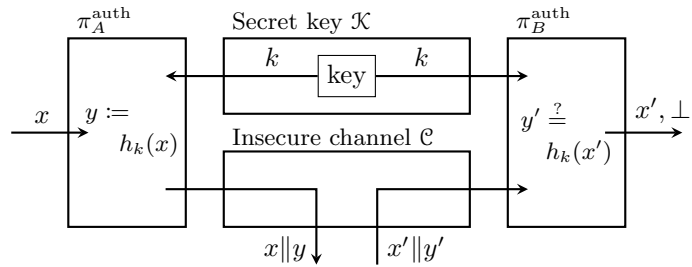
\begin{figure}[htb]
\begin{centering}

\begin{tikzpicture}\small

\def\t{4.413} 
\def\u{2.89} 
\def\v{.75}

\node[pnode] (a1) at (-\u,\v) {};
\node[pnode] (a2) at (-\u,0) {};
\node[pnode] (a3) at (-\u,-\v) {};
\node[protocol,text width=1.1cm] (a) at (-\u,0) {\footnotesize $y
  \coloneqq $\\$\quad \ h_k(x)$};
\node[yshift=-2,above right] at (a.north west) {\footnotesize
  $\pi^{\auth}_A$};
\node (alice) at (-\t,0) {};

\node[pnode] (b1) at (\u,\v) {};
\node[pnode] (b2) at (\u,0) {};
\node[pnode] (b3) at (\u,-\v) {};
\node[protocol,text width=1.2cm] (b) at (\u,0) {\footnotesize $y' \stackrel{?}{=}$\\$\quad h_k(x')$};
\node[yshift=-2,above right] at (b.north west) {\footnotesize $\pi^{\auth}_B$};
\node (bob) at (\t,0) {};

\node[thinResource] (keyBox) at (0,\v) {};
\node[draw] (key) at (0,\v) {key};
\node[yshift=-2,above right] at (keyBox.north west) {\footnotesize
  Secret key $\aK$};
\node[thinResource] (channel) at (0,-\v) {};
\node[yshift=-1.5,above right] at (channel.north west) {\footnotesize
  Insecure channel $\aC$};
\node (eveleft) at (-.4,-1.75) {};
\node (everight) at (.4,-1.75) {};
\node (ajunc) at (eveleft |- a3) {};
\node (bjunc) at (everight |- b3) {};

\draw[sArrow] (key) to node[auto,swap,pos=.3] {$k$} (a1);
\draw[sArrow] (key) to node[auto,pos=.3] {$k$} (b1);

\draw[sArrow] (alice.center) to  node[auto,pos=.4] {$x$} (a2);
\draw[sArrow] (b2) to node[auto,pos=.75] {$x',\bot$} (bob.center);

\draw[sArrow] (a3) to (ajunc.center)
to node[pos=.8,auto,swap] {$x \| y$} (eveleft.center);
\draw[sArrow] (everight.center) to node[pos=.2,auto,swap] {$x' \| y'$}
(bjunc.center) to (b3);

\end{tikzpicture}

\end{centering}
\caption[Real authentication system]{\label{fig:auth.real}The real
  authentication system \--- Alice has access to the left interface,
  Bob to the right interface and Eve to the lower interface \---
  consists of the authentication protocol
  $(\pi^{\auth}_A,\pi^{\auth}_B)$, and the secret key and insecure
  channel resources, $\aK$ and $\aC$.}
\end{figure}

In the case where no adversary is present and filters\footnote{Unlike
  the case of QKD, here we are not interested in a filter which
  introduces (honest) noise on the channel, as this can be removed by
  encoding the communication with an appropriate error correcting
  code. Therefore, the filter on the (real) insecure channel
  faithfully forwards the message, and the filter on the (ideal)
  authentic channel allows the message to be transmitted.} cover Eve's
interfaces of Figures~\ref{fig:auth.resource.switch} and
\ref{fig:auth.real}, the real and ideal systems are indistinguishable
as they are both identity channels which faithfully transmit $x$ from
Alice to Bob. So in the following we only consider the case where an
adversary is present, condition~\eqref{eq:def.sec} in
\defref{def:security}.

In the ideal setting, the authentic channel
(\figref{fig:auth.resource.switch}) has the same interface on Alice's
and Bob's sides as the real setting (\figref{fig:auth.real}): Alice
can input a message, and Bob receives either a message or an
error. However, Eve's interface looks quite different: in the real
setting she can modify the transmission on the insecure channel,
whereas in the ideal setting the adversarial interface provides only
controls to read the message and interrupt the transmission. From
\defref{def:security} we have that an authentication protocol
constructs the authentic channel if there exists a simulator
$\sigma^{\auth}_E$ that can recreate the real interface while
accessing just the idealized one. An obvious choice for the simulator
is to first generate its own key $k$ and output $x \| h_k(x)$. Then
upon receiving $x' \| y'$, it checks if $x' \| y' = x \| h_k(x)$ and
cuts the transmission on the authentic channel if this does not
hold. We illustrate this in \figref{fig:auth.ideal}.

\begin{figure}[htb]
\begin{centering}

\begin{tikzpicture}\small

\def\t{2.368} 
\def\u{-1.1}
\def\v{.75}
\def\w{-2.45} 

\node[thinResource] (channel) at (0,\v) {};
\node[yshift=-1.5,above right] at (channel.north west) {\footnotesize
  Authentic channel $\aA$};
\node (alice) at (-\t,\v) {};
\node (bob) at (\t,\v) {};

\node[simulator] (sim) at (0,\u) {};
\node[xshift=1.5,below left] at (sim.north west) {\footnotesize
  $\sigma^{\auth}_E$};
\node[snode,ellipse] (sleft) at (-.809,\u) {};
\node[snode] (sright) at (.809,\u) {};
\draw[dashed] (sim.north) to (sim.south);

\node (ajunc) at (sleft |- alice) {};
\node (bjunc) at (sright |- bob) {};

\draw[thick] (alice.center) to node[pos=.15,auto] {$x$} (.4,\v) to node[pos=.54] (ejunc) {} +(160:-.8);
\draw[sArrow] (ajunc.center) to node[pos=.63,auto,swap] {$x$} (sleft);
\draw[sArrow] (1.2,\v) to node[pos=.75,auto] {$x,\bot$} (bob.center);
\draw[double] (sright) to node[pos=.4,auto,swap] {$0,1$} (ejunc.center);

\node (sltext) at (-.809,\u) {\footnotesize $y = h_k(x)$};
\node[text width=1.2cm] (srtext) at (.809,\u) {\footnotesize $x \| y
  \stackrel{?}{=}$\\$\quad \ x' \| y'$};

\node (eveleft) at (sleft |- 0,\w) {};
\node (everight) at (sright |- 0,\w) {};
\draw[sArrow] (sleft) to node[pos=.75,auto,swap] {$x \| y$} (eveleft.center);
\draw[sArrow] (everight.center) to node[pos=.25,auto,swap] {$x' \| y'$}
(sright);

\node[draw,fill=white] (key) at (.15,\u+.8) {key};
\draw[sArrow] (key) to (sleft);

\end{tikzpicture}

\end{centering}
\caption[Ideal authentication system]{\label{fig:auth.ideal}The ideal
  authentication system \--- Alice has access to the left interface,
  Bob to the right interface and Eve to the lower interface \---
  consists of the ideal authentication resource and a simulator
  $\sigma^{\auth}_E$.}
\end{figure}
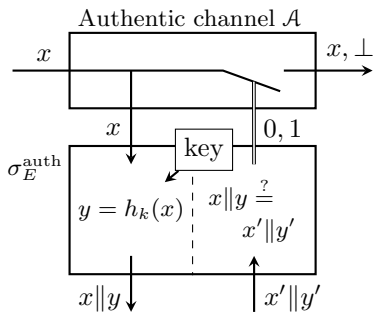


An authentication protocol is then $\eps$\=/secure if
Figures~\ref{fig:auth.real} and \ref{fig:auth.ideal} are
$\eps$\=/close, i.e.,
\begin{equation} \label{eq:security.auth} \pi^{\auth}_A\pi^{\auth}_B
  \left(\aK \| \aC \right) \close{\eps} \aA\sigma^{\auth}_E \
  .\end{equation} Portmann~\cite{Por14} showed that
\eqnref{eq:security.auth} is satisfied if the hash functions used are
$\eps$\=/almost strongly universal$_2$.\footnote{A familly of
  functions is said to be almost strongly universal$_2$ if any two
  different messages are almost uniformly mapped to all pairs of
  tags.}

\subsection{Parallel composition of authentication}
\label{app:ex.auth.auth}

In \appendixref{app:ex.auth.model} we modeled one run of an
authentication protocol, that constructs a one-time use authentic
channel. In general, QKD protocols require multiple rounds of
authenticated communication. This is achieved by running the same
protocol in parallel multiple times with new keys. It is
straightforward from \eqnsref{eq:axioms.order}, \eqref{eq:pm.tri},
\eqref{eq:axioms.nonincrease} and \eqref{eq:security.auth} that $\ell$
parallel repetitions of the authentication protocol are
$\ell\eps$\=/close to $\ell$ ideal authentic channels and simulators
in parallel,
\begin{align*}
  & \left(\pi^{\auth}_A\pi^{\auth}_B \| \dotsb \|
    \pi^{\auth}_A\pi^{\auth}_B \right) \left(\aK \| \aC \| \dotsb \|
    \aK \| \aC \right) \\ & \qquad \qquad =
  \left(\pi^{\auth}_A\pi^{\auth}_B \left(\aK \| \aC \right) \right) \|
  \dotsb \| \left(\pi^{\auth}_A\pi^{\auth}_B \left(\aK \| \aC
    \right) \right) \\
  & \qquad \qquad \close{\ell\eps} \left(\aA \sigma^{\auth}_E\right)
  \|
  \dotsb \| \left(\aA \sigma^{\auth}_E\right) = \left(\aA \| \dotsb \| \aA \right)
  \left(\sigma^{\auth}_E \| \dotsb \| \sigma^{\auth}_E \right) \ . \end{align*}

In the following, when we speak of the authentication used in QKD we
always refer to parallel repetitions of the protocol that construct
multiple use authenticated channels $\aA \| \dotsb \| \aA$. For
simplicity, we use the same notation for multiple authentic channels
as we have for single channels \--- we denote the resulting multiple
use authentic channel by $\aA$, as well as $\aC$ for the multiple use
insecure channel, $\aK$ for a key sufficiently long for authenticating
every message and $\eps^{\auth}$ for the accumulated failure of all
parallel repetitions in one round of QKD.

\subsection{Sequential composition of authentication and key distribution}
\label{app:ex.auth.qkd}

Let $\pi^{\auth}$ be an authentication protocol which constructs with
failure $\eps^{\auth}$ a (multiple use) authentic channel $\aA$ from a
short secret key of length $\ell$, $\aK^\ell$, and an insecure
classical channel $\aC$, \[ \aK^\ell \| \aC
\xrightarrow{\pi^{\auth},\eps^{\auth}} \aA \ .\] Let $\pi^{\qkd}$ be a
QKD protocol which constructs with failure $\eps^{\qkd}$ a long secret
key of length $m$, $\aK^m$, from an authentic channel $\aA$ and an
insecure quantum channel $\aQ$, \[ \aA \| \aQ
\xrightarrow{\pi^{\qkd},\eps^{\qkd}} \aK^m \ .\] By sequentially
composing the two protocols, we immediately have that
$\pi^{\qkd}\pi^{\auth}$ constructs a long secret key $\aK^m$ from a
short secret key $\aK^\ell$ and insecure channels $\aC$ and $\aQ$,
with failure $\eps^{\auth}+\eps^{\qkd}$, \[ \aK^\ell \| \aC \| \aQ
\xrightarrow{\pi^{\qkd}\pi^{\auth},\eps^{\auth}+\eps^{\qkd}} \aK^m \
.\] The generic argument for sequential composition is given
in \appendixref{app:generic.seq}. Here we illustrate it in the special
case of authentication and QKD, and draw it in \figref{fig:ex.auth}.

\begin{figure}[htbp]
\begin{subfigure}[b]{\textwidth}
\begin{centering}

\begin{tikzpicture}\small

\def\t{6.067} 
\def\e{-2.5}
\def\u{2.25}
\def\uu{4.545} 
\def\v{.75}
\def\w{.9}

\node[pnode] (a1) at (-\u,\v) {};
\node[pnode] (a2) at (-\u,0) {};
\node[pnode] (a3) at (-\u,-\v) {};
\node[protocol] (a) at (-\u,0) {};
\node[yshift=-2,above right] at (a.north west) {\footnotesize
  $\pi^{\auth}_A$};
\node (alice) at (-\t,-\v) {};

\node[pnode] (b1) at (\u,\v) {};
\node[pnode] (b2) at (\u,0) {};
\node[pnode] (b3) at (\u,-\v) {};
\node[protocol] (b) at (\u,0) {};
\node[yshift=-2,above right] at (b.north west) {\footnotesize $\pi^{\auth}_B$};
\node (bob) at (\t,-\v) {};

\node[thinResource,scale=.7] (keyBox) at (0,\v) {};
\node[lrnode,scale=.7] (key) at (0,\v) {}; 
\node[yshift=-2,above right] at (keyBox.north west) {\footnotesize
  Secret key $\aK^\ell$};

\node[pnode] (f1) at (-\uu,0) {};
\node[pnode] (f2) at (-\uu,-\v) {};
\node[pnode] (f3) at (-\uu,-2*\v) {};
\node[protocol,gray] (f) at (-\uu,-\v) {};
\node[gray,yshift=-2,above right] at (f.north west) {\footnotesize
  $\pi^{\qkd}_A$};

\node[pnode] (g1) at (\uu,0) {};
\node[pnode] (g2) at (\uu,-\v) {};
\node[pnode] (g3) at (\uu,-2*\v) {};
\node[protocol,gray] (g) at (\uu,-\v) {};
\node[gray,yshift=-2,above right] at (g.north west) {\footnotesize
  $\pi^{\qkd}_B$};

\node (eveq1) at (-2*\w,\e) {};
\node (juncq1) at (eveq1 |- f3) {};
\node (eveq2) at (-\w,\e) {};
\node (juncq2) at (eveq2 |- f3) {};
\node (evec1) at (0,\e) {};
\node (juncc1) at (evec1 |- a3) {};
\node (evec2) at (\w,\e) {};
\node (juncc2) at (evec2 |- a3) {};

\draw[sArrow] (key) to (a1);
\draw[sArrow] (key) to (b1);

\draw[sArrow,<->] (a2) to node[auto,pos=.5,swap] {\footnotesize $t,\bot$} (f1);
\draw[sArrow,<->] (b2) to node[auto,pos=.5] {\footnotesize $t',\bot$} (g1);

\draw[sArrow,<->] (a3) to (juncc1.center) to node[pos=.85,auto,swap] {$z$} (evec1.center);
\draw[sArrow,<->] (evec2.center) to node[pos=.19,auto] {$z'$} (juncc2.center) to (b3);

\draw[gray,sArrow] (f3) to (juncq1.center) to node[pos=.7,auto,swap] {$\rho$} (eveq1.center);
\draw[gray,sArrow] (eveq2.center) to node[pos=.34,auto] {$\rho'$} (juncq2.center) to (g3);

\draw[gray,sArrow] (f2) to node[auto,pos=.8,swap] {$k_A,\bot$} (alice.center);
\draw[gray,sArrow] (g2) to node[auto,pos=.8] {$k_B,\bot$} (bob.center);

\end{tikzpicture}

\end{centering}
\caption[Authentication \& QKD]{\label{fig:ex.auth.real}The
  composition of a QKD and authentication protocols. The insecure
  channels have not been depicted as boxes to simplify the figure.}
\end{subfigure}

\vspace{12pt}

\begin{subfigure}[b]{\textwidth}
\begin{centering}

\begin{tikzpicture}\small

\def\t{4.913} 
\def\u{-1.85} 
\def\w{-3.95} 
\def\v{0} 
\def\s{1.05}
\def\f{3.5}
\def\fh{-1.5}
\def\vv{-3}

\node[thinResource] (channel) at (0,\v) {};
\node[lrnode] (c0) at (0,\v) {};
\node[yshift=-1.5,above right] at (channel.north west) {\footnotesize
  Authentic channel $\aA$};
\node (alice) at (-\t,\fh) {};
\node (bob) at (\t,\fh) {};

\node[draw,thick,gray,minimum width=1.545cm,minimum height=4cm] (f) at (-\f,\fh) {};
\node[pnode] (f1) at (-\f,\v) {};
\node[pnode] (f2) at (-\f,\fh) {};
\node[pnode] (f3) at (-\f,\vv) {};
\node[gray,yshift=-2,above right] at (f.north west) {\footnotesize
  $\pi^{\qkd}_A$};

\node[draw,thick,gray,minimum width=1.545cm,minimum height=4cm] (g) at (\f,\fh) {};
\node[pnode] (g1) at (\f,\v) {};
\node[pnode] (g2) at (\f,\fh) {};
\node[pnode] (g3) at (\f,\vv) {};
\node[gray,yshift=-2,above right] at (g.north west) {\footnotesize
  $\pi^{\qkd}_B$};

\node[simulator,dashed] (sim) at (0,\u) {};
\node[xshift=-1.5,below right] at (sim.north east) {\footnotesize
  $\sigma^{\auth}_E$};
\node[snode] (sleft) at (-.809,\u) {};
\node[snode] (sright) at (.809,\u) {};
\node[snode] (srright) at (\s,\u) {};
\node[snode] (scenter) at (0,\u) {};
\node[tlrnode] (c1) at (sleft |- f1) {};
\node[tlrnode] (c2) at (sright |- g1) {};

\draw[sArrow,<->] (c0) to node[pos=.5,auto,swap] {$t,\bot$} (f1);
\draw[sArrow,<->] (c0) to node[pos=.5,auto] {$t,\bot$} (g1);
\draw[sArrow] (c1) to node[pos=.5,auto,swap] {$t$} (sleft);
\draw[double] (sright) to node[pos=.5,auto,swap] {$0,1$} (c2);

\node (eveq1) at (-2*\s,\w) {};
\node (juncq1) at (eveq1 |- f3) {};
\node (eveq2) at (-\s,\w) {};
\node (juncq2) at (eveq2 |- f3) {};
\node (evec1) at (0,\w) {};
\node (evec2) at (\s,\w) {};

\draw[dashed,sArrow,<->] (scenter) to node[pos=.8,auto,swap] {$z$} (evec1.center);
\draw[dashed,sArrow,<->] (srright) to node[pos=.78,auto,swap] {$z'$} (evec2.center);

\draw[gray,sArrow] (f2) to node[auto,pos=.8,swap] {$k_A,\bot$} (alice.center);
\draw[gray,sArrow] (g2) to node[auto,pos=.8] {$k_B,\bot$} (bob.center);
\draw[gray,sArrow] (f3) to (juncq1.center) to node[pos=.7,auto,swap] {$\rho$} (eveq1.center);
\draw[gray,sArrow] (eveq2.center) to node[pos=.34,auto] {$\rho'$} (juncq2.center) to (g3);

\end{tikzpicture}

\end{centering}
\caption[Ideal authentic channel \& QKD]{\label{fig:ex.auth.hybrid}A
  hybrid system consisting of a QKD protocol and two-way authentic
  channels with simulator.}
\end{subfigure}

\vspace{12pt}

\begin{subfigure}[b]{\textwidth}
\begin{centering}

\begin{tikzpicture}\small

\def\b{5.336cm} 
\def\t{3.418} 
\def\u{-1.3} 
\def\uu{-2.4} 
\def\s{1.05}
\def\w{-3.45} 

\node[draw,thick,minimum width=\b,minimum height=1cm] (channel) at (0,0) {};
\node[minimum width=\b-.5cm,minimum height=.2cm] (c0) at (0,0) {};
\node[yshift=-1.5,above right] at (channel.north west) {\footnotesize
  Secret key $\aK^m$};
\node (alice) at (-\t,0) {};
\node (bob) at (\t,0) {};

\node[draw,thick,minimum width=1.618*2cm,minimum height=.6cm] (simqkd) at (-\s,\u) {};
\node[xshift=1.5,left] at (simqkd.west) {\footnotesize
  $\sigma^{\qkd}_E$};
\node[inner sep=0,minimum height=.2cm,minimum width=.2cm] (pleft) at (-2*\s,\u) {};
\node[inner sep=0,minimum height=.2cm,minimum width=.2cm] (pcenter) at (-\s,\u) {};
\node[inner sep=0,minimum height=.2cm,minimum width=.2cm] (pright) at (0,\u) {};

\node[draw,thick,minimum width=1.618*2cm,minimum height=.6cm,dashed] (simauth) at (\s,\uu) {};
\node[xshift=-1.5,right] at (simauth.east) {\footnotesize
  $\sigma^{\auth}_E$};
\node[inner sep=0,minimum height=.2cm,minimum width=.2cm] (qleft) at (0,\uu) {};
\node[inner sep=0,minimum height=.2cm,minimum width=.2cm] (qcenter) at (\s,\uu) {};
\node[inner sep=0,minimum height=.2cm,minimum width=.2cm] (qright) at (2*\s,\uu) {};

\node[tlrnode] (ajunc) at (pcenter |- alice) {};
\node[tlrnode] (bjunc) at (qcenter |- bob) {};
\draw[sArrow] (c0) to node[pos=.8,auto,swap] {$k,\bot$} (alice);
\draw[sArrow] (c0) to node[pos=.8,auto] {$k,\bot$} (bob);
\draw[double] (pcenter) to node[pos=.45,auto,swap] {$0,1$} (ajunc);
\draw[double] (qcenter) to node[pos=.48,auto,swap] {$0,1$} (bjunc);

\node (peve1) at (-2*\s,\w) {};
\node (peve2) at (-\s,\w) {};
\node (qeve1) at (\s,\w) {};
\node (qeve2) at (2*\s,\w) {};

\draw[sArrow] (pleft) to node[pos=.84,auto,swap] {$\rho$} (peve1);
\draw[sArrow] (peve2) to node[pos=.18,auto] {$\rho'$} (pcenter);
\draw[sArrow] (pright) to node[pos=.5,auto] {$t$} (qleft);
\draw[sArrow,<->,dashed] (qcenter) to node[pos=.6,auto,swap] {$z$} (qeve1);
\draw[sArrow,<->,dashed] (qright) to node[pos=.55,auto,swap] {$z'$} (qeve2);

\end{tikzpicture}

\end{centering}
\caption[Ideal key]{\label{fig:ex.auth.ideal}The ideal secret key and
  corresponding composed simulator
  $\sigma^{\qkd}_E\sigma^{\auth}_E$. This ideal key resource has two
  switches preventing the key from being generated: one to capture an abort
  from the authentication protocol and one from the QKD protocol.}
\end{subfigure}

\caption[Example: serial composition of authentication and
QKD]{\label{fig:ex.auth}Steps in the security proof of the composition
  of QKD and an authentication protocol. If we remove the gray parts
  from Figures~\ref{fig:ex.auth.real} and \ref{fig:ex.auth.hybrid}, we
  recover the real and ideal systems of authentication. If we remove
  the dashed parts from Figures~\ref{fig:ex.auth.hybrid} and
  \ref{fig:ex.auth.ideal} we recover the real and ideal systems of
  QKD.}
\end{figure}
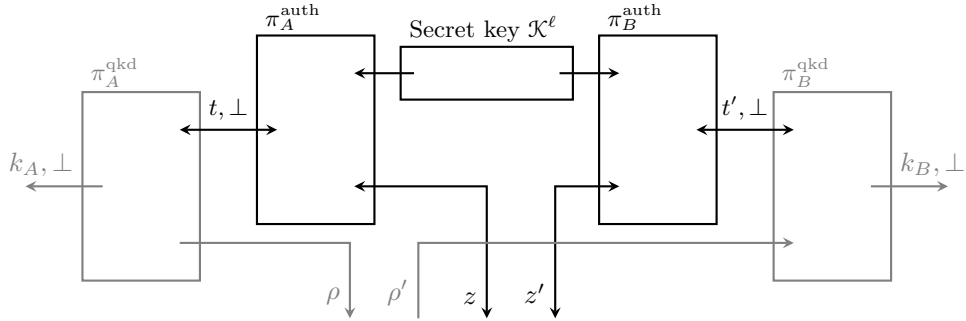
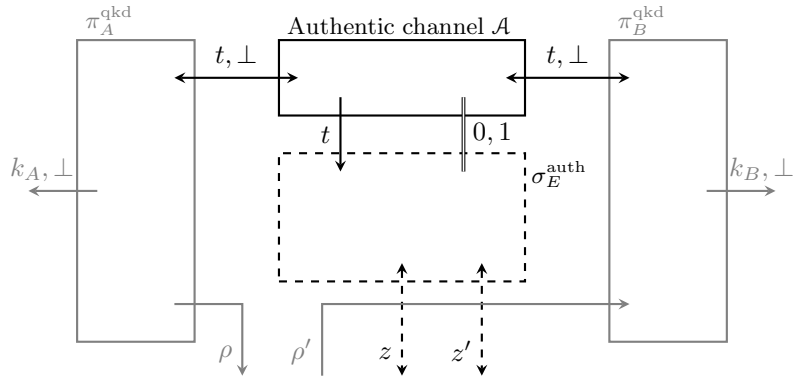
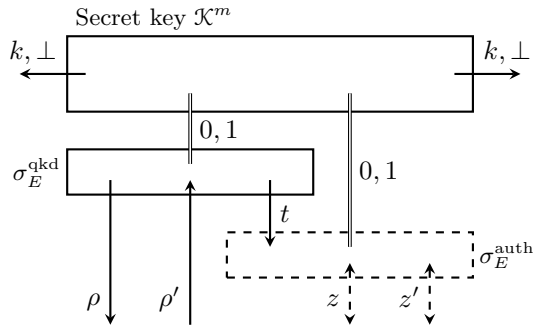

\figref{fig:ex.auth.real} depicts the real world: the two protocols
are composed in sequence and run using the short key and insecure
channel resources. In \figref{fig:ex.auth.hybrid} we have a system
consisting of the real QKD protocol and the ideal authentic channel
and simulator. We know that the black parts of
Figures~\ref{fig:ex.auth.real} and \ref{fig:ex.auth.hybrid} are
$\eps^{\auth}$\=/close, so adding the QKD protocol in gray can only
reduce the distance. \figref{fig:ex.auth.ideal} depicts the ideal
secret key resource and simulators. By removing the dashed simulator
$\sigma^{\auth}_E$ from Figures~\ref{fig:ex.auth.hybrid} and
\ref{fig:ex.auth.ideal}, we recover the real and ideal QKD systems
from Figures~\ref{fig:qkd.real.adv} and \ref{fig:qkd.ideal} \--- with
an extra switch on the authentic channel in the real system and on the
secret key resource in the ideal system. Since these are
$\eps^{\qkd}$\=/close, so are Figures~\ref{fig:ex.auth.hybrid} and
\ref{fig:ex.auth.ideal}. Putting the two statements together with the
triangle inequality finishes the argument.

This reasoning is summed up in the following equation, which can be
directly derived from \eqnsref{eq:axioms.order}, \eqref{eq:pm.tri} and
\eqref{eq:axioms.nonincrease}:
\begin{multline*} \left.\begin{aligned}
\pi^{\auth}_A \pi^{\auth}_B \left( \aK^\ell \| \aC \right) & \close{\eps^{\auth}}
\aA \sigma^{\auth}_E \\
\pi^{\qkd}_A \pi^{\qkd}_B \left( \aA \| \aQ \right) & \close{\eps^{\qkd}} \aK^m \sigma^{\qkd}_E
\end{aligned}\right\}
\implies \\ \pi^{\qkd}_A \pi^{\qkd}_B\pi^{\auth}_A \pi^{\auth}_B
\left( \aK^\ell \| \aC \| \aQ \right) \close{(\eps^{\auth}+\eps^{\qkd})}
\aK^m \sigma^{\qkd}_E\sigma^{\auth}_E \ . \end{multline*}

Let $\rho_{ABE}$ be the state gathered by a distinguisher interacting
with the real system from \figref{fig:ex.auth.real} and
$\tilde{\rho}_{ABE}$ be the state gathered by the distinguisher
interacting with the ideal system from \figref{fig:ex.auth.ideal}. By
the argument above we have
\[D\left( \rho_{ABE},\tilde{\rho}_{ABE}\right) \leq \eps^{\auth} +
\eps^{\qkd} \ .\]

\subsection{Iterated key expansion}
\label{app:ex.auth.exp}

In \appendixref{app:ex.auth.qkd} we show that the composition of QKD
and authentication \--- i.e., key expansion \--- constructs a long key
from a short key and insecure channels. To show that this can be done
recursively, we need to argue that part of the long key can be kept
for the next round of key expansion. So far the secret keys have been
treated as blocks, entirely consumed by a protocol, which is not
convenient for the analysis of a protocol that uses only part of a
key. Instead, we should think of these key resources \--- e.g.,
$\aK^\ell$ and $\aK^m$ in Figures~\ref{fig:ex.auth.real} and
\ref{fig:ex.auth.ideal} \--- as a parallel composition of resources
that produce a single bit of key, i.e., $\aK^\ell = \aK^1_1 \| \dotsb
\| \aK^1_\ell$, where $\aK^1_i$ is the $\ith{i}$ instance of a
resource $\aK^1$ that produces one bit of key (or an error message) at
Alice and Bob's interfaces, and has a switch at Eve's interface that
decides if it produces the key or error.

Then, a proof that
\[ \aK^\ell \| \aC \| \aQ
\xrightarrow{\pi^{\qkd}\pi^{\auth},\eps^{\auth}+\eps^{\qkd}} \aK^m \
,\]
is immediately also a proof that
\[ \aK^{\ell'} \| \aC \| \aQ
\xrightarrow{\pi^{\qkd}\pi^{\auth},\eps^{\auth}+\eps^{\qkd}}
\aK^{m+\ell'-\ell} \ ,\] for any $\ell' \geq \ell$.
Iterating the protocol $n$ times we get
\[ \aK^{\ell} \| \aC^n \| \aQ^n
\xrightarrow{\left(\pi^{\qkd}\pi^{\auth}\right)^n,n(\eps^{\auth}+\eps^{\qkd})}
\aK^{nm-(n-1)\ell} \ ,\] where $\aC^n$ and $\aQ^n$ are $n$ instances
of the resources $\aC$ and $\aQ$ in parallel, and
$\left(\pi^{\qkd}\pi^{\auth}\right)^n$ is $n$ times the sequential
composition of $\pi^{\qkd}\pi^{\auth}$. \eqnref{eq:auth.recursive}
follows immediately from this.

\section*{Acknowledgements}
\addcontentsline{toc}{section}{Acknowledgements}

We are greatly indebted to the following people for having proofread
an initial draft and provided us with invaluable feedback and
comments: Rotem Arnon Friedman, Normand Beaudry, Vedran Dunjko, Felipe
Lacerda, Charles Ci Wen Lim, Christoph Pacher, Joseph Renes, Marco
Tomamichel, and Nino Walenta.

This work has been funded by the Swiss National Science Foundation
(via grant No.~200020-135048 and the National Centre of Competence in
Research `Quantum Science and Technology'), the European Research
Council \--- ERC (grant No.~258932) \--- and by the Vienna Science and
Technology Fund (WWTF) through project ICT10-067 (HiPANQ).



\newcommand{\etalchar}[1]{$^{#1}$}
\providecommand{\bibhead}[1]{}
\expandafter\ifx\csname pdfbookmark\endcsname\relax%
  \providecommand{\tocrefpdfbookmark}{}
\else\providecommand{\tocrefpdfbookmark}{%
   \phantomsection%
   \addcontentsline{toc}{section}{\refname}}%
\fi

\tocrefpdfbookmark

\end{document}